\definecolor{blue1}{RGB}{210, 240, 255}
\definecolor{blue2}{RGB}{160, 220, 255}
\definecolor{blue3}{RGB}{130, 200, 255}
\definecolor{red1}{RGB}{255, 215, 215}
\definecolor{red2}{RGB}{255, 200, 200}
\definecolor{red3}{RGB}{255, 175, 175}
\newcommand{\cbOne}{\cellcolor{blue1}}
\newcommand{\cbTwo}{\cellcolor{blue2}}
\newcommand{\cbThree}{\cellcolor{blue3}}
\newcommand{\crOne}{\cellcolor{red1}}
\newcommand{\crTwo}{\cellcolor{red2}}
\newcommand{\crThree}{\cellcolor{red3}}
    \def\CT@@do@color{%
      \global\let\CT@do@color\relax
            \@tempdima\wd\z@
            \advance\@tempdima\@tempdimb
            \advance\@tempdima\@tempdimc
    \advance\@tempdimb\tabcolsep
    \advance\@tempdimc\tabcolsep
    \advance\@tempdima2\tabcolsep
            \kern-\@tempdimb
            \leaders\vrule
                    \hskip\@tempdima\@plus  1fill
            \kern-\@tempdimc
            \hskip-\wd\z@ \@plus -1fill }
\newtheorem{thm}{Theorem}[section]
\newtheorem{prop}{Proposition}[section]
\newtheorem{conj}{Conjecture}[section]
\theoremstyle{definition}
\def\beq{\begin{equation}}
\def\eeq{\end{equation}}
\def\beqa{\begin{eqnarray}}
\def\eeqa{\end{eqnarray}}
\newcommand{\bb}[1]{{\mathbb #1}}
\newcommand{\diag}[9]
{
\resizebox{3cm}{!}{
\framebox{
\begin{tikzpicture}
\tikzmath{\w1 = 0.8; \w2 =0.8; \sc=2;} 


\node[draw] at (2.5/\sc,2.5/\sc) {$ #1 $}; 
\node[] at (0.5/\sc,1.5/\sc) {$ #2 $}; 
\node[] at (1.5/\sc,1.5/\sc) {$ #3 $}; 
\node[] at (3.5/\sc,1.5/\sc) {$ #4 $}; 
\node[] at (4.5/\sc,1.5/\sc) {$ #5 $}; 
\node[] at (2.5/\sc,0.5/\sc) {$ #6 $}; 
\node[draw] at (1/\sc,0) {$ #7 $}; 
\node[draw] at (4/\sc,0) {$ #8 $}; 
\node[] at (2.5/\sc,-0.5/\sc) {$ #9 $}; 

\end{tikzpicture}
}
}
}
\begin{document}


\title{Solving correlation clustering with QAOA and a Rydberg qudit system: a full-stack approach}

\author{Jordi R. Weggemans}
\affiliation{CWI, Science Park 123, 1098 XG Amsterdam, The Netherlands}
\affiliation{QuSoft, Science Park 123, 1098 XG Amsterdam, The Netherlands}

\author{Alexander Urech}
\affiliation{Van der Waals-Zeeman Institute, Institute of Physics, University of Amsterdam, Science Park 904, 1098 XH Amsterdam, the Netherlands}
\affiliation{QuSoft, Science Park 123, 1098 XG Amsterdam, The Netherlands}

\author{Alexander Rausch}
\affiliation{Robert Bosch GmbH, Corporate Research, Robert-Bosch-Campus 1, 71272 Renningen, Germany}

\author{Robert Spreeuw}
\affiliation{Van der Waals-Zeeman Institute, Institute of Physics, University of Amsterdam, Science Park 904, 1098 XH Amsterdam, the Netherlands}
\affiliation{QuSoft, Science Park 123, 1098 XG Amsterdam, The Netherlands}

\author{Richard Boucherie}
\affiliation{Stochastic Operations Research, Department of Applied Mathematics,
University of Twente, 7500 AE, Enschede, The Netherlands.}

\author{Florian Schreck}
\affiliation{Van der Waals-Zeeman Institute, Institute of Physics, University of Amsterdam, Science Park 904, 1098 XH Amsterdam, the Netherlands}
\affiliation{QuSoft, Science Park 123, 1098 XG Amsterdam, The Netherlands}

\author{Kareljan Schoutens}
\affiliation{Institute for Theoretical Physics, University of Amsterdam, Science Park 904, 1098 XH Amsterdam, the Netherlands}
\affiliation{QuSoft, Science Park 123, 1098 XG Amsterdam, The Netherlands}

\author{Ji\v{r}\'{i} Min\'{a}\v{r}}
\affiliation{Institute for Theoretical Physics, University of Amsterdam, Science Park 904, 1098 XH Amsterdam, the Netherlands}
\affiliation{QuSoft, Science Park 123, 1098 XG Amsterdam, The Netherlands}

\author{Florian Speelman}
\affiliation{Informatics Institute, University of Amsterdam, Science Park 904, 1098 XH Amsterdam, the Netherlands}
\affiliation{QuSoft, Science Park 123, 1098 XG Amsterdam, The Netherlands}

\begin{abstract}


We study the correlation clustering problem using the quantum approximate optimization algorithm (QAOA) and qudits, which constitute a natural platform for such non-binary problems. Specifically, we consider a neutral atom quantum computer and propose a full stack approach for correlation clustering,
including Hamiltonian formulation of the algorithm, analysis of its performance, identification of a suitable level structure for ${}^{87}{\rm Sr}$ and specific gate design. 
We show the qudit implementation is superior to the qubit encoding as quantified by the gate count.
For single layer QAOA, we also prove (conjecture) a lower bound of $0.6367$ ($0.6699$) for the approximation ratio on 3-regular graphs. 
Our numerical studies evaluate the algorithm's performance by considering complete and Erd\H{o}s-R\'enyi graphs of up to 7 vertices and clusters.
We find that in all cases the QAOA surpasses the Swamy bound $0.7666$ for the approximation ratio for QAOA depths $p \geq 2$. 
Finally, by analysing the effect of errors when solving complete graphs we find that their inclusion severely limits the algorithm's performance.

%
\end{abstract}

\maketitle

\section{\label{sec:Int} Introduction}
The Quantum Approximate Optimization Algorithm (QAOA) is a promising attempt at trying to find a quantum advantage when using near-term Noisy Intermediate-Scale Quantum (NISQ) devices~\cite{Farhi2014}. The current body of literature points into mixed directions as far as the utility of QAOA is concerned: whilst it has provable advantages such as recovering near optimal query complexity in Grover's search~\cite{Jiang2017}, exhibiting universality~\cite{Lloyd2018,Morales2020} and the possibility for quantum supremacy~\cite{Farhi2016}, there are also known limitations in the low depth regime~\cite{Akshay2020,Hastings2019,Bravyi2020_2,Marwaha2021}. However, current analytical tool for analysing the performance of QAOA have only been able to investigate very specific problem instances, predominantly at low depth~\cite{Farhi2019,Wauters2020,Claes2021,Wurtz2020,wang2018quantum}. A general analytical approach remains to be found, which is why a large portion of the literature resorts to numerical simulation. 

Earlier work on QAOA studied problems such as MAXCUT~\cite{Farhi2014,wang2018quantum,guerreschi2019qaoa} and MAX E3LIN2~\cite{farhi2014quantum}, where the translation procedure entails translating constraints on a binary string into a cost Hamiltonian.  In the current work we focus on the \emph{correlation clustering} problem. First introduced by Bansal et al.~\cite{Bansal2004}, reductions to correlation clustering are common in machine-learning contexts, with applications ranging from data analysis~\cite{benjelloun2009swoosh} to image recognition and pose estimation~\cite{InsafPAAS16}.
Since it is a computationally intensive task to find these clusterings, with approximations being APX-hard~\cite{Charikar2005}, it is natural to ask whether quantum algorithms can do better.

For much of the earlier work, the QAOA blueprint entailed working with qubits: the +1/-1 eigenstates of the Pauli Z-operator are chosen in direct correspondence with the variables of the underlying combinatorial optimization problem to compute the cost function, and the mixer Hamiltonian is chosen as an independent Pauli X applied to each qubit. For instance, the MAXCUT problem divides a given graph into two sets, which makes the two-qubit ZZ-operator suitable for encoding the objective function.
Our situation is quite different as correlation clustering naturally encompasses many more different clusters.
While it is always possible in principle to encode higher-dimensional information into multiple qubits, this does significantly complicate the interactions required.

In this paper we investigate a different route, namely encoding the cluster choice into a qudit. Qudits can be realized in a number of physical platforms including photons~\cite{Bent_2015_PRX, kues2017chip, Babazadeh_2017_PRL, islam2017provably, Yoshikawa_2018_PRA, Lu_2018_PRL, imany2018,wang2018multidimensional,reimer2019high}, ions~\cite{Senko_2015_PRX, Randall_2015_PRA, Leupold_2018_PRL}, molecules~\cite{hussain2018coherent, sawant2020ultracold, Albert_2020_PRX}, superconducting circuits~\cite{neeley2009emulation, svetitsky2014hidden, Tan_2021_PRL}, nuclear magnetic resonance platforms~\cite{dogra2014determining, gedik2015computational} or NV centres~\cite{moro2019realization, soltamov2019excitation} and can offer a more resource efficient approach to quantum computing as compared to qubits~\cite{wang2020qudits}. Importantly, for the present problem of correlation clustering, they offer a support with the Hilbert space which is \emph{native} to the studied problem, namely the different qudit states can encode the labelling of nodes into their respective cluster. 

In this respect, quantum computers based on trapped neutral atoms interacting via highly excited Rydberg states are of particular interest~\cite{Saffman_2010_RMP,saffman2016quantum}. These platforms underwent striking developments in recent years with advances such as the creation of arbitrary quantum processor geometries~\cite{Barredo2016,Endres_2016_Science,Barredo_2018_Nature} containing hundreds of individual atoms~\cite{scholl2020programmable,semeghini2021probing} or the efficient implementation of quantum gates~\cite{Levine_2019_PRL,Madjarov_2020_NatPhys}. This in principle allows for efficient implementation of QAOA algorithms~\cite{Zhou2020,Dalyac2021} and led to intense efforts, both academic and industrial~\cite{Pasqal,AtomComputing,ColdQuanta,QuEra}, in neutral atom based quantum computing~\cite{Henriet2020,Morgado2020}. 

Even though QAOA was invented as a NISQ-suitable algorithm, it can be far from trivial to connect the abstract Hamiltonian formulation to a physical system.
Problems include how to relate the intended Hamiltonians to manipulations on the system, how to manage the spatial structure and non-uniformity of the pairwise interactions of a system in a lattice, and careful analysis of the types of errors that might occur in these operations.

It has also become progressively clear that an efficient quantum algorithm has to be designed in a way which implicitly takes into account the relevant hardware constraints, a so-called full-stack approach~\cite{Alexeev2019}. In this work we describe precisely such a full-stack solution starting from a combinatorial optimization problem (the correlation clustering problem), through a QAOA Hamiltonian formulation, all the way to describing how to control and analyse a Rydberg qudit system. This entails:
\begin{itemize}
\item We encode the correlation clustering problem into the QAOA paradigm, specifically tailored for qudit quantum systems. We give several improvements to the vanilla QAOA (see Sec.~\ref{sec:Recap} for the definition), guided by simulations, including experiments with various meta-optimization strategies.
\item Next, we go from the abstract Hamiltonian formulation to the operations available for an actual abstract Rydberg system.
This entails showing how to drive the Rydberg system in a way that corresponds to the cost and mixer Hamiltonians, where we also have to take care of the spatial aspect of the interactions.
Here, we focus on the example of fermionic strontium $^{87}{\rm Sr}$, but a similar derivation would apply to any related system.
\item We analyse the algorithm performance in the presence of noise corresponding to the 'random Pauli' method and link this error model to actual errors in Rydberg qudit systems.
\item As an extra theoretical result, we apply the techniques of Wurtz and Love~\cite{Wurtz2020} to the case of correlation clustering on 3-regular graphs, showing that for a slightly modified QAOA at $p=1$ the approximation ratio is at least $0.6367$. From numerical results obtained through local optimization, we conjecture that a tighter bound of this approximation ratio would be $0.6699$.
\end{itemize} 

\paragraph*{Related work}
Independently of the current work, other studies have considered applying QAOA to multi-cut versions of MAX-$k$-CUT~\cite{fuchs2021efficient}, and on MAX k-VERTEX COVER~\cite{cook2020quantum,bartschi2020grover}, which are problems with related properties.
A key difference between correlation clustering and the other studied problems is the explicit presence of positive / negative edges in correlation clustering, and the idea that for correlation clustering the number of clusters is not determined yet as part of the input.
Additionally, we optimize the creation of the QAOA formulation, ending up with a native qudit implementation, as well as taking a full stack approach: we study all the steps from the problem towards the implementation on a realistic near-term quantum device.

~\\
The paper is organized as follows. In Sec.~\ref{sec:Recap} we briefly recap the QAOA. In Sec.~\ref{sec:Int} we introduce the problem of correlation clustering and its implementation as QAOA. We then describe various strategies to improve the algorithm's performance
in Sec.~\ref{sec:IS}, which we study in Sec.~\ref{sec:P}. Next, in Sec.~\ref{sec:RQC_BuildingBlocks} we discuss the experimental building blocks of the qudit Rydberg quantum computer. We proceed with the processor design and the associated gate count and comparison to qubits in Sec.~\ref{sec:PD}. Finally, we discuss how the errors affect the algorithm in Sec.~\ref{sec:Errors}, and we conclude and discuss open questions in Sec.~\ref{sec:Conclusions}




\section{\label{sec:Recap} Recap of QAOA}
In this section we briefly review the quantum approximate optimization algorithm (QAOA)~\cite{Farhi2014}. Consider some combinatorial optimization problem with objective function $C:x\rightarrow \mathbb{R}$ acting on $n$-bit strings $x\in \{0,1\}^n$, domain $\mathcal{D} \subseteq \{0,1\}^n$, and objective
\begin{align}
     \max_{x \in \mathcal{D}} C(x).
\end{align}
In maximization, an approximate optimization algorithm aims to find a string $x'$ that achieves a desired approximation ratio $\alpha$, i.e.
\begin{equation}
    \frac{C(x')}{C^*}\geq \alpha,
\end{equation}
where $C^* = \max_{x \in \mathcal{D}} C(x)$.
In QAOA, such combinatorial optimization problems are encoded into a cost Hamiltonian $H_C$, a mixing Hamiltonian $H_M$ and some initial quantum state $\ket{\psi_0}$. The cost Hamiltonian is diagonal in the computational basis by design, and represents $C$ if its eigenvalues satisfy
\begin{align}
    H_C \ket{x} = C(x) \ket{x} \text{ for all } x \in  \{0,1\}^n.
\end{align}
The mixing Hamiltonian $H_M$ depends on $\mathcal{D}$ and its structure~\cite{Hadfield2019}, and is in the unconstrained case (i.e. when $\mathcal{D}=\{0,1\}^n$) usually taken to be the transverse field Hamiltonian $H_M = \sum_{j} X_j$. Constraints (i.e. when $\mathcal{D}\subset \{0,1\}^n$) can be incorporated directly into the mixing Hamiltonian or are added as a penalty function in the cost Hamiltonian. The initial quantum state $\ket{\psi_0}$ is usually taken as the uniform superposition over all possible states in the domain. $\text{QAOA}_p$, parametrized in $\gamma=(\gamma_0,\gamma_1,\dots,\gamma_{p-1}),\beta=(\beta_0,\beta_1,\dots,\beta_{p-1})$, refers to a level-$p$ QAOA circuit that applies $p$ steps of alternating time evolutions of the cost and mixing Hamiltonians on the initial state. At step $k$, the unitaries of the time evolutions are given by
\begin{align}
    U_C(\gamma_k) = e^{-i \gamma_k H_C }, \label{eq:UC} \\
    U_M(\beta_k) = e^{-i \beta_k H_M }. \label{eq:UM}
\end{align}
So the final state $\ket{\gamma,\beta}$ of $\text{QAOA}_p$ is given by 
\begin{align}
    \ket{\gamma,\beta} = \prod_{k=0}^{p-1} U_M(\beta_k) U_C(\gamma_k) \ket{\psi_0}.
\end{align}
The expectation value $ F_p(\gamma,\beta)$ of the cost Hamiltonian for state $\ket{\gamma,\beta}$ is given by
\begin{align}
    F_p(\gamma,\beta) = 
    \bra{\gamma,\beta}H_C\ket{\gamma,\beta},
    \label{eq:Fp}
\end{align} and can be statistically estimated by taking samples of $\ket{\gamma,\beta}$. The achieved approximation ratio (in expectation) of $\text{QAOA}_p$ is then
\begin{equation}
    \alpha = \frac{F_p(\gamma,\beta)}{C^*}.
\end{equation}
The parameter combinations of $\gamma,\beta$ are usually found through a classical optimization procedure that uses \eqref{eq:Fp} as a black-box function to be maximized. The QAOA framework as has been described so far, with randomized initial points and without any other improvement strategies, will be referred to as the \textit{vanilla QAOA} in the rest of this work.


\section{\label{sec:Int} Correlation clustering}
Generally, the objective of clustering problems is to group elements into a family of subsets, named clusters, such that the elements within a cluster are more similar to one another than elements in different clusters.
In case of the correlation clustering problem, we would like to cluster without specifying the number of clusters in advance based only on pairwise relations.
The problem was introduced by Bansal et al.~\cite{Bansal2004} to the theoretical computer science community and has applications amongst others in social psychology, statistical mechanics and biological networks. 

Instances of correlation clustering problems are commonly represented as a graph problem, where the nodes are the elements to be grouped in clusters and edge weights represent similarities between these elements. Correlation clustering is then formally defined in the following way: let $G(V,E)$ be an undirected graph, where $V,E$, denotes the sets of nodes and edges, respectively. Let $N$ be the total amount of nodes, i.e. $N = |V|$. Every edge $(u,v)\in E$ is labelled either `+' or `--', depending on whether the elements are similar or dissimilar, respectively. This is the unweighted variant of correlation clustering. Additionally, one can also consider the weighted variant: edges $(u,v)$ carry weights $w_{(u,v)}\in \mathbb{R}^+$ describing an additional measurement of similarity or dissimilarity. There are two complementary problems to correlation clustering. MAXAGREE aims to maximize the number of agreements, defined as the number of `+' edges inside clusters plus the number of `--' edges across clusters. In MINDISAGREE one wants to minimize the number of disagreements: the number of `+' edges across different clusters plus the number of `--' edges inside clusters. The decision versions of MAXAGREE and MINDISAGREE are identical and known to be NP-complete~\cite{Bansal2004}. However, they differ in the approximation setting. MAXAGREE on general graphs is APX-hard: to be precise, it has been shown that for every $\epsilon > 0$, it is NP-hard to approximate both the weighted~\cite{Charikar2005} and unweighted~\cite{Tan2007} versions of MAXAGREE within a factor of $79/80 + \epsilon$, and the best classical algorithm has approximation ratio $\alpha =0.7666$~\cite{Swamy2004} via semi-definite programming (SDP) with rounding techniques. In the remainder of the text this specific value of $\alpha$ will be refered to as the \textit{Swamy bound}, named after the author of Ref~\cite{Swamy2004}. If the graph is complete  and unweighted the problem becomes significantly easier although still NP-hard. For  complete graphs, Bansal et al.~provided a polynomial time approximation scheme (PTAS)~\cite{Bansal2004}. MINDISAGREE is APX-hard on \textit{both} complete~\cite{Bansal2004} and general graphs~\cite{Charikar2005}, and the best approximation algorithm achieves only an approximation logarithmic in the input size~\cite{Emanuel2003}. When the correlation clustering problem is restricted to a fixed number of clusters $k$, it also remains NP-hard when $k\geq2$, albeit with existing PTASs for both the maximization as well as the minimization variant \cite{Giotis2006}.

Other algorithmic methods that have been proposed to solve correlation clustering include integer linear programming (ILP)  and heuristics: examples are greedy methods~\cite{Soon2001,Ng2002,Ailon2008}, local search methods~\cite{Gionis2007,Goder2008}, and large move making algorithms~\cite{Bagon2011}. In practice, the ILP-approach can solve to about 200 nodes due to a scaling of $O(N^3)$ in the amount of constraints in its formulation. The SDP relaxation method has better scaling in its constraints ($O(N^2)$), with the best SDP solvers handling up to about several thousands of nodes ~\cite{Elsner2009}. Heuristics have been shown to be able to handle problems consisting over 100k nodes~\cite{Bagon2011}. In this work, we will focus on unweighted correlation clustering in complete and Erd\H{o}s–R\'enyi graphs. 

A specific real-world application of the correlation clustering problem can be found in the sub-task of distinguishing between persons within multi-person pose estimation. For example, decomposing different persons is necessary in human-robot collaborations where humans fulfil tasks collaboratively with robots and a robot has to differentiate between the position of different human collaborators.
In the approach presented in~\cite{InsafPAAS16}, a Convolutional Neural Network (CNN) computes candidates for the location of different body parts of different persons within a given picture and also pairwise terms how these candidates relate to each other.
Leaving out the specific labelling of body parts of each person, the detected body part candidates from the CNN can be represented as nodes and the computed pairwise terms as edge weights of a graph and thus solving the correlation clustering problem of this graph corresponds to a decomposition of the persons in a picture (cf.\ illustration in Fig.~\ref{fig:example_cc}).
From industrial practice we know that the objective function within the correlation clustering problem can be very sensitive to the assignments of parameters and sometimes even a sufficiently good enough local minimum cannot be found within reasonable time by using classical algorithms.
\begin{figure}
    \centering
    \includegraphics[width = \linewidth]{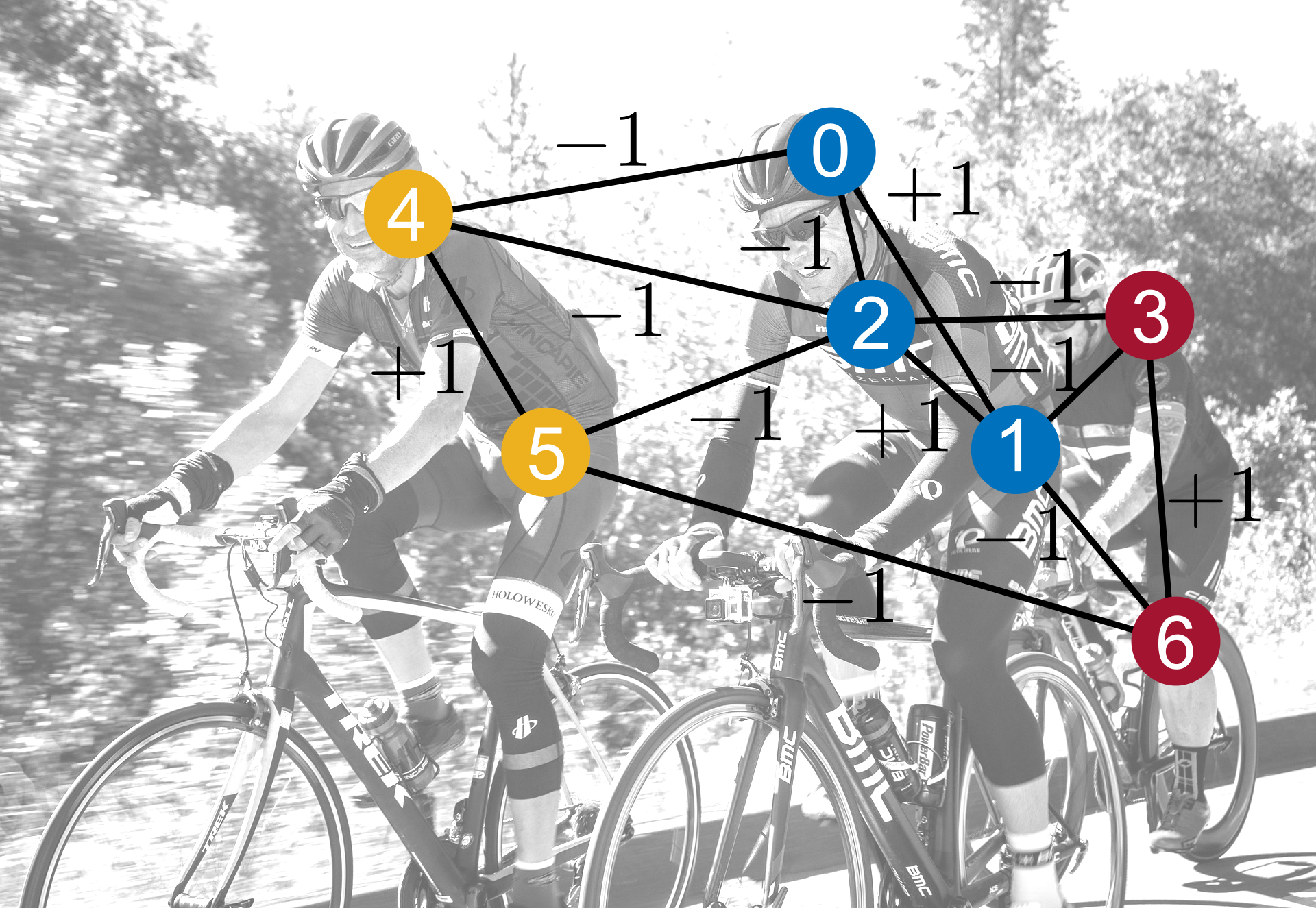}
   \caption{Overlay of a correlation clustering instance on a real-world picture to illustrate the occurrence of correlation clustering within the problem of decomposing persons in images. The graph contains $N=7$ nodes with weights $w_{(u,v)} = \pm 1$ while the colours indicate an optimal clustering with objective function value of 10 in the MAXAGREE setting. Note how in general the graph does not need to have a perfect (unfrustrated) clustering, i.e. where all allocated clusters match the weights, as is in this example not the case for nodes `0' and `2'.}
    \label{fig:example_cc}
\end{figure}

\subsection{\label{sec:HF} Hamiltonian formulation for qudit systems}
Let $G=(V,E)$ be a graph with $N=|V|$ nodes that specifies the input to some correlation clustering problem instance. Every edge $(u,v)\in E$ has a weight $w_{(u,v)} \in \{-1,+1\}$, representing the `+' and `--' relationships in unweighted correlation clustering. We assume that we have access to a qudit quantum system of $N$ qudits consisting of $d$ levels, such that every node is described by a single qudit state $\ket{c_u}$ meaning that node $u$ is put in (a superposition of) cluster(s) $c$
\footnote{
Here, we would like to stress that even though $d$ might be limited, this will not be too detrimental to the QAOA’s performance in practice, even when the optimal solution would require more than $d$ clusters. 
This was already shown for the Swamy algorithm, which always produces a solution that uses at most 6 clusters and is still able to achieve a high expected approximation ratio \cite{Swamy2004}. Consider for example the all-negative weights graph, which would be an instance that requires the maximal amount of clusters to be used, which equals the number of the graph vertices, $d=N$. In this case, one would still be able to satisfy at least $d-1$ out of $d$ edges and thus is still able to achieve an approximation ratio larger than $(d-1)/d$. For the experimental example discussed in Sec.~\ref{sec:RQC_BuildingBlocks}, we have $d=10$ and thus the minimum approximation ratio would be $9/10=0.9$.
}.
We write $[d]$ for the set $\{0,\dots,d-1\}$. We define a two-body interaction $V^d_{(u,v)}$ that acts on a two qudit sub-space according to 
\begin{align}
   V^d_{(u,v)}= \sum_{i \neq j \in [d]} \ket{i_u}\ket{j_v}\bra{i_u}\bra{j_v} - \sum_{i \in [d]} \ket{i_u}\ket{i_v} \bra{i_u}\bra{i_v},
    \label{eq:defV}
\end{align}
which is a $d^2\cross d^2$ matrix with eigenvalues of -1 (+1) for nodes that are put in the same (different) clusters. Our full cost Hamiltonian is obtained by summing over all edges taking the product of the edge weight and~\eqref{eq:defV},
\begin{align}
    H_C = \sum_{(u,v)\in E} w_{(u,v)} V^d_{(u,v)}.
    \label{eq:HC}
\end{align}
 To follow the convention in physics, we aim to minimize this cost Hamiltonian -- note that this is equivalent to maximizing the classical cost function. We comment that this encoding is similar to the binary encoding for MAX-$k$-CUT recently proposed in Refs.~\cite{fuchs2021efficient,Bravyi2020}.
 A Hamiltonian that can mix over the single qudit subspace was given in the work by Hadfield et al.~\cite{Hadfield2019}, where the following single-qudit mixing Hamiltonian is proposed
\begin{equation}
    h_M (r) = \sum_{i=1}^{r} \left((\Sigma^x)^i+(\Sigma^{x\dagger})^i   \right),
    \label{eq:ring_mixer}
\end{equation}
where $r\in\{1,\dots,d-1\}$, a parameter determining the connectivity of the mixer and $\Sigma^x$ is the generalized Pauli X-operator, given by
\begin{equation}
\begin{gathered}
\Sigma^x  = 
\begin{pmatrix}
		0 &  & & 1\\
		1 & &  \\
		& \ddots & &  \\
		& & 1 & 0
		\end{pmatrix}.
\end{gathered}
\label{eq:Sigma_x}
\end{equation}
One observes that for $r=1$, the single-qudit mixer is equal to 
\begin{equation}
\begin{gathered}
h_M (r=1) = 
\begin{pmatrix}
		0 & 1 & & & 1\\
		1 & &\ddots&   \\
		& \ddots &  &  \ddots &  \\
		 & & \ddots  &  & 1\\
		1 & & & 1 & 0 
		\end{pmatrix}
\end{gathered}
\label{eq:mixer_matrix}
\end{equation}
such that every level is connected to its nearest neighbours, including periodic boundary conditions. The full mixing Hamiltonian is then
\begin{align}
    H_M = \sum_{u \in V} h_M.
    \label{eq:qudit_mixer}
\end{align}
We can pick any value of $r\in \{1,\dots,d-1\}$, where the special cases at the boundary are called the single-qudit ring mixer for $r=1$ and the fully-connected mixer for $r = d -1$. We take the superposition of all qudit computational basis states as our initial state, i.e.
\begin{align}
    \ket{\psi_0} =  \frac{1}{\sqrt{d^{N}}} \sum_{z\in [d]^{N}} \ket{z}.
    \label{eq:ML_is}
\end{align}\\
Note how the cost Hamiltonian formulation~\eqref{eq:HC} is not equivalent to correlation clustering in the MAXAGREE setting: instead of counting just the agreements we count the number of agreements \textit{minus} the disagreements. However, since every clustering of two nodes (for which an edge exists), needs to be in either agreement or disagreement with the corresponding weight, the sum of the agreements and disagreements is equal to the number of edges in the unweighted setting. Therefore, for a correlation clustering problem with optimum value $C^*$ in the MAXAGREE setting the approximation ratio of this QAOA formulation is equal to
\begin{equation}
     \frac{F_p(\gamma,\beta)+|E|}{2 C^*}.
\end{equation}
For the numerical simulations in the algorithmic sections of this work, we implement the initial state by generalized Hadamard operations and assume that the cost and mixing unitaries are elementary operations native to our system. We adopt the Cirq framework~\cite{Cirq2021}, since it supports qudit systems, with custom gate operations to match our established formulation. 


\section{\label{sec:IS} Improvement strategies}
We will be interested in performance at low-depth and hence we only consider $r=1$ in Eq.\eqref{eq:ring_mixer} for the simulations in sections~\ref{sec:IS} and~\ref{sec:P}. Through numerical evaluation of the vanilla QAOA, we found that the following strategies considerably improved the QAOA's performance for our problem:\\
\paragraph*{Choice of the classical optimizer}
There is no such thing as a one-size-fits-all classical optimizer that performs well for all QAOA problems: performance varies amongst different problems and can greatly differ for different hyper-parameter settings~\cite{Lavrijsen2020}. We decided to compare different classical optimizers, found in the scikit-quant~\cite{Lavrijsen2020} and SciPy Python packages~\cite{Scipy}, using their off-the-shelf hyper-parameter settings. In a small study, the best performance was obtained by using BOBYQA. The results of the full comparison can be found in Appendix~\ref{app:optstudy}.\\
\paragraph*{Restarts.}
Local optimizers can greatly benefit from restarts (i.e.\ re-running the same algorithmic procedure with the same parameter settings) since they are sensitive to the quality of initial points. But even with fixed initial points, due to stochastic elements in the optimizers' procedures as well as being introduced from sampling (or gate errors), the algorithms can be made more robust by incorporating restarts. The work of Shaydulin et al.~shows how multi-start methods can improve QAOA~\cite{Shaydulin20192}. In our numerical simulations we set the number of restarts to 5.\\
\paragraph*{Optimised initial points.}
Numerical and analytical works have indicated that QAOA's parameters $\gamma,\beta$ are concentrated in parameter space for instances that belong to `similar classes'~\cite{Brandao2018,Streif20192,Akshay2021}. In practice, this means that we can select a single instance from a class, work very hard to find optimal parameters through variational optimization or from analytical arguments, and store these values to be used as initial points when solving other instances. For every considered $N,d$, we use the all-negative-weights instance to obtain the parameter values to be used as initial points. Numerical results of a study at $p=1$ can be found in Appendix~\ref{app:ipstudy}.\\
\paragraph*{Looping over clusters.}
Initial experiments showed that the vanilla QAOA performed well on instances for which the optimal solution required many clusters -- this means that the optimal cluster number was close to the available number of clusters $d$ -- but not so when the number of clusters was low. Since our Hamiltonian formulation allows for $d$ to be varied, we can iterate the algorithm over all possible maximum number of clusters, i.e. $d\in\{1,\dots,N\}$, and keep the best result obtained over all iterations.\\
\begin{figure}
    \centering
    \includegraphics[width = \linewidth]{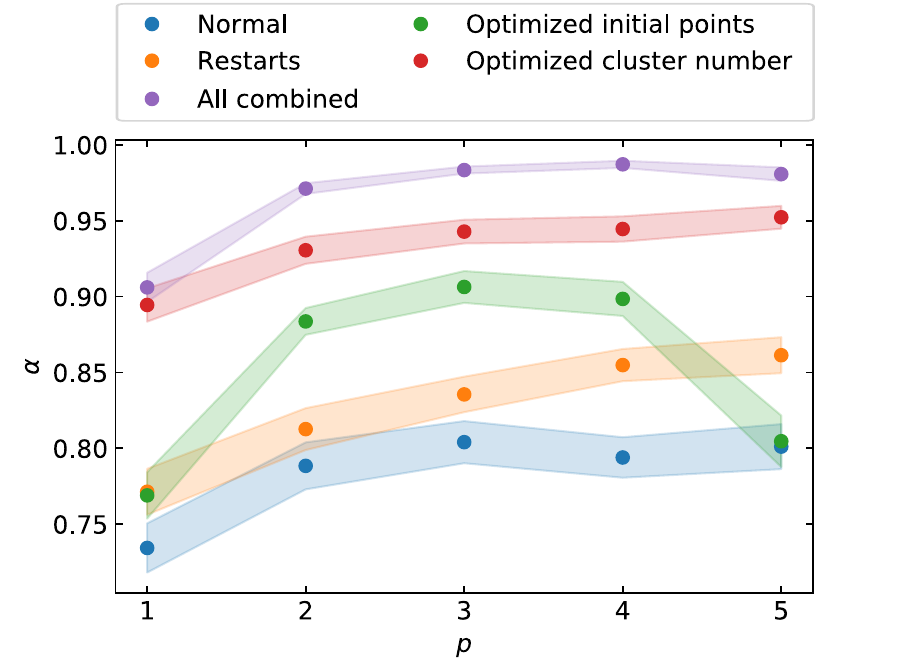}
    \caption{Approximation ratios $\alpha$ for different improvements added to the vanilla QAOA algorithm applied to the $N=4$ complete graph data set. The blue plot (label `Normal') represents the performance without any of the improvement strategies as suggested in the text. In all cases COBYLA was used as the classical optimizer. The filled circles data points indicate the average value over all 50 instances and the shaded area represents the error in the mean.}
    \label{fig:Improvement_strategies}
\end{figure}
\linebreak
Fig.~\ref{fig:Improvement_strategies} shows the numerical results of the performance on a data set of 50 correlation clustering instances on complete graphs with weights randomly picked out of $\{-1,+1\}$. In the creation of the data set, we swept the probability for giving an edge weight `$+1$' from 0 to 1 uniformly in order to have a good representation of all possible weight configurations. We observe that looping over cluster numbers has the largest contribution to the improvement, followed by the optimized initial points, which work particularly well for intermediate values of $ 2\leq p \leq4$. This could be due to the fact that for $p=1$ the optimizer is able to find good points even with bad initial points, whilst for $p=5$ the distance from the initial point and the optimal point might be larger in parameter space, which results in the initial point becoming effectively random again. The restarts are expected to have less of an impact when used in conjunction with good initial points. Examples of other variants and strategies to QAOA that one could consider, but were not used in this work, are RQAOA~\cite{Bravyi2020}, ADAPT-QAOA~\cite{Zhu2020} and parameter initialization heuristics~\cite{Zhou2020}. 

\section{\label{sec:P} Performance}
We consider correlation clustering data sets consisting of complete graphs as well as Erd\H{o}s–R\'enyi graphs: these are random graphs with a fixed amount of nodes $N$ but for which the edges are created according to some edge creation probability. In the creation of our Erd\H{o}s–R\'enyi data sets we use an edge creation probability of $0.5$, and additionally have the criteria that every instance needs to have at least one edge. For both graph types we again sweep the edge weight probability of giving an edge weight `$+1$' from 0 to 1 to create 50 random instances. 
\begin{figure}
    \centering
    \includegraphics[width=\linewidth]{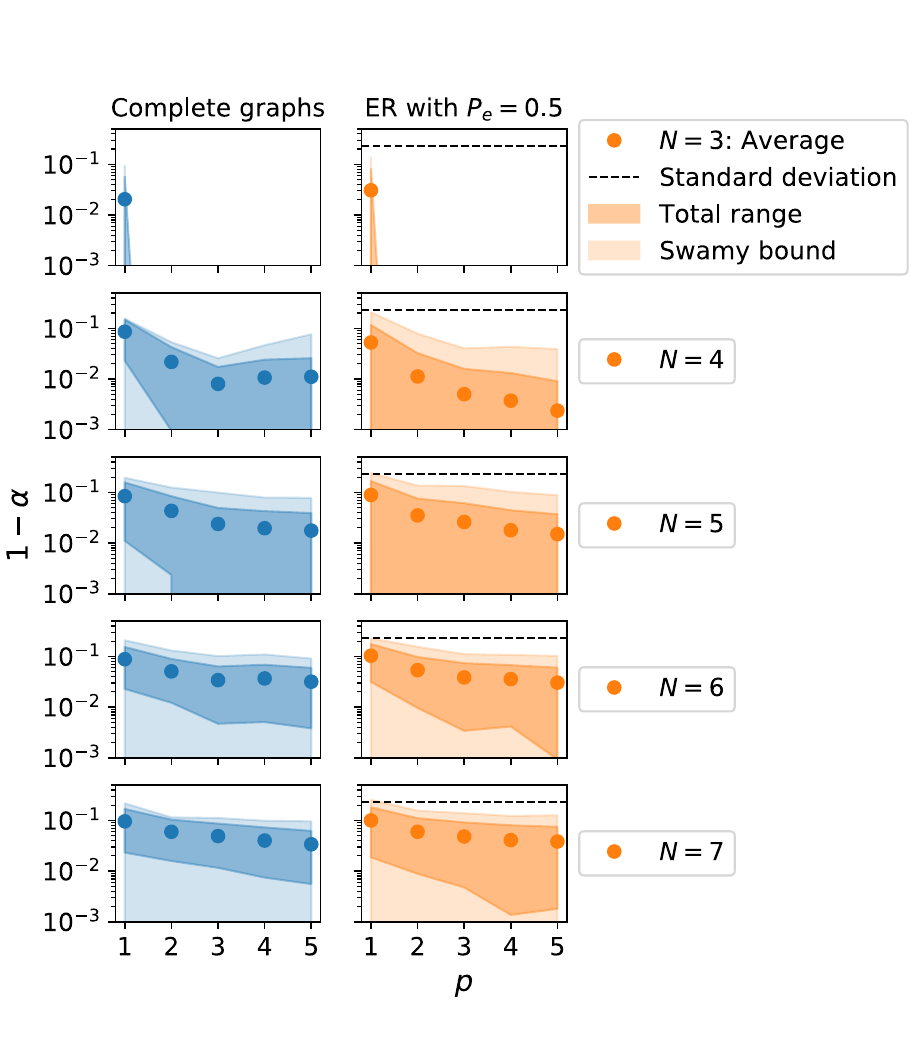}
    \caption{Performance of QAOA, measured in obtained 1 minus the approximation ratio (1-$\alpha$), on the data sets of complete graphs (blue) and Erdős–Rényi graphs (orange) with edge creation probabilities $P_e=0.5$ as a function of the number of nodes $N$ and $p$. The Swamy bound is added only for the Erdős–Rényi graphs as a PTAS exists for complete graphs. To improve readability, artificial continuity is added between the discrete data for standard deviation (dark shaded), total range (light shaded) and the Swamy bound (dashed black line).
    }
    \label{fig:PAPER_1}
\end{figure}

Fig.~\ref{fig:PAPER_1} shows the  average values, standard deviation and total range (defined as the worst and best performance over all instances in a data set of graphs of fixed size $N$) of the achieved approximation ratios on instances of our data sets as functions of $N$ and QAOA depth $p$, achieved by the QAOA that adopts all strategies as listed in section~\ref{sec:IS}. We observe that the worst performance is slightly better for complete graphs than it is for Erdős–Rényi graphs, which is to be expected as the complete graph problem instances have more structure and are also more easily solved classically~\cite{Bansal2004}. Furthermore, we find a slightly more profound $p$-dependence, as reflected for instance in the amplitude of the standard deviation with increasing $p$, for Erdős–Rényi graphs as compared to the complete graphs. This is not surprising, as for the complete graphs the QAOA at $p=2$ is already able to `see the whole graph': for every edge any other edge is at most $p$ edges away. The average performance is comparable amongst different graph types. For $p=1$ we observe that the worst performance on instances from our bench-marking data sets becomes comparable to the Swamy bound of $0.7666$ for $N\geq 5$, but for $p=2$ we have that the algorithm performs better than this bound on all instances in the data set
\begin{figure}
    \centering
    \includegraphics[width=\linewidth]{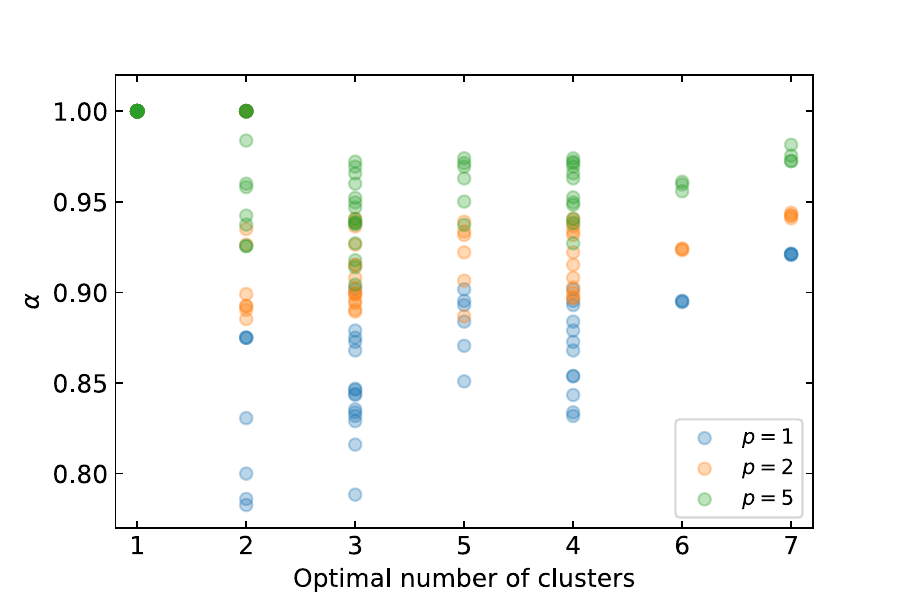}
    \caption{Scatter plot of the achieved approximation ratios $\alpha$ on the $N=7$ complete graph data set as a function of the optimal number of clusters, which was obtained through brute force search. In the case when multiple different cluster numbers were optimal, the data points are plotted for all of these values.}
    \label{fig:PAPER_3}
\end{figure}

In order to investigate instance-dependence of the performance, Fig.~\ref{fig:PAPER_3} shows a scatter plot of the performance on individual instances in the complete graph data set of $N=7$ for different values of $p$. We find that the algorithm at low $p$ has the most difficulty with instances that require a low number of clusters (except the singleton cluster case, which is trivial for $d=1$), and as $p$ increases the most difficult instances seem to have optimal cluster numbers in the middle between the singleton and all-different clusters.
\begin{figure}
    \centering
    \includegraphics[width=\linewidth]{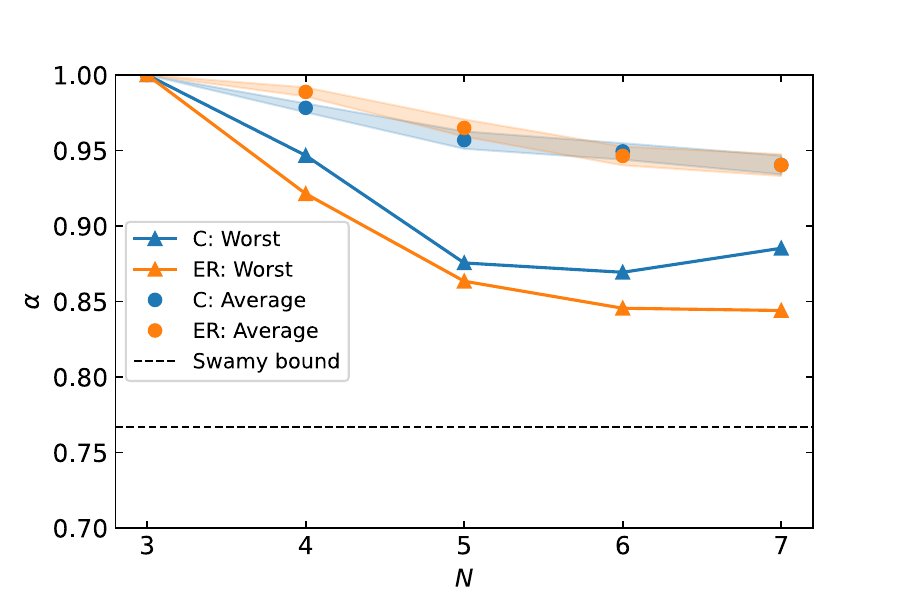}
    \caption{Average (filled circles) and worst (triangles) approximation ratios $\alpha$ as a function of $N$ for $p=2$ for the complete graph (labelled 'C') and Erdős–Rényi (labelled 'ER')  data sets. The dashed line is again the Swamy bound. The shaded area represents the error in the mean.}
    \label{fig:PAPER_2}
\end{figure}

Next, Fig.~\ref{fig:PAPER_2} shows the average and worst approximation ratios a function of $N$ giving an indication of the scalability. We observe that the decrease of the worst approximation ratio in $N$ seems to slow down considerably for $N \geq 5$ (for complete graphs it even improves). It should be noted though that it is difficult to make too definitive statements here: we were limited to studying graphs up to $N=7$ and have no guarantee that we used optimal parameters. However, we can show that the approximation ratio in the limit of large $N$ can in fact become independent of $N$. We propose (conjecture) that in this limit $\text{QAOA}_1$, looping over $d=1,2,3,4$, has a performance guarantee of $0.6367$ ($0.6699$) on 3-regular graphs. The full derivation of this bound can be found in Appendix~\ref{app:app_ratio_bound}, and follows a similar method as the one proposed by Wurtz and Love~\cite{Wurtz2020}. However, it is not yet clear how this bound changes as a function of the graph's degree. For MAXCUT~\cite{wang2018quantum} and Maximum Independent Set (MIS)~\cite{Farhi2020} it has already been shown that the approximation ratio at $p=1$ decreases as the degree of the graph increases. Overall, our results indicate that QAOA, albeit with some added heuristic strategies, might achieve a competitive approximation ratios with respect to the best classical approximation algorithm in solving correlation clustering problems of low-degree graphs at low circuit depth.

\begin{figure*}[t!]
	\centering
	\includegraphics[width=\textwidth]{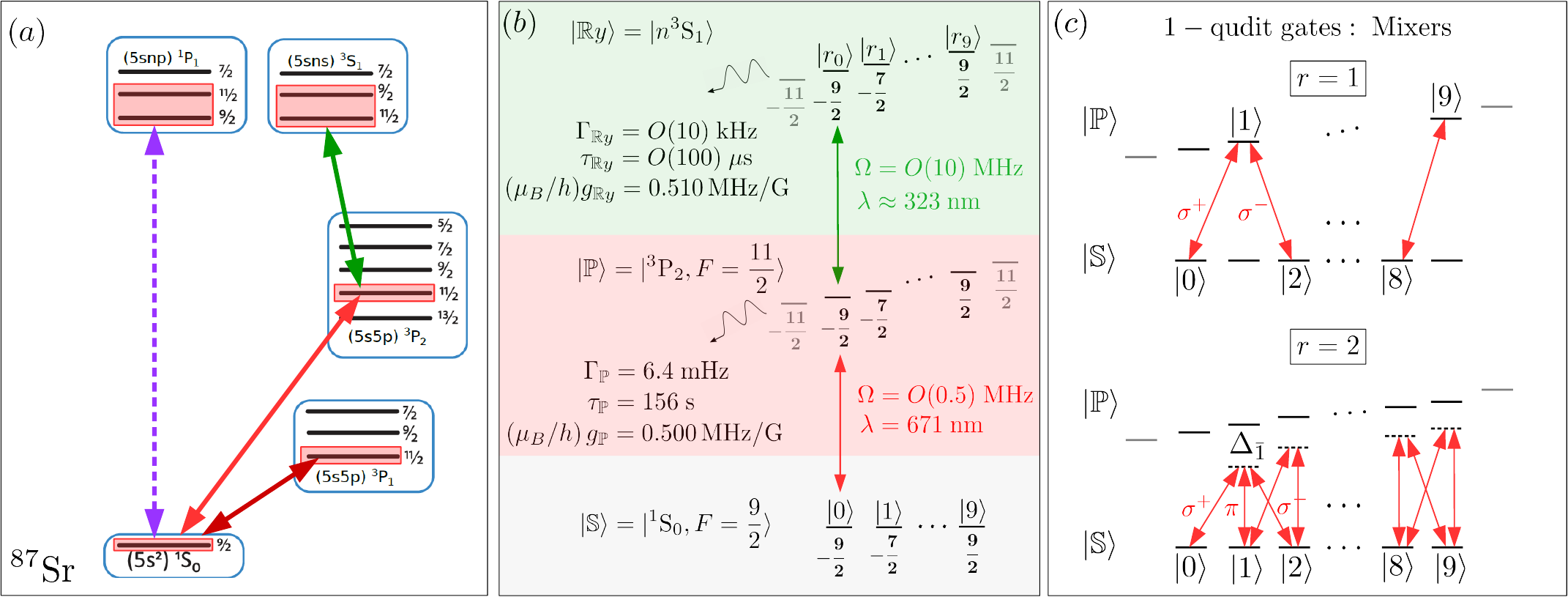}
	\caption{
	{\bf (a)} Relevant level scheme of ${}^{87}$Sr.	Proposed qudit states are realized by the ground state manifold $\ket{\mathbb S} = \ket{{}^1{\rm S}_0,F=9/2}$. The two-qudit gates are realized by exciting the states from $\ket{\mathbb S}$ to a Rydberg manifold $\ket{\mathbb Ry} = \ket{{\rm n}^3{\rm S}_1}$ through an intermediate state from the $\ket{\mathbb P} = \ket{{}^3{\rm P}_2, F=11/2}$ manifold (red and green arrows). The dashed violet arrow indicates an alternative possibility to excite the ground state to a Rydberg manifold $\ket{{\rm n}^1{\rm P}_1}$, using a single photon transition. The dark red arrow indicates the transition to the $\ket{{}^3{\rm P}_1, F=11/2}$ manifold used for measurement of the quantum state.
	{\bf (b)} Parameters of the manifolds $\ket{\mathbb S}, \ket{\mathbb P}, \ket{\mathbb Ry}$ relevant for the qudit operations: transition wavelengths $\lambda$, typical Rabi frequencies $\Omega$, decay rates $\Gamma$ and the associated lifetimes $\tau$ from the excited $\ket{\mathbb P}, \ket{\mathbb Ry}$ manifolds and the Land\'{e} $g$-factor quantifying the Zeeman splitting of the magnetic sublevels. The values of $\Gamma_{\mathbb P}, \tau_{\mathbb P}$ are taken from~\cite{Porsev_2004_PRA}.
	The $\mathbb P$-manifold states are used to realize the single qudit gates such as the mixing unitaries, which are shown in {\bf (c)} for $r=1$ and $r=2$, see text for details.
	}	
	\label{fig:level_scheme}
\end{figure*}

\section{\label{sec:RQC_BuildingBlocks} Experimental building blocks}

In this section we describe the proposed implementation of the qudits. While most current experimental efforts focus on the use of qubits, neutral atoms are a natural platform for qudits and we specifically focus on the example of fermionic strontium, $^{87}{\rm Sr}$~\cite{DeSalvo_2010_PRL, Stellmer_2013_PRA, stellmer2014degenerate}. The reason is that it possesses a nuclear spin $I=9/2$ that is decoupled from the electronic spin, which features $d_{\rm max}=2I+1=10$ hyperfine states in the ground state manifold, which are insensitive to electric and magnetic field fluctuations. Moreover, one can make use of the long-lived excited states from the ${}^3{\rm P}_J$ manifold, which has been exploited in a recent experiment to create a Bell state with fidelity reaching 99\%~\cite{Madjarov_2020_NatPhys}. The analysis presented below can be adapted to the analogous isotope ${}^{173}$Yb of fermionic ytterbium~\cite{Fukuhara_2007_PRL, Taie_2010_PRL, sugawa2013ultracold}, where however the maximum available $d_{\rm max}=6$ in the ground state manifold is smaller compared to $d_{\rm max}=10$ of ${}^{87}{\rm Sr}$. In the following, we refer to single and two qudit gates as 1-gates and 2-gates, respectively.\\
%
~\\

\noindent
\emph{Qudit manifold.}
The relevant level scheme of ${}^{87}{\rm Sr}$ is sketched in Fig.~\ref{fig:level_scheme}a. As stated above, the ground state manifold, which we denote with a slight abuse of notation as $\ket{{\mathbb S}} \equiv \ket{{}^1{\rm S}_0, \; F=\frac{9}{2}}$, consists of $d=2 F + 1 = 10$ $m_F$-states, $m_F \in \{-9/2, \ldots, 9/2 \}$. We also denote $\ket{\mathbb P} \equiv \ket{{}^3{\rm P}_2, F=\frac{11}{2}}$ the excited state manifold, which we will use to implement the 1- and 2-gates~\cite{Onishchenko_2019_PRA} (here we choose the ${}^3{\rm P}_2$ manifold in particular due to its long lifetime, which allows for a resonant excitation to the Rydberg state, cf. below).
The optical tweezers providing the trapping potentials are typically realized with light of wavelength $\lambda_{\rm tweezer}$ that is red-detuned from the dominant $\ket{\mathbb S} - \ket{{}^1{\rm P}_1}$ trapping transition (not shown in the Fig.~\ref{fig:level_scheme}a).
The choice of the ${\mathbb P}$-manifold is motivated by the fact that, unlike the other possible choices such as $\ket{{}^3{\rm P}_2, F=\frac{7}{2}}$ or $\ket{{}^3{\rm P}_2, F=\frac{9}{2}}$, it possesses a so-called \emph{magic} wavelength $\lambda_{\rm tweezer} \approx 900 \, {\rm nm}$, for which the transition frequencies $\ket{{\mathbb S},m_F} - \ket{{\mathbb P},m_{F'}}$ are approximately independent of the intensity of the tweezer light for all $m_{F}, m_{F'}$, which ensures a position independent addressing frequency of the individual $m_{F}$ states~\cite{Urech1, Safronova, Deutsch_1998_PRA}.
The actual addressing relies on the Zeeman splitting of the ${\mathbb P}$-manifold and has been experimentally demonstrated using ${}^{173}{\rm Yb}$~\cite{Taie_2010_PRL}. Applying moderate values of the magnetic field $B$ results in linear Zeeman splitting with an energy shift between the adjacent $m_{F'}$ states of $\mu_B g B/h$, where $\mu_B$ and $h$ is the Bohr magneton and the Planck constant respectively and $g$ is the Land\'{e} $g$-factor. For the $\mathbb P$-manifold, $g \approx 0.36$ and $\mu_B g/h \approx 0.5\,{\rm MHz}/{\rm G}$~\cite{Boyd_2007_PRA}. This provides a splitting of $\approx 50 \; {\rm MHz}$ between adjacent $m_{F'}$ states for a magnetic field amplitude of $100 \; {\rm G}$, allowing for both resonant and off-resonant addressing as we now discuss.

~\\
\noindent
\emph{Rydberg states.} 
The 2-gates are implemented via the Rydberg blockade provided by the interaction energy $V$, which for the density-density interaction between a pair of atoms separated by a distance $R$ typically corresponds to a Van der Waals type $V(R) = C_6/R^6$, where $C_6$ is so-called ``$C_6$'' (or Van der Waals) coefficient~\cite{gallagher2005rydberg,saffman2016quantum,Saffman_2010_RMP,browaeys2020many}.

The atom can be excited from the $\mathbb S$-manifold through a two-photon transition via the $\mathbb P$-manifold to a Rydberg state $\ket{\mathbb Ry} = \ket{n {}^3{\rm S}_1}$ (red and green arrows in Fig~\ref{fig:level_scheme}a,b). In principle one could also use a direct one-photon transition to the $\ket{n {}^1{\rm P}_1}$ Rydberg state (dashed purple line in Fig.~\ref{fig:level_scheme}a). However, due to current technology limitations, such as lack of lasers of sufficient power (and the associated optical elements) for the $\ket{\mathbb S}-\ket{n {}^1{\rm P}_1}$ transition wavelength of $\sim 220 \, {\rm nm}$, we solely focus on the $\ket{\mathbb S} - \ket{\mathbb P} - \ket{\mathbb Ry}$ scheme, cf. Fig.~\ref{fig:level_scheme}b.
This allows for $m_F$ specific addressability of the Rydberg states as well.
The two-photon transition is typically operated off-resonantly but given the extremely long lifetime of the $\mathbb P$-manifold ($\tau_{\mathbb P} = 156\,s$ in the absence of the tweezer light) and to further reduce the timescales needed for operation we consider a resonant two-step process: the chosen $m_F$ state is transferred as $\ket{{\mathbb S}, m_F} \, \rightarrow \, \ket{{\mathbb P},m_{F'}} \, \rightarrow \, \ket{\mathbb Ry}$ using two consecutive $\pi$-pulses or using stimulated rapid adiabatic passage. The properties of the relevant manifolds are summarized in Fig.~\ref{fig:level_scheme}b.

In the following, we refer to $\ket{\mathbb S}$ as the qudit manifold. We note that owing to the long lifetime of the $\ket{\mathbb P}$-manifold, one could in principle choose it as the basis for the qudit states \footnote{We note that analogous approach has been adopted in Ref.~\cite{Madjarov_2020_NatPhys}, where the ${}^3P_0$ manifold has been used for a realization of a qubit state.} and the subsequent analysis can be readily adapted to this alternative choice.

Having identified the suitable level structure we now proceed with the design of the 1- and 2-qudit gates.

\subsection{1-gates}
\label{sec:RQC_1gates}

We consider the implementation of a qudit 1-gate by coupling the qudit level $\ket{\ell}$ to level $\ket{\ell'}$ by means of laser fields of Rabi frequency $\Omega_{\ell ,\ell'} \equiv \Omega c_{\ell,\ell'}$, where $\Omega \in {\mathbb R}_+$ is the Rabi frequency amplitude and $c_{\ell,\ell'}$ is a (dimensionless) complex number characterizing the individual couplings.

\subsubsection{Implementation of hardware mixers}

Let us first discuss the implementation of the mixing unitary $U_M(\beta_k) = {\rm exp}[-i \beta_k H_M]$, cf. Eqs.~(\ref{eq:UM}) and (\ref{eq:qudit_mixer}). Specifically, we will consider two specific cases, namely $r=1$ and $r=2$, and we further comment on $r>2$.\\
~\\

\noindent
$\pmb{r=1.}$ We propose to implement the $r=1$ case as shown in Fig.~\ref{fig:level_scheme}c. 
Starting with all qudit levels in the $\mathbb S$-manifold, we apply the following sequence of pulses:
\begin{enumerate}
	\item Apply $\left \lfloor \frac{d}{2} \right \rfloor$ $\pi$-polarized $\pi$-pulses on the levels $\{1,3,\ldots, l \}$, $l = 2 \left \lfloor \frac{d}{2} \right \rfloor - 1$, on the ${\mathbb S}-{\mathbb P}$ transition, which brings them from $\mathbb{S}$ to ${\mathbb P}$.
	\item Apply the Rabi frequencies $\Omega_{\ell,\ell+1}$ connecting nearest-neighbour qudit levels for time $\tau_k$.
	\item Repeat the step 1. to bring the $\left \lfloor \frac{d}{2} \right \rfloor$ levels from ${\mathbb P}$ back to ${\mathbb S}$.
\end{enumerate}
In Fig.~\ref{fig:level_scheme}c, the choice of couplings is depicted by the red arrows and the coupling Hamiltonian reads
\begin{equation}
	\Omega h_\Omega = 
	\Omega
	\begin{pmatrix}
		0 & c_{0,1} & & \\
		c^*_{0,1} & & \ddots \\
		& \ddots & & c_{d-2,d-1} \\
		& & c^*_{d-2,d-1} & 0
	\end{pmatrix}.
	\label{eq:H_Omega}
\end{equation}
When the individual Rabi frequencies $\Omega_{\ell,\ell'}$ are adjusted such that $c_{\ell, \ell'=\ell+1} = 1 \; \forall \, \ell$, $h_\Omega \rightarrow h_M$ and the sequence 1.-3. corresponds to the mixing unitary $U_M$, Eq.~(\ref{eq:UM}), with the mixing parameter $\beta_k = \Omega \tau_k$~\footnote{For large $d$ (i.e. close to $d_{\rm max}=10$), achieving $c_{\ell, \ell'=\ell+1} = 1, \; \forall \ell$,  might be challenging. This is because the ratio $\Omega_{0,1}/\Omega_{9,10}$ between the Rabi frequencies coupling the smallest and largest $\ell$ states is too large (or too small, depending on the orientation of the magnetic field) so that $c_{\ell, \ell'=\ell+1} = 1, \; \forall \ell$, implies either slow operation or further detrimental decoherence due to the off-resonant scattering of the intense mixer beams \cite{Urech1}. In this case the mixers could be implemented in a step-wise fashion by means of the Givens rotations \cite{OLeary_2006_PRA}
}. 

One important remark is in order: the mixing Hamiltonian (\ref{eq:mixer_matrix}) implements a ``periodic boundary condition'' in that it couples the levels $0$ and $d-1$ (and similarly for higher $r$). Such coupling is typically not native to a physical qudit, which instead corresponds to ``open boundary condition'' as is apparent from (\ref{eq:H_Omega}). We discuss the difference between using (\ref{eq:mixer_matrix}) or (\ref{eq:H_Omega}) below, cf. Sec.~\ref{sec:mixer_comparison} and Fig.~\ref{fig:comp_mixers}.\\

~\\
$\pmb{r=2.}$ For $r=2$ we consider an \emph{off-resonant} coupling exploiting a detuning $\Delta$ from the ${\mathbb P}$-states as shown in Fig.~\ref{fig:level_scheme}c. This allows to extend the structure of the coupling Hamiltonian Eq.~(\ref{eq:H_Omega}) to include also a second diagonal. The coupling Hamiltonian (\ref{eq:H_Omega}) becomes
\begin{equation}
	\tilde{\Omega} h_\Omega =
	\tilde{\Omega}
	\begin{pmatrix}
		0 & \tilde{c}_{0,1} & \tilde{c}_{0,2} & & \\
		\tilde{c}^*_{0,1} & & \ddots & \ddots & \\
		\tilde{c}^*_{0,2} & \ddots & & \ddots & \tilde{c}_{d-3,d-1} \\
		 & \ddots & \ddots & & \tilde{c}_{d-2,d-1} \\
		& & \tilde{c}^*_{d-3,d-1} & \tilde{c}^*_{d-2,d-1} & 0
	\end{pmatrix}.
	\label{eq:H_Omega2}
\end{equation}
Here the tilde denotes the effective Rabi frequency which in the far-detuned limit is given by $\tilde{\Omega}_{\ell,\ell'} = \Omega_{\ell,\bar{\ell}} \Omega_{\bar{\ell},\ell'}/{\Delta_{\bar{\ell}} }$, where $\ket{\bar{\ell}}$ is the detuned state from the ${\mathbb P}$-manifold to which $\ket{\ell}$ and $\ket{\ell'}$ are coupled. As for $r=1$, here the Rabi frequencies $\tilde{\Omega}_{\ell,\ell'}$ have to be adjusted such that $\tilde{c}_{\ell,\ell+1} = \tilde{c}_{\ell,\ell+2}=1 \; \forall \ell$.\\

~\\
$\pmb{r > 2.}$ Going beyond $r=2$ becomes non-trivial due to the connectivity of the coupling Hamiltonian (\ref{eq:H_Omega2}), here limited to next-to-nearest qudit levels. A general strategy, exploiting the decomposition of an arbitrary single qudit unitary into a sequence of parallelized Givens rotations under the finite connectivity constraint and invoking a greedy optimization algorithm has been described in~\cite{OLeary_2006_PRA}. As in this work we are concerned only with $r \leq 2$ we do not analyse this situation further.

\subsubsection{Performance of hardware mixers}
\label{sec:mixer_comparison}
To investigate how the hardware-specific mixers given by Eqs. \eqref{eq:H_Omega} and \eqref{eq:H_Omega2} compare to the mixer \eqref{eq:mixer_matrix} we run simulations similar to the ones performed in Sec.~\ref{sec:P}.
To this end we consider the $N=4$ complete graph data set. This choice is motivated by the fact that for $N=3$ (i.e. $d=3$), Eq. \eqref{eq:H_Omega2} becomes \eqref{eq:mixer_matrix} and as $N$ increases (and $d=N$ for complete graphs considered here), the two boundary states $\ket{0},\ket{d-1}$ that have different mixing constitute an increasingly small fraction over all states. It is thus plausible to assume that the difference between the mixers Eqs.~\eqref{eq:mixer_matrix},\eqref{eq:H_Omega} and \eqref{eq:H_Omega2} is most relevant for smallest $N>3$, i.e. $N=4$.

All non-zero elements in \eqref{eq:H_Omega} and \eqref{eq:H_Omega2} are set to unity, and we again generate initial points starting from the all-negative-weights graph. The numerical results in Fig.~\ref{fig:comp_mixers} show that the performance is very similar amongst the different mixers -- the largest observed percentage performance difference for a single instance is about $4\%$. From the data we can conclude that the hardware mixer with $r=2$ performs better on average than the normal mixer with $r=1$, which outperforms the hardware mixer with $r=1$. Overall, we can conclude from this data that replacing the standard mixer \eqref{eq:mixer_matrix} with the hardware-specific mixers \eqref{eq:H_Omega} and \eqref{eq:H_Omega2} is not expected to result in a significant decrease in performance.
\begin{figure}
\centering
    \includegraphics[width=\linewidth]{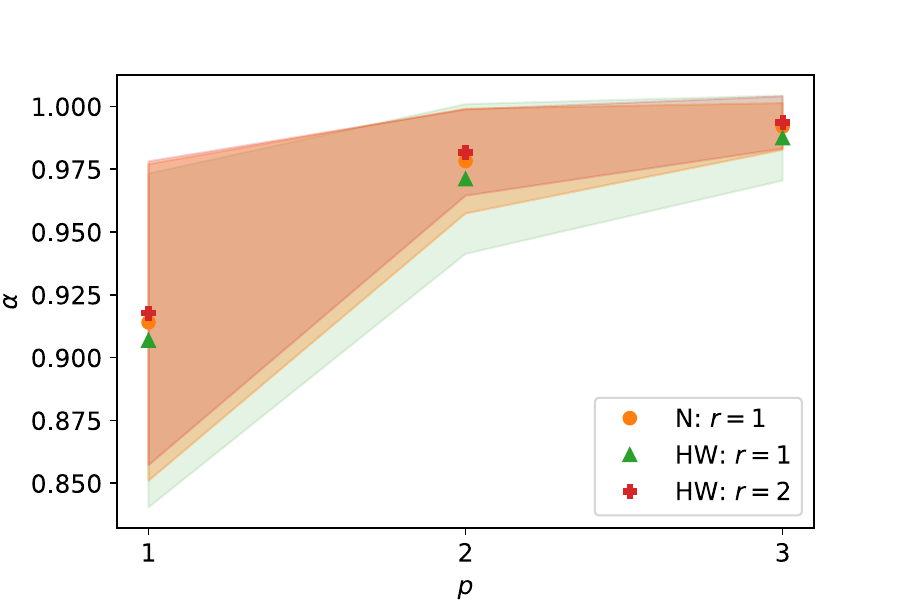}
    \caption{Results for the different mixing Hamiltonians given by Equations \eqref{eq:mixer_matrix},\eqref{eq:H_Omega} and \eqref{eq:H_Omega2}, labelled as `N: $r=1$', `HW: $r=1$' and `HW: $r=2$', respectively. A dot is the average approximation ratio over all 50 instances and the shaded area represents one standard deviation. The results are obtained using a complete graph with $N=4$ data set in the same way as described in Sec.~\ref{sec:P}. }
    \label{fig:comp_mixers}
\end{figure}

\subsubsection{Unitaries beyond mixers}
Next, we specify single qudit unitaries beyond mixers, which we will exploit in the construction of the 2-gates in Sec.~\ref{sec:RQC_2gates} below.
For a two-level system with levels $\ket{\ell},\ket{\ell'}$ driven on resonance with Rabi frequency $\Omega_{\ell,\ell'}$, the unitary evolution operator expressed in the $\{\ket{\ell}, \ket{\ell'} \}$ basis reads (cf. Appendix~\ref{app:RQC_BuildingBlocks})
\beq
    U^{\rm 2-level}_{\ell, \ell'}(\theta,\varphi) = 
    \begin{pmatrix}
     \cos \frac{\theta}{2} & -i {\rm e}^{i \varphi} \sin \frac{\theta}{2} \\
     -i {\rm e}^{-i \varphi} \sin \frac{\theta}{2} & \cos \frac{\theta}{2}
    \end{pmatrix},
    \label{eq:U2level}
\eeq
where $\theta = 2 t |\Omega_{\ell,\ell'}|$, $t$ is the duration of the laser pulse and $\Omega_{\ell,\ell'} = |\Omega_{\ell,\ell'}| {\rm e}^{i \varphi}$.

Similarly, we can write down a general unitary evolution operator for a three-level system (so-called $\Lambda$-scheme). As we will be particularly interested in performing controlled operations through driving the qudit states to a Rydberg level, we shall consider a $\Lambda$-scheme where levels $\ket{\ell}, \ket{\ell'}$ are coupled to a common Rydberg state $\ket{r}$. 
We will further require that at the end of the operation, there is no population in the Rydberg state. 
In this case the unitary for \emph{resonant} driving and expressed
in the $\{\ket{\ell}, \ket{\ell'}, \ket{r} \}$ basis reads (cf. Appendix~\ref{app:RQC_BuildingBlocks})
\beq
    U^{\rm 3-level}_{\ell,\ell'}(\theta,\varphi) =  
    -\begin{pmatrix}
     \cos \frac{\theta}{2} & {\rm e}^{i \varphi} \sin \frac{\theta}{2} & 0\\
     {\rm e}^{-i \varphi} \sin \frac{\theta}{2} & -\cos \frac{\theta}{2} & 0 \\
     0 & 0 & 1
    \end{pmatrix},
    \label{eq:U3level}
\eeq
where 
\begin{subequations}
    \label{eq:theta_def}
    \begin{align}
	\cos \frac{\theta}{2} &= \frac{|\Omega_0|^2 - |\Omega_1|^2}{\Omega^2}, \\
	\sin \frac{\theta}{2} \, {\rm e}^{i \varphi} &= \frac{2 \Omega_0 \Omega_1^*}{\Omega^2} \\
	\Omega &= \sqrt{|\Omega_0|^2 + |\Omega_1|^2}
	\end{align}
\end{subequations}
and $\Omega_\alpha = |\Omega_\alpha| {\rm e}^{i \phi_\alpha}$, $\alpha=0,1$ such that $\varphi = \phi_0 - \phi_1$. Here, we have denoted $\Omega_0 = \Omega_{\ell,r}, \Omega_1 = \Omega_{\ell',r}$ for simplicity. The unitary takes the form (\ref{eq:U3level}) for pulses of duration $t = \pi (2 m-1)/\Omega$, $m \in {\mathbb N}$.
In what follows we refer to the unitaries (\ref{eq:U2level}) and (\ref{eq:U3level}) simply as $U$ and we shall specify which one is considered where appropriate. 
%
For future convenience we denote the usual Pauli X as 
\beq
    X_{\ell,\ell'} = U_{\ell, \ell'}(\pi,0)
    \label{eq:PauliX}
\eeq
and the phase gate (defined up to a global phase)
\beq
    {\rm P}_{\ell,\ell'}(\varphi) = U_{\ell,\ell'}(\pi,\varphi_2) U_{\ell,\ell'}(\pi,\varphi_1) 
    =
    \begin{pmatrix}
        {\rm e}^{i \varphi} & 0 \\
        0 & {\rm e}^{-i \varphi}
    \end{pmatrix},
    \label{eq:P}
\eeq
where $\varphi = \varphi_2-\varphi_1$.

\subsection{2-gates}
\label{sec:RQC_2gates}

\noindent
\textit{CP gate.} We proceed with the construction of the cost unitary $U_C$, Eq.~(\ref{eq:UC}). Noting that the the cost Hamiltonian (\ref{eq:HC}) is given by a sum of commuting operators acting on the graph edges, we consider an action of the cost unitary on a single pair of qudits. It corresponds to a controlled-phase gate of the form
\beq
    {\rm CP}(\gamma) \ket{\ell}\ket{\bar{\ell}} = \left[ \delta_{\ell,\bar{\ell}} {\rm e}^{-i \gamma} + (1-\delta_{\ell,\bar{\ell}}) \right] \ket{\ell}\ket{\bar{\ell}}.
    \label{eq:CP_gate}
\eeq
It is defined up to a global phase and $\delta_{\ell,\bar{\ell}}$ is the Kronecker delta
\footnote{
We note that according  to the definitions Eqs. (\ref{eq:UC}), (\ref{eq:defV}) and (\ref{eq:HC}), the global phase omitted in the Eq.~(\ref{eq:CP_gate}) equals $\mp \gamma_k$ for the edge weight $w_{(u,v)}=\pm 1$ such that the relative phase $\gamma$ between the two-qudit states $\ket{\ell} \ket{\bar{\ell}=\ell}$ and $\ket{\ell} \ket{\bar{\ell} \neq \ell}$ is given by $\gamma = \pm 2 \gamma_k$. We also remark that while we consider the graph edges weights $w_{(u,v)} \in \{-1,1\}$, one could implement arbitrary weights through edge-specific phases $\gamma \rightarrow \gamma_{(u,v)} = 2 w_{(u,v)} \gamma_k$.
}. 

The key element in the construction of the CP gate Eq.~(\ref{eq:CP_gate}) is a controlled-phase unitary ${\cal U}$ acting on a single level $\ket{\ell}$ from the qudit manifold by coupling it to the corresponding Rydberg state $\ket{r_\ell}$, namely
\beq
\label{eq:calU_def}
\centering
	{\cal U}^{(1|2)}_{\ell}(\gamma) \equiv
	\parbox{0.4\textwidth}{
\Qcircuit @C=1em @R=1em
{
& {\rm t}\phantom{ar} & \qw & \gate{{\rm P}_{\ell,r_\ell}(\frac{\gamma}{2})} & \push{\rule{0em}{1.5em}} \qw & \qw \\
& {\rm c}\phantom{on} & \gate{{\rm X}_{\ell,r_\ell}} & \qw & \gate{{\rm X}_{\ell,r_\ell}} & \qw
\gategroup{1}{3}{2}{5}{1.5em}{--}
}
},
\eeq
where the first and second qudit is the target and control respectively (labelled as t and c) and P is given in Eq.~(\ref{eq:P}). 
Here, we have introduced the notation ${\cal U}^{(q_t|q_c)}$, where $q_{t}$, $q_c$ label the target and control qudits, respectively. We can then construct the controlled phase gate Eq.~(\ref{eq:CP_gate}) either in a manifestly symmetric way
\beq
	{\rm CP}(\gamma) = \prod_{\ell=0}^{d-1} {\cal U}^{(1|2)}_\ell {\cal U}^{(2|1)}_\ell,
\label{eq:CP_symmetric}
\eeq
or alternatively as
\beq
    {\rm CP}(\gamma) = \left[ \prod_{\ell=0}^{d-2} {\rm CX}^{(1|2)}_{\ell, \ell+1 | \neg \ell} \right]^\dag P^{(1)}_{{\rm aux},0} (\gamma) \left[ \prod_{\ell=0}^{d-2} {\rm CX}^{(1|2)}_{\ell, \ell+1 | \neg \ell} \right],
    \label{eq:CP_Kj}
\eeq
cf. Appendix~\ref{app:RQC_BuildingBlocks} for derivation.
The gate (\ref{eq:CP_Kj}) has the advantage of reducing the cost of the gate compared to (\ref{eq:CP_symmetric}) in terms of the number of hardware operations (laser pulses), cf. Eq.~(\ref{eq:CP_cost}) and Sec.~\ref{sec:PD}.
Here, $P^{(1)}_{{\rm aux},0}$ is the phase gate (\ref{eq:P}) applied to the first (target) qudit and driving the level $\ket{0}$ through an auxiliary state, which here is \emph{not} a Rydberg state. Similarly, we have introduced the controlled-X gates ${\rm CX}^{(q_t|q_c)}_{\ell_t, \ell_t' | \neg \ell_c}$ where $q_c$ denotes the control, $q_t$ the target, $\ell_c$ the control qudit level and $\ell_t, \ell_t'$ the pair of target levels being acted upon.
Importantly, here the target levels $\ell_t, \ell_t'$ are swapped when the control qubit is \emph{not} in the state $\ell_c$ which is highlighted using the negation sign in the subscript, $\neg \ell_c$.
We note that for a typical hardware implementation, the choice of $\ell_t, \ell_t'$ is not arbitrary but upper bounded. In the present case we shall assume $|\ell_t - \ell_t'| \leq 2$. The controlled-X are implemented in the standard fashion as
\beq
\label{eq:CX_def}
\centering
	{\rm CX}^{(1|2)}_{\ell_t, \ell_t' | \neg \ell_c} \equiv
	\parbox{0.3\textwidth}{
\Qcircuit @C=1em @R=1em
{
& {\rm t}\phantom{ar} & \qw & \gate{{\rm X}_{\ell_t,\ell_t'}} & \push{\rule{0em}{1em}} \qw & \qw \\
& {\rm c}\phantom{on} & \gate{{\rm X}_{\ell_c,r_{\ell_c}}} & \qw & \gate{{\rm X}_{\ell_c,r_{\ell_c}}} & \qw
\gategroup{1}{3}{2}{5}{1.5em}{--}
}
},
\eeq
where the X-gate acting on the target is driven through a Rydberg level (a $\Lambda$-scheme) as described by the Eq.~(\ref{eq:U3level}).
We also remark that the CP gate Eq.~(\ref{eq:CP_Kj}) is invariant under the exchange of the control and target qudits, i.e. $(1|2) \rightarrow (2|1)$.

In Sec.~\ref{sec:PD} we will be concerned with assessing the cost of the algorithm as counted by the number of elementary 2-gate operations. To this end we introduce the notation $[{\rm G}]$ to denote the cost of a gate ${\rm G}$ as counted by the number of the ${\rm CX}$ gates (\ref{eq:CX_def}) or its equivalents in the decomposition of ${\rm G}$. The corresponding cost of the CP gates is thus
\footnote{For a qubit, the cost of the symmetric gate Eq.~(\ref{eq:CP_symmetric}) can be further reduced to $[{\rm CP}]=d$, cf. Appendix~\ref{app:RQC_BuildingBlocks}}
\begin{subequations}
    \begin{align}
        {\rm Eq. \; (\ref{eq:CP_symmetric}):} \; [{\rm CP}] &= 2 d \\
        {\rm Eq. \; (\ref{eq:CP_Kj}):} \; [{\rm CP}] &= 2 d-2 \label{eq:cost_CP_Kj}
    \end{align}
    \label{eq:CP_cost}
\end{subequations}


~\\
\textit{SWAP gate.} In order to perform the CP gate on a pair of distant graph vertices, it is in general necessary to bring them together by means of a swap gate ${\rm SWAP}_d$. Several possible implementations of the SWAP gate for qudits have been proposed~\cite{Garcia_2013_QIP,Balakrishnan_2014}. Here we shall consider the implementation of Ref.~\cite{Garcia_2013_QIP}, which parallels the qubit SWAP construction consisting of three consecutive CX gates: ${\rm SWAP}_2 = {\rm CX}^{(q|\bar{q})} {\rm CX}^{(\bar{q}|q)} {\rm CX}^{(q|\bar{q})}$, where we have dropped the level indices taking $\ket{1}$ to be the control level for the qubit as customary. The qudit version of the SWAP is given by~\cite{Garcia_2013_QIP}
\beq
    {\rm SWAP}_d = {\rm CX}_d^{(q|\bar{q})} {\rm CX}_d^{(\bar{q}|q)} {\rm CX}_d^{(q|\bar{q})},
    \label{eq:SWAP_d}
\eeq
which is the direct generalization of the SWAP for qubits. The qudit controlled-X ${\rm CX}_d$ is defined as
\beq
    {\rm CX}_d \ket{\ell}\ket{\bar{\ell}} = \ket{\ell}\ket{-\ell-\bar{\ell}}
\eeq
and mod $d$ is understood in evaluating $-\ell-\bar{\ell}$ in the last expression. It can be implemented as
\beq
    {\rm CX}_d^{(q | \bar{q})} = {\rm QFT}_d^{(q)} {\rm CZ}_d^{(q | \bar{q})} {\rm QFT}_d^{(q)},
\eeq
where
\beq
    {\rm QFT}_d \ket{\ell} = \frac{1}{\sqrt{d}} \sum_{\ell'=0}^{d-1} {\rm e}^{i \frac{2\pi}{d} \ell \ell'} \ket{\ell'}
    \label{eq:QFT}
\eeq
is the quantum Fourier transform in the single qudit space (i.e.\ a unitary operation with matrix elements $ \left( {\rm QFT}_d \right)_{\ell', \ell} = {\rm exp}[i (2\pi/d) \ell \ell']/\sqrt{d} $) and ${\rm CZ}_d$ is defined as
\beq
    {\rm CZ}_d \ket{\ell} \ket{\bar{\ell}} = {\rm e}^{i \frac{2\pi}{d} \ell \bar{\ell}} \ket{\ell} \ket{\bar{\ell}}.
    \label{eq:CZd_def}
\eeq
We note that the ${\rm QFT}_d$ gate is a 1-gate and can thus be synthesized using methods of~\cite{OLeary_2006_PRA}, similarly to the mixing unitary for $r>2$, cf. also~\cite{Stroud_2002_JModOpt} for implementation of ${\rm QFT}_d$ in the context of multilevel atoms. To quantify the cost of the SWAP gate, we thus focus on the ${\rm CZ}_d$ gate (\ref{eq:CZd_def}), which can be implemented as
\beq
    {\rm CZ}_d^{(q|\bar{q})} = \prod_{\bar{\ell}=1}^{d-1} X^{(\bar{q})}_{\bar{\ell},r_{\bar{\ell}}} \left[ \prod_{\ell=1}^{d-1} {\rm P}^{(q)}_{\ell,r_\ell}\left(\frac{2\pi}{d} \ell \bar{\ell} \right) \right] X^{(\bar{q})}_{\bar{\ell},r_{\bar{\ell}}},
\eeq
which contains $(d-1)^2$ applications of the phase gate P. Importantly, we note that the product in the brackets can be executed \emph{in parallel} by simultaneous application of the laser pulses connecting each level $\ket{\ell},\; \ell = 1,\ldots,d-1,$ of the target to its respective Rydberg level $\ket{r_\ell}$. This is precisely an example of the parallelization offered by the qudit hardware. We thus get, in conjunction with Eq.~(\ref{eq:SWAP_d}), for the total cost of the qudit SWAP
\beq
    [{\rm SWAP}_d] = 3(d-1).
    \label{eq:cost_SWAPd}
\eeq


\subsection{Initialisation and measurement}

\noindent
\emph{Initialization.} The initial state (\ref{eq:ML_is}) can be prepared by initializing all atoms in the $\ket{0} \equiv \ket{{\mathbb S},m_F=-9/2}$ state by standard means of optical pumping and then applying a sequence of $d-1$ unitaries (\ref{eq:U2level}) such that
\beq
    \ket{+^d} = \left[ \prod_{\ell=1}^{d-1} U_{\ell-1,\ell}^\dag \left( \theta_\ell, \varphi=\frac{\pi}{2} \right) \right]^\dag \ket{0},
\eeq
where $\cos (\theta_\ell/2) = 1/\sqrt{d-(\ell-1)}$. Here the unitary can be implemented either via resonant or off-resonant Raman coupling described in Sec.~\ref{sec:RQC_1gates} for the range $r=1$ and $r=2$ of the mixing unitaries respectively.

~\\
\noindent
\emph{Measurement.}
In order to projectively measure the quantum state after each iteration of the QAOA we consider imaging the atoms using resonance fluorescence by collecting the light scattered from the strongly driven $\ket{0}-\ket{^3{\rm P}_1, F'=11/2, m_{F'}=-11/2}$ transition, cf. Fig~\ref{fig:level_scheme}a.
This has the advantage of simultaneously cooling the atoms in the $\ket{0}$ state while imaging them~\cite{Urech2}, cf. also~\cite{Saskin_2019_PRL, Covey_2019_PRL,madjarov2021entangling} for related techniques.
Exploiting the state-specific detection of individual $m_F$ states, first only the population in the $\ket{0}$ state ($m_F=-9/2$) is being detected. In the case of negative outcome (no population in the $\ket{0}$ state), the population from the adjacent $m_F$ state, i.e.\ from $\ket{1}$ ($m_F =- 7/2$), is transferred to $\ket{0}$ by using optical pumping, or coherently via stimulated Raman adiabatic passage and the $\ket{0}$ is imaged again. This process is repeated until the positive detection of some qudit level $\ket{\ell}$.
This allows for imaging of all the qudit states within the expected lifetime in optical tweezers of $\gtrsim 10$ seconds (as there is no active cooling of the $\ket{1}, \ldots, \ket{d-1}$ states during the imaging). This time is limited mainly by off-resonant scattering of the tweezer light and also by background gas collisions, where the latter can be further reduced by increasing the quality of the vacuum.


\section{\label{sec:PD} Processor design and gate count}

Here we seek to evaluate the cost of the algorithm, cf. Sec.~\ref{sec:Recap} and Sec.~\ref{sec:Int}, as quantified by the gate count. As discussed in Appendix~\ref{app:Errors}, the dominant errors stem mainly from the 2-gates and we thus focus on the 2-gate count. To this end we consider the total count ${\cal C}_{\rm tot}$ of \emph{qubit} primitive 2-gates and specifically we will use the qubit controlled-X $[{\rm CX}]$ as our cost unit (This choice of counting will be useful when comparing the qudit vs. qubit encodings in Sec.~\ref{sec:PD_Comparison}).
${\cal C}_{\rm tot}$ is determined by \emph{(i)} the topology of the graph encoding the clustering problem, \emph{(ii)} the topology of the quantum processor and, for qubits, \emph{(iii)} the encoding scheme, which we discuss in Sec.~\ref{sec:PD_Comparison}.

In this section we will find that taking into account the hardware considerations \emph{(i)-(iii)} yields the expected result, namely that the qudit encoding is superior to the qubit one in that it yields smaller ${\cal C}_{\rm tot}$. Readers not interested in the details of the gate count can consult the summary in the Table~\ref{tab:GateCount}.

To proceed, let us first comment on 2-gates beyond nearest neighbours. The neutral atom and ion based systems possess a native \emph{long-range} interaction, which typically decays as a power law $1/R^\alpha$ with distance ($\alpha=6$ for a pair of Rydberg atoms interacting through a Van der Waals potential). Such potential gives rise to clusters of higher connectivity on the processor, which can lead to an improvement in performance over processors with only nearest-neighbour interactions~\cite{linke2017experimental}. 
The related scaling properties of the quantum gates for trapped ensembles rather than single neutral atoms have been analysed theoretically in~\cite{Saffman_2008_PRA} and such ensembles occupying hundreds of sites have been realized recently experimentally~\cite{wang2020preparation}. Furthermore, claims have been made that up to 50 atoms can be connected without the need of a SWAP operation~\cite{ColdQuanta}. In the here considered implementation using Rydberg blockade and Van der Waals interaction, the price to pay for the higher connectivity is the longer duration of the gate. Taking the basic building block Eq.~(\ref{eq:calU_def}) of the CP gate as an example, the gate duration is mainly given by the duration of the P-gate applied to the target qudit. This is because while the X-gates acting on the control qudit can be performed in principle arbitrarily fast limited only by the available Rabi frequency, the Rabi frequency $\Omega$ used to realize the P-gate has to satisfy the blockade constraint $V(R)/\Omega \gg 1$ for a given atom distance $R$. Since $V(R) \propto 1/R^6$, for the same quality of blockade (same $V(R)/\Omega$) the gate time thus scales as $\propto 1/\Omega \propto R^6$. For this reason and for the sake of clarity, in the following we consider only nearest-neighbour 2-gates and leave the algorithm hardware optimization exploiting longer-range connectivity for future work.

Another remark is that it is desirable to \emph{parallelize} the gate operations, cf. e.g. \cite{Levine_2019_PRL} for recent realization with Rydberg atoms, to reduce the absolute time of the algorithm run. This in principle allows one to reduce the effect of noise such as the background gas collisions or off-resonant scattering from the optical traps, cf. Sec.~\ref{sec:Errors}. Such parallelization however does not change the 2-gate count and we don't elaborate on it further.


\makeatletter\onecolumngrid@push\makeatother

\begin{table}
 \centering
   \begin{adjustbox}{width=\textwidth}
\begin{tabular}{|c|c|c|c||c|}
\hline
& \multicolumn{3}{c||}{Processor geometry} & Gates for binary encoding \\ \hline
\multirowcell{2}[-1cm]{\rotatebox{90}{1D}} & \parbox[c]{2mm}{\rotatebox[origin=c]{90}{qudit}} & 
	\multicolumn{2}{c||}{ \parbox[c][2cm]{5cm}{\includegraphics[scale=0.4]{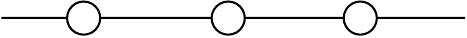}} } & \multirow{2}*{\parbox[c][4cm]{9cm}{
\beq
\underset{{\frak q}=\log_2 d}{\widetilde{\text{CP}}} = 
\raisebox{1cm}{
\scalebox{0.43}{
\Qcircuit @C=0.97em @R=1.2em{
&&& 1  & &  \qw & \ctrl{6} & \qw & \dots &  & \qw & \qw & \dots &   & \qw &\qw &\qw & \qw &\qw &\qw & \qw &\qw & \dots &  & \qw & \qw & \dots &  &\ctrl{6}& \qw\\
&&& \vdots & &&   &        & \ddots & &     &     &  &   &   &     &    &     &     &       &     &    &      &   &     & & \iddots &  & &\\   
&&& {} & & \qw &\qw & \qw & \dots &  & \ctrl{6} & \qw & \dots & & \qw &\qw&\qw & \qw &\qw &\qw & \qw &\qw & \dots &  & \ctrl{6} & \qw & \dots &  &\qw& \qw\\
&&&\vdots & & & &     &       &   &        &  &\ddots&   &  &     &     &   &    &    &     &     & \iddots &   & &        &        & & &\\
&&& {\frak q} & & \qw &\qw & \qw & \dots &  & \qw & \qw & \dots  & & \ctrl{6} &\qw & \qw &\qw&\qw & \qw &\ctrl{6} &\qw  & \dots &  & \qw & \qw & \dots &  &\qw& \qw\\
{}  & &  & & & &   & &  &       &   &          &&&   &  &     &     &   &    &     &   &     &     &  &   &        &        & & & &\\
&&& 1 & & \qw & \targ \qw & \qw &  \dots &  & \qw & \qw & \dots &   & \qw &\gate{X} & \qw &\ctrl{2} & \qw & \gate{X} & \qw &\qw & \dots &  & \qw & \qw & \dots &  &\targ \qw & \qw\\
&&&\vdots & & & &         & \ddots & &     &    &     &   &     & \vdots   &     &    &    &  \vdots   &     &    &      &   &    &  & \iddots &  & & \\   
&&& {} & &  \qw & \qw & \qw  & \dots &  & \targ \qw & \qw & \dots &  & \qw &\gate{X} & \qw &\ctrl{2} & \qw & \gate{X} & \qw &\qw & \dots &  & \targ \qw &\qw &  \dots &  &\qw & \qw\\
&&&\vdots & & &  &    &       &           & &  &\ddots&    &     &  \vdots   &   &    &     &\vdots   &     &     & \iddots &   &        &        & & & \\
&&& {\frak q} & & \qw &\qw & \qw  & \dots &  & \qw         & \qw & \dots &  & \targ \qw &\gate{X} & \qw &\gate{U_3(0,-\phi,0)} & \qw & \gate{X} &\targ \qw &\qw  & \dots & & \qw & \qw & \dots &  &\qw & \qw \gategroup{1}{6}{5}{30}{1.9em}{-} \gategroup{7}{6}{11}{30}{1.9em}{-}\\
}
}
}
\nonumber
\eeq
}
} \\ \cline{2-4}
& \parbox[c]{2mm}{\rotatebox[origin=c]{90}{binary}} & 
	\multicolumn{2}{c||}{\parbox[c][2cm]{5cm}{\includegraphics[scale=0.5]{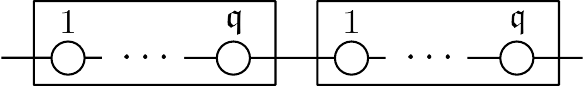}}} & \\ \cline{1-4} \cline{1-4}
\multirowcell{2}[-6cm]{\rotatebox{90}{2D}} & \parbox[c]{2mm}{\rotatebox[origin=c]{90}{qudit}} & \multicolumn{2}{c||} {\parbox[c][4cm]{5cm}{\includegraphics[scale=0.5]{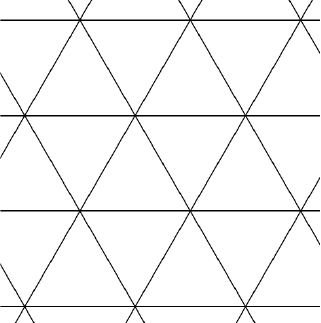}}} & \multirow{2}*{\parbox[c][5cm]{7cm}{
\beq
\underset{{\frak q} \neq \log_2 d}{\widetilde{{\rm CP}}} =
\raisebox{2cm}{
\scalebox{0.6}{
\Qcircuit @C=0.9em @R=0.9em {
& & & 1 & & \multigate{4}{\ N_1\ } & \ctrl{12} & \qw & \qw &\qw & \qw & \ctrl{12} & \multigate{4}{\ N_1\ } & \qw\\
& & &\vdots& &  & & & & & & & & \\
& & & {} & & \ghost{\ N_1\ } \qw & \ctrl{10} & \qw & \qw &\qw & \qw & \ctrl{10} & \ghost{\ N_2\ } \qw & \qw\\
& & &\vdots& &  & & & & & & & & \\
& & & {\frak q} & & \ghost{\ N_1\ }  & \ctrl{7} & \qw & \qw  &\qw & \qw & \ctrl{7} & \ghost{\ N_2\ }  & \qw\\
{} & & & & & & & & & & & & &\\
& & & 1 &  & \multigate{4}{\ N_2\ } &\qw  & \ctrl{7} & \qw & \ctrl{7} & \qw & \qw & \multigate{4}{\ N_2\ } & \qw \\
& & & \vdots& &  & & & & & & & & \\
& & & {} &  &\ghost{\ N_2\ } \qw & \qw & \ctrl{5} & \qw & \ctrl{5} & \qw & \qw &\ghost{\ N_2\ } \qw & \qw \\
& & & \vdots& & & & & & & & & & \\
& & & {\frak q} &  &\ghost{\ N_2\ } \qw  & \qw & \ctrl{3} & \gate{U_3(0,-\phi,0)} & \ctrl{3} & \qw & \qw &\ghost{\ N_2\ } \qw & \qw \\
& & &  & & & & & & & & & &\\
& & &  a_0 & &\qw & \targ \qw  & \qw &  \qw & \qw &  \qw & \targ \qw & \qw & \qw\\
 & & & a_1 & &\qw & \qw & \targ \qw & \qw  &\targ \qw &\qw & \qw &  \qw & \qw \gategroup{1}{6}{5}{13}{1em}{-} \gategroup{7}{6}{11}{13}{1em}{-} \gategroup{13}{6}{14}{13}{1em}{-} 
    }
  }
} \nonumber
\eeq
}} \\	\cline{2-4}
& \parbox[c]{2mm}{\multirow{3}{*}{}} & \rotatebox[origin=c]{90}{$\begin{gathered} {\frak q}=2 \end{gathered}$} & \parbox[c][3cm]{6cm}{$ \begin{gathered} \includegraphics[scale=0.5]{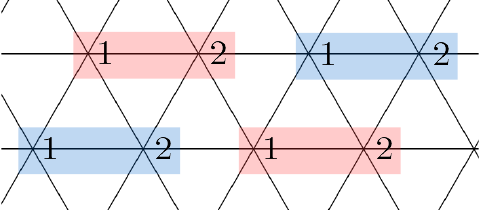} \end{gathered}$} & \\	\cline{3-4}
& \rotatebox[origin=c]{90}{binary} & \rotatebox[origin=c]{90}{$\begin{gathered} {\frak q}=3 \end{gathered}$} & \parbox[c][5cm]{7cm}{$\begin{gathered} \includegraphics[scale=0.4]{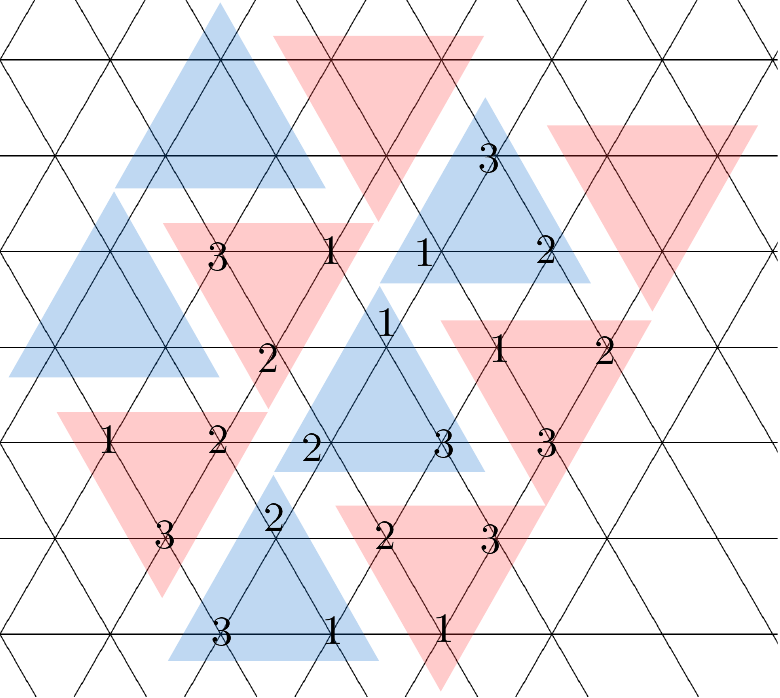} \end{gathered}$} & \parbox[c][5cm]{7cm}{
\beqa
&& \phantom{XXXXX}
\scalebox{0.85}{
\Qcircuit @C=.4em @R=0em @! {
 \lstick{1} & \ctrl{1} & \qw & \\
 \lstick{{\rm C}^2(U) = 2} & \ctrl{1} & \qw & \\
 \lstick{3} & \gate{U} & \qw & 
}
}
\nonumber \\[0.3cm]
&&
\scalebox{0.8}{
\Qcircuit @C=.4em @R=0em @! {
\lstick{1} & \qw & \ctrl{1} & \qw & \ctrl{1} & \ctrl{2} & \qw\\
\lstick{\phantom{{\rm C}^2(U)} = 2} & \ctrl{1} & \targ & \ctrl{1} & \targ & \qw & \qw\\
\lstick{3} & \gate{V} & \qw & \gate{V^\dag} & \qw & \gate{V} & \qw
}
}
\nonumber
\eeqa
} \\ \cline{3-4}
&& \rotatebox[origin=c]{90}{$\begin{gathered} {\frak q}=4 \end{gathered}$} & \parbox[c][4cm]{7cm}{$\begin{gathered} \includegraphics[scale=0.4]{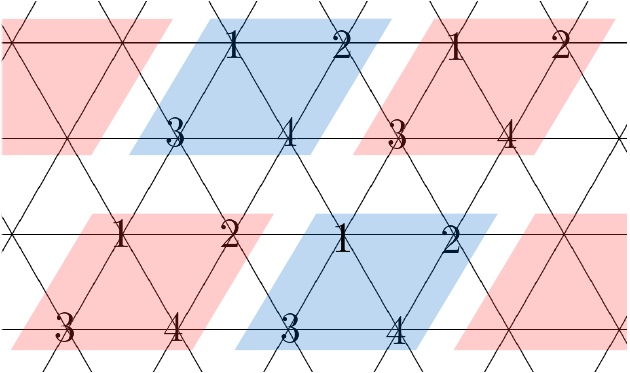} \end{gathered}$} & \hspace*{-0.5cm}\parbox[c][5cm]{7cm}{
\beqa
&& \phantom{XXXXXXXXXXX}
\scalebox{0.85}{
\Qcircuit @C=.3em @R=0em @! {
& \lstick{1}&               \ctrl{1} & \qw & \\
& \lstick{2} & \ctrl{1} & \qw & \\
\lstick{\raisebox{2.2em}{$\text{C}^3(U)=$}} & \lstick{3}&  \ctrl{1} & \qw & \\      
& \lstick{4}&                 \gate{U} & \qw &
}
}
\nonumber \\[0.5cm]
&&
\scalebox{0.65}{
\Qcircuit @C=.3em @R=0em @! {
& \lstick{1} & \ctrl{3} & \ctrl{1} & \qw & \ctrl{1} & \qw & \qw &\qw &\ctrl{2} & \qw & \qw &\qw & \ctrl{2} & \qw &\qw \\
& \lstick{2} & \qw      & \targ    & \ctrl{2} & \targ &\ctrl{2} & \ctrl{1} & \qw & \qw &\qw & \ctrl{1} &\qw &\qw &\qw &\qw\\
\lstick{\raisebox{2.2em}{$\phantom{\text{C}^3(U)}=$}} & \lstick{3} & \qw      & \qw      & \qw      & \qw   & \qw      &\targ \qw &\ctrl{1} &\targ \qw& \ctrl{1} &\targ \qw & \ctrl{1} &\targ \qw & \ctrl{1} &\qw \\
& \lstick{4} & \gate{V} & \qw   & \gate{V^\dag} & \qw & \gate{V} & \qw      & \gate{V^\dag} & \qw & \gate{V}& \qw      & \gate{V^\dag} & \qw & \gate{V} &\qw
}
}
\nonumber
\eeqa
}\\
\hline
\end{tabular}
  \end{adjustbox}
\caption{
\emph{Left column:} Considered processor geometries in 1D (upper two rows) and 2D (lower four rows) for both the qudits and the binary encoding. In 2D the shaded blue and red areas highlight the effective qudit (e-dit) in the binary encoding including the considered enumeration of the qubits.
\emph{Right column:} The construction of the $\widetilde{\rm CP}$ gates for binary encoding. Construction for both ${\frak q} = \log_2 d$ and ${\frak q} \neq \log_2 d$ case is shown (based on Ref.~\cite{fuchs2021efficient}) together with the decomposition of the multi-controlled ${\rm C}^{{\frak q}-1}(U)$ gate for ${\frak q=4}$ and ${\frak q}=3$ (based on Ref.~\cite{Barenco_1995_PRA}). The unitaries $V$ satisfy $V^2 = U$ and $V^4 = U$ in the decomposition of ${\rm C}^2(U)$ and ${\rm C}^3(U)$ respectively \cite{Barenco_1995_PRA}.
}
\label{tab:architectures}
\end{table}

\clearpage
\makeatletter\onecolumngrid@pop\makeatother

\begin{widetext}

\begin{table}
 \centering
 \begin{adjustbox}{width=\textwidth}
 \begin{tabular}{|c|c|c|c|c|c|c|c|}
 \hline
 \rotatebox[origin=c]{90}{Geometry} & \rotatebox[origin=c]{90}{Encoding} && $n^{\rm inter}_{\rm SWAP}$
 & 
	\begin{minipage}{3cm}
		\beqa
			&& [\widetilde{\rm SWAP}] = \nonumber \\
			&=& {\frak q}^2 [{\rm SWAP}_2] \nonumber \\
			&=& 3 \times {\frak q}^2 [{\rm CX}] \nonumber
		\eeqa
	\end{minipage}
 & 
	\begin{minipage}{3.5cm}
		\beqa
			&& n^{\rm intra}_{{\rm SWAP}_2} [{\rm SWAP}_2] = \nonumber \\
			&=& 3 {\frak q}({\frak q}-1) [{\rm SWAP}_2] \nonumber \\
				&=& 3 \times 3 {\frak q}({\frak q}-1) [{\rm CX}] \nonumber
		\eeqa
	\end{minipage}  
 &
 $\left[ {\rm C}^{{\frak q}-1}(U) \right]$ & ${\cal C}_{\rm tot}$ \\ 
 \hline
 \multirow{7}{*}{1D} & \parbox[t]{2mm}{\multirow{3}{*}{\rotatebox[origin=c]{90}{binary}}} 
 & \cbOne ${\frak q}=2$ & \multirow{7}{*}{$\frac{(N-1)(N-2)}{2}$} & \cbOne $3 \times 4$ & \cbOne $3 \times 6$ & \cbOne $1 [\rm CX] = 1$ & \cbOne {\color{RoyalPurple} $|E| \times  35 - O(N) \times 12$} \\ 
 \cline{3-3} \cline{5-8}
 && \cbTwo ${\frak q}=3$ && \cbTwo $3 \times 9$ & \cbTwo $3 \times 18$ & \cbTwo $5[{\rm CX}] + 2[{\rm SWAP}_2] = 11$ & \cbTwo {\color{RoyalPurple} $|E| \times 98 - O(N) \times 27$} \\ \cline{3-3} \cline{5-8}  
 && \cbThree ${\frak q}=4$ && \cbThree $3 \times 16$ & \cbThree $3 \times 36$ & \cbThree $13[{\rm CX}] + 12[{\rm SWAP}_2] = 49$ & \cbThree {\color{RoyalPurple} $|E| \times 213 - O(N) \times 48$} \\ \cline{2-3} \cline{5-8}
 &&&& $[{\rm SWAP}_d]=3(d-1)$ & \multicolumn{2}{ c|}{$[{\rm CP}]=2(d-1)$} &\\ \cline{2-3} \cline{5-8} 
 & \parbox[t]{2mm}{\multirow{3}{*}{\rotatebox[origin=c]{90}{qudit}}} 
 &  \cbOne ${\frak q}=2$ && \cbOne 9  & \multicolumn{2}{ c|}{\cbOne 6}  & \cbOne {\color{RoyalPurple} $|E| \times {\bf 15} - O(N) \times 9$} \\ \cline{3-3} \cline{5-8} 
 && \cbTwo ${\frak q}=3$ && \cbTwo 21 & \multicolumn{2}{ c|}{\cbTwo 14} & \cbTwo {\color{RoyalPurple} $|E| \times {\bf 35} - O(N) \times 21$} \\ \cline{3-3} \cline{5-8}  
 && \cbThree ${\frak q}=4$ && \cbThree 45 & \multicolumn{2}{ c|}{\cbThree 30} & \cbThree {\color{RoyalPurple} $|E| \times {\bf 75} - O(N) \times 45$} \\ \cline{3-3} \cline{5-8}
 \hline 
 \hline 
 &&& $n^{\rm inter}_{\rm SWAP}$
 & $[\widetilde{\rm SWAP}]$
 & $n^{\rm intra}_{{\rm SWAP}_2} [{\rm SWAP}_2]$
 &
 $\left[ {\rm C}^{{\frak q}-1}(U) \right]$ & ${\cal C}_{\rm tot}$ \\ 
 \hline
 \multirow{7}{*}{2D} & \parbox[t]{2mm}{\multirow{3}{*}{\rotatebox[origin=c]{90}{binary}}} 
 & \crOne ${\frak q}=2$ & \multirow{7}{*}{$O(N)$} & \crOne $3 \times 3\frac{1}{3}^*$ & \crOne $3 \times 4^*$ & \crOne $1 [\rm CX] =1$ & \crOne {\color{RoyalPurple} $|E| \times 17^* + O(N) \times 10^*$} \\ 
 \cline{3-3} \cline{5-8}
 && \crTwo ${\frak q}=3$ && \crTwo $3 \times 8^*$ & \crTwo $3 \times 6^*$ & \crTwo $5[{\rm CX}] = 5$ & \crTwo {\color{RoyalPurple} $|E| \times 29^* + O(N) \times 24^* $} \\ \cline{3-3} \cline{5-8}  
 && \crThree ${\frak q}=4$ && \crThree $3 \times 9\frac{1}{3}^*$ & \crThree $3 \times 12\frac{2}{3}^*$ & \crThree $13[{\rm CX}] + 2[{\rm SWAP}_2] = 19$ & \crThree {\color{RoyalPurple} $|E| \times 65^* + O(N) \times 28^*$} \\ \cline{2-3} \cline{5-8}
 &&&& $[{\rm SWAP}_d]=3(d-1)$ & \multicolumn{2}{ c|}{$[{\rm CP}]=2(d-1)$} &\\ \cline{2-3} \cline{5-8} 
 & \parbox[t]{2mm}{\multirow{3}{*}{\rotatebox[origin=c]{90}{qudit}}} 
 &  \crOne ${\frak q}=2$ && \crOne 9  & \multicolumn{2}{ c|}{\crOne 6}  & \crOne {\color{RoyalPurple} \bf $|E| \times {\bf 6} + O(N) \times 9$} \\ \cline{3-3} \cline{5-8} 
 && \crTwo ${\frak q}=3$ && \crTwo 21 & \multicolumn{2}{ c|}{\crTwo 14} & \crTwo {\color{RoyalPurple} \bf $|E| \times {\bf 14} + O(N) \times 21$} \\ \cline{3-3} \cline{5-8}  
 && \crThree ${\frak q}=4$ && \crThree 45 & \multicolumn{2}{ c|}{\crThree 30} & \crThree {\color{RoyalPurple} \bf $|E| \times {\bf 30} + O(N) \times 45$} \\ \cline{3-3} \cline{5-8} 
 \hline 
\end{tabular}
\end{adjustbox}
\caption{
Gate count as per Eqs. (\ref{eq:Ctot}) for qudits and qubit binary encoding for 1D (blue shaded lines) and 2D (red shaded lines) processor geometries. 
The cost of the $\widetilde{\rm CP}$ gate is evaluated using the circuits from Ref.~\cite{fuchs2021efficient} as well as the standard decomposition of multi-controlled qubit gates shown in the right column of Table~\ref{tab:architectures}.
The cost of the $\widetilde{\rm SWAP}$ gate is evaluated according to the qubit arrangements shown in the left column of Table~\ref{tab:architectures}. In 2D, it is obtained as a weighted average over the neighbours (for ${\frak q}=2,3,4$, each e-dit has four, three and two neighbours to which it is connected by one leg and two, three and four neighbours to which it is connected by three legs).
This might result in a fractional value of $[\widetilde{\rm SWAP}]$ and gate counts stemming from such weighted average are denoted by a star. The total gate count ${\cal C}_{\rm tot}$ of the algorithm is indicated in the last column in purple. For the qubit encoding, ${\cal C}_{\rm tot}$ also includes the contribution $2 {\frak q} [{\rm CX}]$ from (\ref{eq:CP_tilde_cost}) not listed in the Table. For complete graphs considered here, the dominant contribution to ${\cal C}_{\rm tot}$ is coming from the number of edges $|E| = N(N-1)/2 = O(N^2)$. This is highlighted in bold for qudits in the last column. It is apparent from ${\cal C}_{\rm tot}$ that the qudit encoding is superior to the (best-case scenario ${\frak q} = \log_2 d$) binary one in all cases.}
\label{tab:GateCount}
\end{table}

\end{widetext}
%
\subsection{Gate count and comparison to qubits}
\label{sec:PD_Comparison}
%
%
As stated in the introduction, the use of qudits in general offers a resource-efficient alternative to qubit encodings and for certain problems, such as correlation clustering, also a natural physical platform for their implementation. To proceed with the comparison between qudits and qubits, we first specify the qubit encoding. In what follows, we denote by tilde a qudit-like gate acting on the effective qudit encoded in qubits. We also term such an effective qudit an \emph{e-dit} to distinguish it from the physical qudit.

Effective controlled-phase gates $\widetilde{\rm CP}$ equivalent to (\ref{eq:CP_gate}) have been proposed in~\cite{fuchs2021efficient}. 
The Ref.~\cite{fuchs2021efficient} studied the MAX-$k$-cut using QAOA and considered two possible encodings, a one-hot and a binary one. The one-hot encoding seems to produce smaller 2-gate count when ${\frak q} \neq \log_2 d$ and larger when ${\frak q} = \log_2 d$, the precise numbers depending on the graph topology~\cite{fuchs2021efficient}.
Here
\beq
	{\frak q} = \lceil \log_2 d \rceil,
	\label{eq:q}
\eeq
is the number of qubits necessary to contain the qudit Hilbert space as subspace such that the two become equal when ${\frak q} = \log_2 d$. On the other hand, the one-hot encoding is more resource extensive than the binary encoding and we therefore choose the binary one. In the scheme of Ref.~\cite{fuchs2021efficient} ${\frak q} = \log_2 d$ and ${\frak q} \neq \log_2 d$ correspond to two different realizations of $\widetilde{\rm CP}$, which we list in Table~\ref{tab:architectures} for the reader's convenience. 

Next, we have to specify the processor topology. The versatility of neutral atom platforms allows in principle for arbitrary arrangements of the atoms, which can be exploited for efficient encoding of the graph instance at hand. For specificity, we perform the gate count on one of the examples of main interest, the complete graph, for which $|E|=N(N-1)/2$.
Motivated by ongoing experiments \cite{Barredo2016,Endres_2016_Science,Barredo_2018_Nature, scholl2020programmable,semeghini2021probing, Levine_2019_PRL,Madjarov_2020_NatPhys}, we consider a simple 1D chain (with open boundaries) and a 2D regular lattice. In 2D we consider a triangular lattice, which provides the densest packing of atoms.
%
The 2-gate count is given by
\begin{subequations}
	\begin{align}
			{\rm qubits:} \; & {\cal C}_{\rm tot} = n^{\rm inter}_{\widetilde{\rm SWAP}} [\widetilde{\rm SWAP}] + |E| [\widetilde{\rm CP}] \label{eq:Ctot_qubit} \\
			{\rm qudits:} \; & {\cal C}_{\rm tot} = n^{\rm inter}_{{\rm SWAP}_d} [{\rm SWAP}_d] + |E| [{\rm CP}],
			\label{eq:Ctot_qudit}
	\end{align}
	\label{eq:Ctot}
\end{subequations}
where $n^{\rm inter}_{\widetilde{\rm SWAP}}$, $n^{\rm inter}_{{\rm SWAP}_d}$ count the number of SWAPs between the e-dits and qudits respectively such that each vertex has been a neighbour of every other vertex at least once. 
In principle we can perform the gate count for any $d$. We note, that due to the $\widetilde{\rm CP}$ gate structure for ${\frak q} \neq \log_2 d$, the total count is relatively much higher than for ${\frak q} = \log_2 d$ (Table 1 of~\cite{fuchs2021efficient} gives $[\widetilde{\rm CP}]=$\underline{2},70,\underline{6},206,142,78,\underline{14} for $d=$\underline{2},3,\underline{4},5,6,7,\underline{8}, where the values with underline correspond to ${\frak q}=\log_2 d$). By contrast, since the gate structure for qudits is the same for all $d$ and since our main goal here is to compare the gate count for qudits and qubits, we focus only on the ${\frak q} = \log_2 d$ case. The reason for this is that for a given ratio ${\cal C}_{\rm tot}^{\rm qudits}/{\cal C}_{\rm tot}^{\rm qubits}$ of gate counts for qudits and qubits for ${\frak q} = \log_2 d$, this ratio will be only smaller when ${\frak q} \neq \log_2 d$. In this case, the cost of $\widetilde{\rm CP}$ can be further decomposed as
\beq
	[\widetilde{\rm CP}] = n^{\rm intra}_{\rm SWAP_2} [{\rm SWAP}_2] + 2 {\frak q} [{\rm CX}] + [{\rm C}^{{\frak q}-1}(U)],
	\label{eq:CP_tilde_cost}
\eeq
where $n^{\rm intra}_{\rm SWAP_2}$ counts the number of SWAPs within the e-dit and ${\rm C}^{{\frak q}-1}(U)$ is the multicontrolled unitary performed within the target e-dit, cf. Table~\ref{tab:architectures}. The unitary $U$ in ${\rm C}^{{\frak q}-1}(U)$ is a single qubit gate and its particular form is not relevant for the counting (cf. Ref.~\cite{fuchs2021efficient} for details). 

~\\
\emph{1D.} Let us first describe the situation in 1D. Here, $n^{\rm inter}_{\widetilde{\rm SWAP}} = n^{\rm inter}_{{\rm SWAP}_d}$ indicates the number of SWAPs between qudits (or e-dits) such that each qudit (or e-dit) has been a neighbour of all the others. Here we shall use a rather natural choice of SWAP sequences described in~\cite{Kivlichan_2018_PRL,Babbush_2018_PRX}. It consists of a repeated application of two layers of SWAPs, one performing SWAP operations on qudit (e-dit) pairs $1-2, 3-4,\ldots$ followed by the other on pairs $2-3,4-5,\ldots$. This yields $n^{\rm inter}_{\widetilde{\rm SWAP}} = n^{\rm inter}_{{\rm SWAP}_d} = (N-1)(N-2)/2$, see also Appendix~\ref{app:HW}. Furthermore we also have $|E|=N(N-1)/2$ such that Eq.~(\ref{eq:Ctot_qudit}) together with Eq.~(\ref{eq:cost_CP_Kj}) and Eq.~(\ref{eq:cost_SWAPd}) yield
\beq
	{\cal C}_{\rm tot} = \frac{(5N-6)(N-1)}{2}(d-1) = 
	|E| \times 5(d-1) - O(N).
\eeq

To analyse the qubit case, let us start with the analysis of the ${\rm C}^{\frak q-1}(U)$ gate. Ref.~\cite{Liu_2007} describes a systematic construction of ${\rm C}^{\frak q}(U)$ for arbitrary $\frak q$ proposing two schemes for such a construction, one scaling exponentially and the other one polynomially with $\frak q$. While the polynomial scaling is clearly favourable for large $\frak q$, the actual gate count favours the exponential scheme for $\frak q \leq 4$ considered here. We thus consider the decomposition for ${\rm C}^2(U)$ and ${\rm C}^3(U)$ shown in Table~\ref{tab:architectures}, proposed in the early works on quantum information~\cite{Barenco_1995_PRA, Sleator_1995_PRL}. The considered ordering within the e-dit shown in Table~\ref{tab:architectures} leads to the following counts as function of ${\frak q}$
\begin{subequations}
	\begin{align}
		[\widetilde{\rm SWAP}] &={\frak q}^2 [{\rm SWAP_{2}}] \\
		n^{\rm intra}_{\rm SWAP_2} &=3 {\frak q}({\frak q} - 1).
	\end{align}
\end{subequations}
The total gate count for qudits and e-dits and their break down as per (\ref{eq:Ctot}) is summarized in Table~\ref{tab:GateCount}.

~\\
\emph{2D.} For qudits, the cost (\ref{eq:Ctot_qudit}) carries over to two dimensions. For qubits, the situation is more involved and we shall analyse only the specific cases ${\frak q}=2,3,4$. In Table~\ref{tab:architectures} we show a possible arrangement of the e-dits including the qubit labelling within the e-dit. We note that the effective lattice geometry composed of the e-dits retains the topology of the triangular lattice in that each e-dit has six nearest neighbours and consequently $n^{\rm inter}_{{\rm SWAP}_2} = n^{\rm inter}_{{\rm SWAP}_d}$ as in the 1D case. However the difference with qudits is that the e-dit lattice is ``anisotropic'', namely for ${\frak q}=2,3,4$, each e-dit has four, three and two neighbours to which it is connected by one leg and two, three and four neighbours to which it is connected by three legs respectively.

We also note that the proposed tilings implementing the e-dits are not necessarily unique. 

In order to determine $n^{\rm inter}_{{\rm SWAP}}$, one can apply a generalization of the alternating SWAP sequence from the 1D case, cf.~\cite{Babbush_2018_PRX}, which yields a scaling $O(N)$ for the number of SWAPs between the qudits (or e-dits). Since $|E|=N(N-1)/2 = O(N^2)$, this gives a subleading contribution to the gate count and we do not elaborate on the precise sequence further.

We are thus left with evaluating the cost of $[\widetilde{\rm SWAP}]$ for the e-dit SWAP and $[\widetilde{\rm CP}]$ which takes into account the respective e-dit and processor geometries. The summary of the costs for 2D is given in Table~\ref{tab:GateCount}.

As a result, we find the expected outcome, namely that in all considered cases the gate count is lower for qudit encoding than for qubit binary encoding, even for the best-case scenario ${\frak q} = \log_2 d$ of the latter.

Here we have evaluated the gate count considering the realization of the SWAP gates via the Rydberg interactions. A remark is that the gate count of a sequence of consecutive SWAPs can be further reduced by considering so-called bridge gates, leading however only to a modest improvement by a factor $\approx 1.5$~\cite{Itoko_2020}. In the context of neutral atoms in optical tweezers, it would be interesting to exploit a strength of these platforms and perform the SWAP by physically exchanging the atoms, which for distances $\sim 5\, \mu{\rm m}$ can be done on the timescale of $\sim 50\,\mu{\rm s}$~\cite{Schymik_2020_PRA}.



\section{\label{sec:Errors} Errors}

In this section we consider an error model used in Ref.~\cite{Auger_2017_PRA} in the theoretical analysis of a Rydberg quantum computer and we discuss the implications of the errors for the algorithm performance. 
Importantly, the error model of Ref.~\cite{Auger_2017_PRA} can be cast in the \emph{unitary} evolution framework used in our work. We comment on the actual experimental errors and how they relate to the considered error model in Appendix~\ref{app:Errors}.

~\\
\noindent
\emph{Unitary error model.}
Let us consider a set of $d^2$ single qudit unitaries ${\mathbb U} \equiv \{ (\Sigma^X)^r (\Sigma^Z)^s\}$, where $r,s = 0,\ldots,d-1$. Here $\Sigma^X = \Sigma^x + (\Sigma^x)^\dag$, $\Sigma^x$ is given by Eq.~(\ref{eq:Sigma_x}) and
\beq
\Sigma^Z = 
\begin{bmatrix}
1 & 0 & \ldots & 0 \\
0 & \lambda & & \vdots \\
\vdots & & \ddots & 0 \\
0 & \ldots & 0 & \lambda^{d-1}
\end{bmatrix}
\eeq
with $\lambda = {\rm exp}(i 2\pi/d)$ is the generalized Pauli Z.
For a qubit, $d=2$, this reduces to ${\mathbb U} = \{{\mathds 1}, \Sigma^X, \Sigma^Z, \Sigma^X \Sigma^Z \}$. Motivated by the experimental considerations, namely the fact that the errors are dominated by the 2-gate ones, cf.\  Appendix~\ref{app:Errors}, the model consists of applying an identity with probability $1-p_2$ or a unitary $U \in {\mathbb U}^{\otimes 2} \setminus \{ \mathds{1} \}$, 
with probability $p_2/(|{\mathbb U}^{\otimes 2} \setminus \{ \mathds 1 \}|) = p_2/(|{\mathbb U}^{\otimes 2}|-1)$ on each pair of qudits after each cost unitary $U_C = {\rm e}^{-i H_C}$, cf.\ Eq.\ (\ref{eq:HC}). Put formally, for $\rho =  U_C \rho' U_C^\dag$
\beqa
    \rho &\rightarrow& \rho \;\; {\rm with \; prob.}\;1-p_2 \nonumber \\
    \rho &\rightarrow& U \rho U^\dag, \; U \in {\mathbb U}^{\otimes 2} \setminus \{ \mathds{1} \}, \;\; {\rm with \; prob.}\;\frac{p_2}{|{\mathbb U}^{\otimes 2} \setminus \{ \mathds 1 \} |} \nonumber \\
    \label{eq:err_model}
\eeqa
We consider the same data sets for complete graphs as have been used in Section \ref{sec:P} for $N\in\{3,4,5\}$, and use the \emph{optimized} values for $\gamma,\beta$ and $d$ obtained in the noise-free setting. This way makes it possible to discard the classical optimization loop, saving computation time, and allows us to focus on the performance degradation as a result of the randomization of the state vector. It has been argued in Ref.~\cite{Xue2019} that noise generated by dephasing, bit flip, and depolarizing channels tends to flatten (on average) the parameter space energy landscape without changing its structure. Since the error model (\ref{eq:err_model}) is a qudit generalization of these types of channels, we expect the $\gamma,\beta$ obtained in the noise-free setting to be optimal also in the noisy setting.

The results are shown in Figure~\ref{fig:error_results}. For $p_2$ small ($ \lesssim 10^{-3}$), we see that the performance is hardly affected by the noise. Once $p_2$ increases, we enter the regime where performance quickly degrades until we reach the performance of the completely randomized state. Whilst the  performance at $p_2=1$ is considerably smaller than in the noise-free setting, the approximation ratio achieved on average is still relatively large. This is due to the fact that $d$ is already pre-determined: instances with $d=1$ are not affected by the noise and still maintain a large approximation ratio. We define the threshold noise~$p_{2,\text{Th}}$ (threshold amount of 2-gates $g_\text{Th}$) as the noise level (amount of 2-gates) for which the QAOA has lost half of its performance as compared to random guessing, on average for all instances. 

By determining the values of $p_{2,\text{Th}}$ from the data in Figures~\ref{fig:error_results}a-c, and using that the amount of two-qubit gates is simply $pN(N-1)/2$ (with $p$ the QAOA depth), we plot $g_\text{Th}$ as a function of $p_2$ in Figure~\ref{fig:error_results}d. We observe that our data are compatible with a linear dependence for $g_\text{Th}$ of the form
\beq
    g_{\text{Th}}=\frac{\kappa}{p_2}, \;\; \kappa=0.84.
    \label{eq:Error_scaling}
\eeq
This naive model for the noise shows scaling similar to that of Ref.~\cite{Franca2020} -- here the authors showed that a bound on the circuit depth scales inversely proportional to the quantum gate error -- but now in the amount of operations instead of circuit depth.

It is interesting to interpret the results of Fig.~\ref{fig:error_results}d in terms of the achievable system sizes and required hardware gate operations. Denoting the error probability of an elementary CX-like hardware gate as $p_{\rm CX}$ we get for the total success probability (no error) after application of $|E|=g$ gates on the $|E|$ edges
\beqa
    p_{\rm success} &=& (1-p_2)^{|E|} \nonumber \\
    &=& (1-[{\rm CP}] p_{\rm CX})^{|E|} (1-[{\rm SWAP}_d] p_{\rm CX})^{O(N)} \nonumber \\
    &\approx & (1-[{\rm CP}] p_{\rm CX})^{|E|},
    \nonumber
\eeqa
where in the last relation we used that $p_2, p_{\rm CX} \ll 1$, $|E| \gg N$ while $[{\rm CP}] \approx [{\rm SWAP}_d]$. This allows us to identify $p_2 \approx [{\rm CP}] p_{\rm CX}$. Using the result of Table~\ref{tab:GateCount}, considering $d=8$ for specificity, and using $|E|=N(N-1)/2$, we find that for $p_2 = (10^{-2},10^{-3},10^{-4})$, $p_{\rm CX} \approx (10^{-3},10^{-4},10^{-5})$ resulting in $N \approx 13, 41, 130$
\footnote{A technical remark is that strictly speaking the formula (\ref{eq:Error_scaling}) was verified when the number of clusters $d$ saturates the number of vertices $N$. Keeping this in mind, the discussed achievable system sizes should thus be considered merely as an estimate.
}.

We thus see that the inclusion of the errors strongly limits the scaling of the algorithm to large problem instances. We note that similar limitations of the QAOA due to experimental errors have been discussed recently in~\cite{harrigan2021quantum} and~\cite{Franca2020}. 
In particular, it has been argued in Refs.~\cite{Franca2020,harrigan2021quantum} that hardware-native graph instances, in case of Ref.~\cite{harrigan2021quantum} a simple square lattice with nearest-neighbour interactions, are less prone to errors and potentially allow to scale up to problem sizes large enough to achieve quantum advantage.
This is due to significant simplification of the quantum circuit, which avoids a number of extra compilation steps (such as SWAP gates).
The major downside is however that a quantum computer hardware is, typically, not application or instance specific.

In this respect, the neutral atom based platforms seem to be particularly interesting as they allow for implementing graph topologies beyond simple planar ones by exploiting the atom arrangements in three dimensions and the long-range interactions. Nevertheless, as is apparent from the results in this section, the robustness to the errors is directly related to the graph topology. Specifically, for the complete graph considered here, the errors strongly limit the scaling even if it is native to the hardware, see also the discussion in Sec.~\ref{sec:Conclusions}.

\begin{figure}
\centering
    \includegraphics[width=\linewidth]{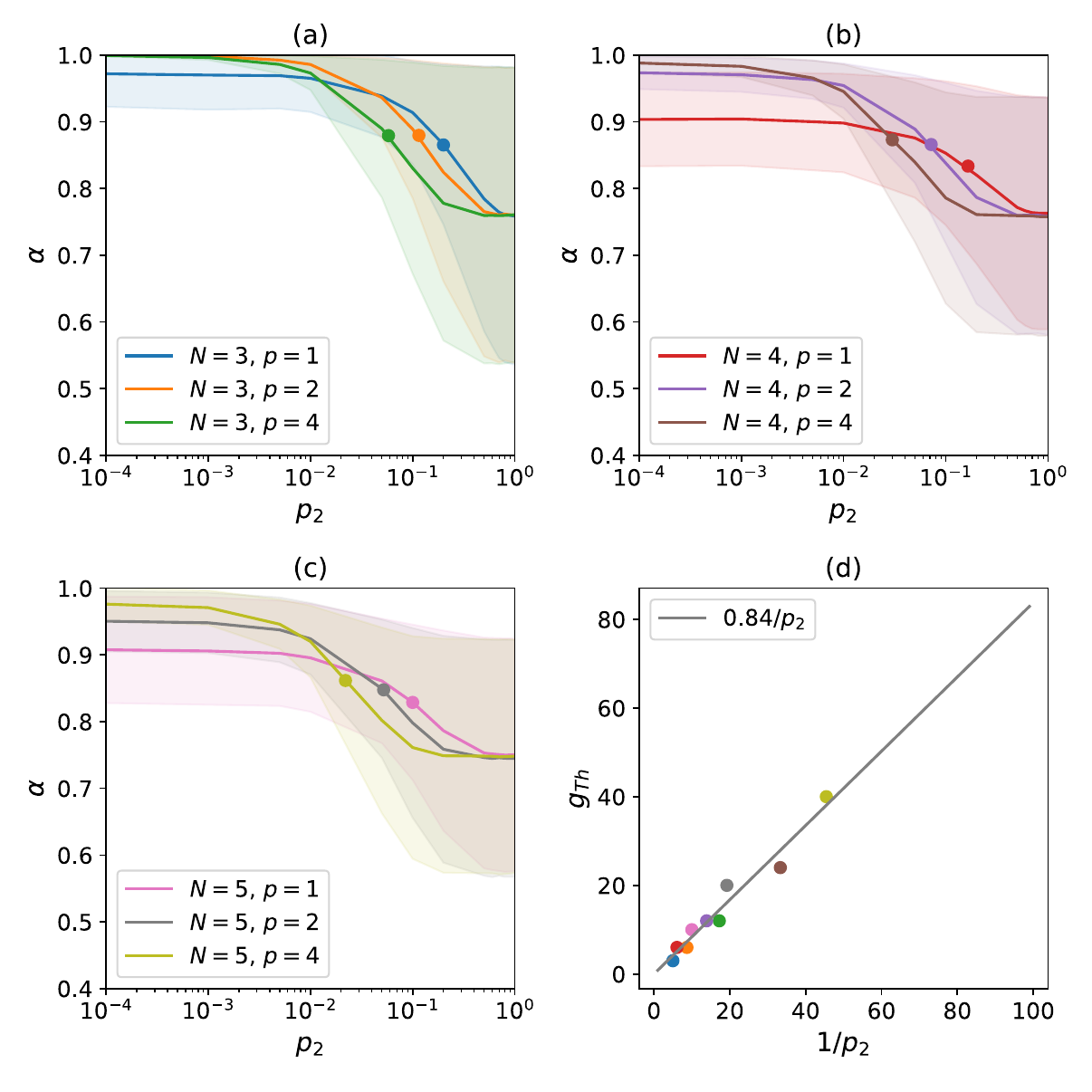}
    \caption{{\bf (a)-(c)} Numerical results for the QAOA with a error channel given by~\eqref{eq:err_model} for $N=3,4,5$, respectively. The plots are the average approximation ratios $\alpha$ over all instances in the complete graph data sets and the shaded areas correspond to one standard deviation. The filled circles data points correspond to the threshold noise for which the QAOA has lost half of its performance over random guessing. {\bf (d)} Linear fit for $g_{\rm Th} \propto 1/p_2$, individual colours correspond to the respective data points where $p_2=p_{2,\text{Th}}$ in (a), (b) and (c).}
    \label{fig:error_results}
\end{figure}



\section{\label{sec:Conclusions} Conclusions and outlook}

In this work we have addressed the problem of solving the correlation clustering using QAOA and a qudit quantum computer. We have specifically considered a neutral atom based architecture, which has the potential to offer up to $\sim$1000 qudits in the near future~\cite{scholl2020programmable,semeghini2021probing,Morgado2020}. Here the gates are realized through the interaction of atoms in a highly excited electronic state, a so-called Rydberg state. Considering specifically the element $\;{}^{87}{\rm Sr}$ we have identified a suitable level structure for the qudit, which in turn allowed us to design the gates for implementing the QAOA on the quantum processor. It is worth emphasizing that while we have considered correlation clustering, the discussed qudit gates can be used for the closely related MAX-$k$-CUT and MAX $k$-VERTEX COVER problems, cf. \cite{fuchs2021efficient} and~\cite{cook2020quantum,bartschi2020grover,wang2020preparation} respectively.

We assessed the algorithm's performance by numerical simulations including various optimization strategies, namely restarts, optimizing the initial points and looping over the cluster number. Focusing specifically on complete and Erd\H{o}s-R\'enyi graphs of up to 7 vertices and clusters we found that in all studied cases the QAOA with depth $p\geq 2$ provides approximation ratios above the Swamy bound 0.7666 \cite{Swamy2004}, corresponding to the best known classical strategy (based on SDP) with performance guarantees for MAXAGREE. Modifying and adopting recently developed classical simulation methods \cite{2021npjQI...7..101M} might allow us to study larger graph instances, which we leave to future work.

While this result is encouraging, the inclusion of errors suggests that it is challenging for the QAOA to outperform classical algorithms, cf. Sec.~\ref{sec:Int}, at least for complete graphs on near-term noisy quantum devices. This is in agreement with related results reported recently in Ref.~\cite{Franca2020} and Ref.~\cite{harrigan2021quantum}, where various graph instances were considered to compare the performance of the QAOA using a superconducting quantum chip, including the effect of the errors. 


In this respect, the neutral atom based platforms seem to be particularly interesting and our work indicate possible directions in the design of experiments for benchmarking the QAOA on correlation clustering instances.
Not only do neutral atoms allow to assemble arbitrary structures in both 2D and 3D~\cite{Barredo2016,Endres_2016_Science,Barredo_2018_Nature,Schymik_2020_PRA} (a PTAS does exist for planar graphs~\cite{Klein2015}, but one could still investigate the rate of convergence), but they also allow for native long-range interactions, known to yield an advantage over quantum processors featuring only nearest-neighbour ones~\cite{linke2017experimental}. Such long-range interactions in turn allow for implementation of \emph{non-planar} graphs~\cite{harrigan2021quantum} as native geometry using even a planar arrangement of atoms and at the same time avoid the need for additional SWAP gates.
One should however keep in mind that for complete graphs, the errors strongly limit the scaling even when this graph topology is native to the hardware, cf. Sec.~\ref{sec:Errors}. This suggests, together with the results of Refs.~\cite{Franca2020,harrigan2021quantum}, that the robustness of (low depth) QAOA increases with the decrease of the graph degree. Additional improvement in the performance might be also achieved when using multiqudit gates, which can further reduce the circuit depth~\cite{Kiktenko_2020_PRA}.
In this context, it would be also interesting to consider the (native) realizations of unit disk graph instances~\cite{2018arXiv180810816P,2018arXiv180904954P}. Interestingly, and to the best of our knowledge, the NP-hardness of the correlation clustering problem on unit disk graphs remains an open question~\cite{Daz2007MAXCUTAM,10.1007/978-3-319-44914-2_3}.

It would be highly interesting to address systematically the above listed scenarios, which we leave for future work. It should be also emphasized that while achieving a practical quantum advantage using the QAOA remains a challenge, constructing a qudit quantum processor is a task worth pursuing - it constitutes an exquisite tool for applications beyond the QAOA, allowing for instance for the realization of a plethora of condensed matter models such as the $d-$state Potts and other SU($d$) spin systems~\cite{Wu_1982_RMP, nataf2014exact, nataf2016exact, kim2017linear, nataf2018density,dufour2015variational,romen2020structure,chen2015quantum,d2015renyi,song2013mott,Corboz_2012_PRB,Corboz_2013_PRB,Nataf_2016_PRB,Corboz_2011_PRL,Bauer_2012_PRB,weichselbaum2018unified}.

    


\section*{Acknowledgements}
We thank Alexander Brinkman for his supervision of the master's project~\cite{Weggemans2020} which led to this publication and Koen Groenland for useful discussions. This work was carried out on the Dutch national e-infrastructure with the support of SURF Cooperative and supported by the Dutch Ministry of Economic Affairs and Climate Policy (EZK), as part of the Quantum Delta NL programme
and by the Netherlands Organization for Scientific Research (NWO) under the Gravitation grant No. 024.003.037 and the Quantum Software Consortium.
AR acknowledges the support of the German Federal Ministry of Education and Research in the funding program “quantum technologies – from basic research to market” (contract number 13N15585).

~\\
\noindent
{\bf Author contributions.}
JW and FSp have performed the analysis and numerical simulations of the algorithm. AR has supported the analysis of the numerical simulations. RB encouraged JW to investigate performance guarantees and supervised the findings of this work.  JM and KS have designed the gates and with AU, RS and FSc have proposed and analysed the hardware implementation. 
All authors have contributed to the writing of the manuscript.



\bibliographystyle{plainnat}
\bibliography{main.bib}

\begin{thebibliography}{154}
\providecommand{\natexlab}[1]{#1}
\providecommand{\url}[1]{\texttt{#1}}
\expandafter\ifx\csname urlstyle\endcsname\relax
  \providecommand{\doi}[1]{doi: #1}\else
  \providecommand{\doi}{doi: \begingroup \urlstyle{rm}\Url}\fi

\bibitem[Ato()]{AtomComputing}
Atom computing.
\newblock \url {https://www.atom-computing.com/}.

\bibitem[Col()]{ColdQuanta}
Coldquanta.
\newblock \url {https://coldquanta.com/}.

\bibitem[Pas()]{Pasqal}
Pasqal.
\newblock \url {https://pasqal.io/}.

\bibitem[QuE()]{QuEra}
Quera.
\newblock \url {https://www.quera-computing.com/}.

\bibitem[Sci()]{Scipy}
Scipy documentation.
\newblock \url {https://www.scipy.org}.

\bibitem[Ailon et~al.(2008)Ailon, Charikar, and Newman]{Ailon2008}
Nir Ailon, Moses Charikar, and Alantha Newman.
\newblock Aggregating inconsistent information: Ranking and clustering.
\newblock \emph{J. ACM}, 55\penalty0 (5), November 2008.
\newblock ISSN 0004-5411.
\newblock \doi{10.1145/1411509.1411513}.

\bibitem[{Akshay} et~al.(2020){Akshay}, {Philathong}, {Morales}, and
  {Biamonte}]{Akshay2020}
V.~{Akshay}, H.~{Philathong}, M.~E.~S. {Morales}, and J.~D. {Biamonte}.
\newblock {Reachability Deficits in Quantum Approximate Optimization}.
\newblock \emph{\prl}, 124\penalty0 (9):\penalty0 090504, March 2020.
\newblock \doi{10.1103/PhysRevLett.124.090504}.

\bibitem[{Akshay} et~al.(2021){Akshay}, {Rabinovich}, {Campos}, and
  {Biamonte}]{Akshay2021}
V.~{Akshay}, D.~{Rabinovich}, E.~{Campos}, and J.~{Biamonte}.
\newblock {Parameter concentrations in quantum approximate optimization}.
\newblock \emph{\pra}, 104\penalty0 (1):\penalty0 L010401, July 2021.
\newblock \doi{10.1103/PhysRevA.104.L010401}.

\bibitem[{Albert} et~al.(2020){Albert}, {Covey}, and
  {Preskill}]{Albert_2020_PRX}
Victor~V. {Albert}, Jacob~P. {Covey}, and John {Preskill}.
\newblock {Robust Encoding of a Qubit in a Molecule}.
\newblock \emph{Physical Review X}, 10\penalty0 (3):\penalty0 031050, July
  2020.
\newblock \doi{10.1103/PhysRevX.10.031050}.

\bibitem[Alexeev et~al.(2021)Alexeev, Bacon, Brown, Calderbank, Carr, Chong,
  DeMarco, Englund, Farhi, Fefferman, Gorshkov, Houck, Kim, Kimmel, Lange,
  Lloyd, Lukin, Maslov, Maunz, Monroe, Preskill, Roetteler, Savage, and
  Thompson]{Alexeev2019}
Yuri Alexeev, Dave Bacon, Kenneth~R. Brown, Robert Calderbank, Lincoln~D. Carr,
  Frederic~T. Chong, Brian DeMarco, Dirk Englund, Edward Farhi, Bill Fefferman,
  Alexey~V. Gorshkov, Andrew Houck, Jungsang Kim, Shelby Kimmel, Michael Lange,
  Seth Lloyd, Mikhail~D. Lukin, Dmitri Maslov, Peter Maunz, Christopher Monroe,
  John Preskill, Martin Roetteler, Martin~J. Savage, and Jeff Thompson.
\newblock Quantum computer systems for scientific discovery.
\newblock \emph{PRX Quantum}, 2:\penalty0 017001, Feb 2021.
\newblock \doi{10.1103/PRXQuantum.2.017001}.

\bibitem[Auger et~al.(2017)Auger, Bergamini, and Browne]{Auger_2017_PRA}
James~M. Auger, Silvia Bergamini, and Dan~E. Browne.
\newblock Blueprint for fault-tolerant quantum computation with rydberg atoms.
\newblock \emph{Phys. Rev. A}, 96:\penalty0 052320, Nov 2017.
\newblock \doi{10.1103/PhysRevA.96.052320}.

\bibitem[Babazadeh et~al.(2017)Babazadeh, Erhard, Wang, Malik, Nouroozi, Krenn,
  and Zeilinger]{Babazadeh_2017_PRL}
Amin Babazadeh, Manuel Erhard, Feiran Wang, Mehul Malik, Rahman Nouroozi, Mario
  Krenn, and Anton Zeilinger.
\newblock High-dimensional single-photon quantum gates: Concepts and
  experiments.
\newblock \emph{Phys. Rev. Lett.}, 119:\penalty0 180510, Nov 2017.
\newblock \doi{10.1103/PhysRevLett.119.180510}.

\bibitem[Babbush et~al.(2018)Babbush, Wiebe, McClean, McClain, Neven, and
  Chan]{Babbush_2018_PRX}
Ryan Babbush, Nathan Wiebe, Jarrod McClean, James McClain, Hartmut Neven, and
  Garnet Kin-Lic Chan.
\newblock Low-depth quantum simulation of materials.
\newblock \emph{Phys. Rev. X}, 8:\penalty0 011044, Mar 2018.
\newblock \doi{10.1103/PhysRevX.8.011044}.

\bibitem[{Bagon} and {Galun}(2011)]{Bagon2011}
Shai {Bagon} and Meirav {Galun}.
\newblock {Large Scale Correlation Clustering Optimization}.
\newblock \emph{arXiv e-prints}, art. arXiv:1112.2903, December 2011.
\newblock \doi{10.48550/arXiv.1112.2903}.

\bibitem[Balakrishnan(2014)]{Balakrishnan_2014}
S~Balakrishnan.
\newblock Various constructions of qudit swap gate.
\newblock \emph{Physics Research International}, 2014, 2014.
\newblock \doi{10.1155/2014/479320}.

\bibitem[Bansal et~al.(2004)Bansal, Blum, and Chawla]{Bansal2004}
Nikhil Bansal, Avrim Blum, and Shuchi Chawla.
\newblock Correlation clustering.
\newblock \emph{Mach. Learn.}, 56\penalty0 (1-3):\penalty0 89--113, 2004.
\newblock \doi{10.1023/B:MACH.0000033116.57574.95}.

\bibitem[Barenco et~al.(1995)Barenco, Bennett, Cleve, DiVincenzo, Margolus,
  Shor, Sleator, Smolin, and Weinfurter]{Barenco_1995_PRA}
Adriano Barenco, Charles~H. Bennett, Richard Cleve, David~P. DiVincenzo, Norman
  Margolus, Peter Shor, Tycho Sleator, John~A. Smolin, and Harald Weinfurter.
\newblock Elementary gates for quantum computation.
\newblock \emph{Phys. Rev. A}, 52:\penalty0 3457--3467, Nov 1995.
\newblock \doi{10.1103/PhysRevA.52.3457}.

\bibitem[Barredo et~al.(2016)Barredo, de~L{\'{e}}s{\'{e}}leuc, Lienhard,
  Lahaye, and Browaeys]{Barredo2016}
Daniel Barredo, Sylvain de~L{\'{e}}s{\'{e}}leuc, Vincent Lienhard, Thierry
  Lahaye, and Antoine Browaeys.
\newblock {An atom-by-atom assembler of defect-free arbitrary two-dimensional
  atomic arrays}.
\newblock \emph{Science}, 354\penalty0 (6315):\penalty0 1021--1023, nov 2016.
\newblock ISSN 0036-8075.
\newblock \doi{10.1126/science.aah3778}.

\bibitem[{Barredo} et~al.(2018){Barredo}, {Lienhard}, {de L{\'e}s{\'e}leuc},
  {Lahaye}, and {Browaeys}]{Barredo_2018_Nature}
Daniel {Barredo}, Vincent {Lienhard}, Sylvain {de L{\'e}s{\'e}leuc}, Thierry
  {Lahaye}, and Antoine {Browaeys}.
\newblock {Synthetic three-dimensional atomic structures assembled atom by
  atom}.
\newblock \emph{\nat}, 561\penalty0 (7721):\penalty0 79--82, September 2018.
\newblock \doi{10.1038/s41586-018-0450-2}.

\bibitem[Bauer et~al.(2012)Bauer, Corboz, L\"auchli, Messio, Penc, Troyer, and
  Mila]{Bauer_2012_PRB}
Bela Bauer, Philippe Corboz, Andreas~M. L\"auchli, Laura Messio, Karlo Penc,
  Matthias Troyer, and Fr\'ed\'eric Mila.
\newblock Three-sublattice order in the su(3) heisenberg model on the square
  and triangular lattice.
\newblock \emph{Phys. Rev. B}, 85:\penalty0 125116, Mar 2012.
\newblock \doi{10.1103/PhysRevB.85.125116}.

\bibitem[Benjelloun et~al.(2009)Benjelloun, Garcia-Molina, Menestrina, Su,
  Whang, and Widom]{benjelloun2009swoosh}
Omar Benjelloun, Hector Garcia-Molina, David Menestrina, Qi~Su, Steven~Euijong
  Whang, and Jennifer Widom.
\newblock Swoosh: a generic approach to entity resolution.
\newblock \emph{The VLDB Journal}, 18\penalty0 (1):\penalty0 255--276, 2009.
\newblock \doi{10.1007/s00778-008-0098-x}.

\bibitem[Bent et~al.(2015)Bent, Qassim, Tahir, Sych, Leuchs, S\'anchez-Soto,
  Karimi, and Boyd]{Bent_2015_PRX}
N.~Bent, H.~Qassim, A.~A. Tahir, D.~Sych, G.~Leuchs, L.~L. S\'anchez-Soto,
  E.~Karimi, and R.~W. Boyd.
\newblock Experimental realization of quantum tomography of photonic qudits via
  symmetric informationally complete positive operator-valued measures.
\newblock \emph{Phys. Rev. X}, 5:\penalty0 041006, Oct 2015.
\newblock \doi{10.1103/PhysRevX.5.041006}.

\bibitem[Boyd et~al.(2007)Boyd, Zelevinsky, Ludlow, Blatt, Zanon-Willette,
  Foreman, and Ye]{Boyd_2007_PRA}
Martin~M. Boyd, Tanya Zelevinsky, Andrew~D. Ludlow, Sebastian Blatt, Thomas
  Zanon-Willette, Seth~M. Foreman, and Jun Ye.
\newblock Nuclear spin effects in optical lattice clocks.
\newblock \emph{Phys. Rev. A}, 76:\penalty0 022510, Aug 2007.
\newblock \doi{10.1103/PhysRevA.76.022510}.

\bibitem[{Brandao} et~al.(2018){Brandao}, {Broughton}, {Farhi}, {Gutmann}, and
  {Neven}]{Brandao2018}
Fernando G.~S.~L. {Brandao}, Michael {Broughton}, Edward {Farhi}, Sam
  {Gutmann}, and Hartmut {Neven}.
\newblock {For Fixed Control Parameters the Quantum Approximate Optimization
  Algorithm's Objective Function Value Concentrates for Typical Instances}.
\newblock \emph{arXiv e-prints}, December 2018.
\newblock \doi{10.48550/arXiv.1812.04170}.

\bibitem[{Bravyi} et~al.(2020{\natexlab{a}}){Bravyi}, {Kliesch}, {Koenig}, and
  {Tang}]{Bravyi2020}
Sergey {Bravyi}, Alexander {Kliesch}, Robert {Koenig}, and Eugene {Tang}.
\newblock {Hybrid quantum-classical algorithms for approximate graph coloring}.
\newblock \emph{arXiv e-prints}, November 2020{\natexlab{a}}.
\newblock \doi{10.48550/arXiv.2011.13420}.

\bibitem[{Bravyi} et~al.(2020{\natexlab{b}}){Bravyi}, {Kliesch}, {Koenig}, and
  {Tang}]{Bravyi2020_2}
Sergey {Bravyi}, Alexander {Kliesch}, Robert {Koenig}, and Eugene {Tang}.
\newblock {Obstacles to Variational Quantum Optimization from Symmetry
  Protection}.
\newblock \emph{\prl}, 125\penalty0 (26):\penalty0 260505, December
  2020{\natexlab{b}}.
\newblock \doi{10.1103/PhysRevLett.125.260505}.

\bibitem[{Browaeys} and {Lahaye}(2020)]{browaeys2020many}
Antoine {Browaeys} and Thierry {Lahaye}.
\newblock {Many-body physics with individually controlled Rydberg atoms}.
\newblock \emph{Nature Physics}, 16\penalty0 (2):\penalty0 132--142, January
  2020.
\newblock \doi{10.1038/s41567-019-0733-z}.

\bibitem[Bärtschi and Eidenbenz(2020)]{bartschi2020grover}
Andreas Bärtschi and Stephan Eidenbenz.
\newblock Grover mixers for qaoa: Shifting complexity from mixer design to
  state preparation.
\newblock In \emph{2020 IEEE International Conference on Quantum Computing and
  Engineering (QCE)}, pages 72--82, 2020.
\newblock \doi{10.1109/QCE49297.2020.00020}.

\bibitem[Caron(2007)]{lpsolvers2007}
Stéphane Caron.
\newblock lpsolvers.
\newblock \url{https://pypi.org/project/lpsolvers/}, 2007.

\bibitem[Charikar et~al.(2005)Charikar, Guruswami, and Wirth]{Charikar2005}
Moses Charikar, Venkatesan Guruswami, and Anthony Wirth.
\newblock Clustering with qualitative information.
\newblock \emph{J. Comput. Syst. Sci.}, 71:\penalty0 360--383, 10 2005.
\newblock \doi{10.1016/j.jcss.2004.10.012}.

\bibitem[{Chen} et~al.(2015){Chen}, {Xue}, {McCulloch}, {Chung}, {Huang}, and
  {Yip}]{chen2015quantum}
Pochung {Chen}, Zhi-Long {Xue}, I.~P. {McCulloch}, Ming-Chiang {Chung},
  Chao-Chun {Huang}, and S.~K. {Yip}.
\newblock {Quantum Critical Spin-2 Chain with Emergent SU(3) Symmetry}.
\newblock \emph{\prl}, 114\penalty0 (14):\penalty0 145301, April 2015.
\newblock \doi{10.1103/PhysRevLett.114.145301}.

\bibitem[Claes and Dam(2021)]{Claes2021}
Jahan Claes and Wim~van Dam.
\newblock Instance {I}ndependence of {S}ingle {L}ayer {Q}uantum {A}pproximate
  {O}ptimization {A}lgorithm on {M}ixed-{S}pin {M}odels at {I}nfinite {S}ize.
\newblock \emph{{Quantum}}, 5:\penalty0 542, September 2021.
\newblock ISSN 2521-327X.
\newblock \doi{10.22331/q-2021-09-15-542}.

\bibitem[Cook et~al.(2020)Cook, Eidenbenz, and Bartschi]{cook2020quantum}
Jeremy Cook, Stephan Eidenbenz, and Andreas Bartschi.
\newblock The quantum alternating operator ansatz on maximum k-vertex cover.
\newblock pages 83--92, 10 2020.
\newblock \doi{10.1109/QCE49297.2020.00021}.

\bibitem[{Corboz} et~al.(2011){Corboz}, {L{\"a}uchli}, {Penc}, {Troyer}, and
  {Mila}]{Corboz_2011_PRL}
Philippe {Corboz}, Andreas~M. {L{\"a}uchli}, Karlo {Penc}, Matthias {Troyer},
  and Fr{\'e}d{\'e}ric {Mila}.
\newblock {Simultaneous Dimerization and SU(4) Symmetry Breaking of 4-Color
  Fermions on the Square Lattice}.
\newblock \emph{\prl}, 107\penalty0 (21):\penalty0 215301, November 2011.
\newblock \doi{10.1103/PhysRevLett.107.215301}.

\bibitem[{Corboz} et~al.(2012){Corboz}, {Penc}, {Mila}, and
  {L{\"a}uchli}]{Corboz_2012_PRB}
Philippe {Corboz}, Karlo {Penc}, Fr{\'e}d{\'e}ric {Mila}, and Andreas~M.
  {L{\"a}uchli}.
\newblock {Simplex solids in SU(N) Heisenberg models on the kagome and
  checkerboard lattices}.
\newblock \emph{\prb}, 86\penalty0 (4):\penalty0 041106, July 2012.
\newblock \doi{10.1103/PhysRevB.86.041106}.

\bibitem[{Corboz} et~al.(2013){Corboz}, {Lajk{\'o}}, {Penc}, {Mila}, and
  {L{\"a}uchli}]{Corboz_2013_PRB}
Philippe {Corboz}, Mikl{\'o}s {Lajk{\'o}}, Karlo {Penc}, Fr{\'e}d{\'e}ric
  {Mila}, and Andreas~M. {L{\"a}uchli}.
\newblock {Competing states in the SU(3) Heisenberg model on the honeycomb
  lattice: Plaquette valence-bond crystal versus dimerized color-ordered
  state}.
\newblock \emph{\prb}, 87\penalty0 (19):\penalty0 195113, May 2013.
\newblock \doi{10.1103/PhysRevB.87.195113}.

\bibitem[Covey et~al.(2019)Covey, Madjarov, Cooper, and Endres]{Covey_2019_PRL}
Jacob~P. Covey, Ivaylo~S. Madjarov, Alexandre Cooper, and Manuel Endres.
\newblock 2000-times repeated imaging of strontium atoms in clock-magic tweezer
  arrays.
\newblock \emph{Phys. Rev. Lett.}, 122:\penalty0 173201, May 2019.
\newblock \doi{10.1103/PhysRevLett.122.173201}.

\bibitem[Dalyac et~al.(2021)Dalyac, Henriet, Jeandel, Lechner, Perdrix,
  Porcheron, and Veshchezerova]{Dalyac2021}
Constantin Dalyac, Loïc Henriet, Emmanuel Jeandel, Wolfgang Lechner, Simon
  Perdrix, Marc Porcheron, and Margarita Veshchezerova.
\newblock Qualifying quantum approaches for hard industrial optimization
  problems. a case study in the field of smart-charging of electric vehicles.
\newblock \emph{EPJ Quantum Technology}, 8\penalty0 (12), May 2021.
\newblock \doi{10.1140/epjqt/s40507-021-00100-3}.

\bibitem[{D'Emidio} et~al.(2015){D'Emidio}, {Block}, and {Kaul}]{d2015renyi}
Jonathan {D'Emidio}, Matthew~S. {Block}, and Ribhu~K. {Kaul}.
\newblock {R{\'e}nyi entanglement entropy of critical SU (N ) spin chains}.
\newblock \emph{\prb}, 92\penalty0 (5):\penalty0 054411, August 2015.
\newblock \doi{10.1103/PhysRevB.92.054411}.

\bibitem[DeSalvo et~al.(2010)DeSalvo, Yan, Mickelson, Escobar, and
  Killian]{DeSalvo_2010_PRL}
B~DeSalvo, Mi~Yan, P~Mickelson, Y.~Escobar, and T~Killian.
\newblock Degenerate fermi gas of sr-87.
\newblock \emph{Physical Review Letters}, 105:\penalty0 030402, 07 2010.
\newblock \doi{10.1103/PhysRevLett.105.030402}.

\bibitem[Deutsch and Jessen(1998)]{Deutsch_1998_PRA}
Ivan~H. Deutsch and Poul~S. Jessen.
\newblock Quantum-state control in optical lattices.
\newblock \emph{Phys. Rev. A}, 57:\penalty0 1972--1986, Mar 1998.
\newblock \doi{10.1103/PhysRevA.57.1972}.

\bibitem[Developers(2021)]{Cirq2021}
Cirq Developers.
\newblock Cirq, May 2021.
\newblock {See full list of authors on Github: https://github
  .com/quantumlib/Cirq/graphs/contributors}.

\bibitem[D{\'i}az and Kaminski(2007)]{Daz2007MAXCUTAM}
Josep D{\'i}az and Marcin~Jakub Kaminski.
\newblock Max-cut and max-bisection are np-hard on unit disk graphs.
\newblock \emph{Theor. Comput. Sci.}, 377:\penalty0 271--276, 2007.
\newblock \doi{10.1016/j.tcs.2007.02.013}.

\bibitem[{Dogra} et~al.(2014){Dogra}, {Arvind}, and
  {Dorai}]{dogra2014determining}
Shruti {Dogra}, {Arvind}, and Kavita {Dorai}.
\newblock {Determining the parity of a permutation using an experimental NMR
  qutrit}.
\newblock \emph{Physics Letters A}, 378\penalty0 (46):\penalty0 3452--3456,
  October 2014.
\newblock \doi{10.1016/j.physleta.2014.10.003}.

\bibitem[{Dufour} et~al.(2015){Dufour}, {Nataf}, and
  {Mila}]{dufour2015variational}
J{\'e}r{\^o}me {Dufour}, Pierre {Nataf}, and Fr{\'e}d{\'e}ric {Mila}.
\newblock {Variational Monte Carlo investigation of SU (N ) Heisenberg chains}.
\newblock \emph{\prb}, 91\penalty0 (17):\penalty0 174427, May 2015.
\newblock \doi{10.1103/PhysRevB.91.174427}.

\bibitem[Elsner and Schudy(2009)]{Elsner2009}
Micha Elsner and Warren Schudy.
\newblock Bounding and comparing methods for correlation clustering beyond ilp.
\newblock In \emph{Proceedings of the Workshop on Integer Linear Programming
  for Natural Langauge Processing}, ILP '09, page 19–27, USA, 2009.
  Association for Computational Linguistics.
\newblock ISBN 9781932432350.
\newblock \doi{10.3115/1611638.1611641}.

\bibitem[Emanuel and Fiat(2003)]{Emanuel2003}
Dotan Emanuel and Amos Fiat.
\newblock Correlation clustering -- minimizing disagreements on arbitrary
  weighted graphs.
\newblock In Giuseppe Di~Battista and Uri Zwick, editors, \emph{Algorithms -
  ESA 2003}, pages 208--220, Berlin, Heidelberg, 2003. Springer Berlin
  Heidelberg.
\newblock ISBN 978-3-540-39658-1.
\newblock \doi{10.1007/978-3-540-39658-1_21}.

\bibitem[{Endres} et~al.(2016){Endres}, {Bernien}, {Keesling}, {Levine},
  {Anschuetz}, {Krajenbrink}, {Senko}, {Vuletic}, {Greiner}, and
  {Lukin}]{Endres_2016_Science}
Manuel {Endres}, Hannes {Bernien}, Alexander {Keesling}, Harry {Levine},
  Eric~R. {Anschuetz}, Alexandre {Krajenbrink}, Crystal {Senko}, Vladan
  {Vuletic}, Markus {Greiner}, and Mikhail~D. {Lukin}.
\newblock {Atom-by-atom assembly of defect-free one-dimensional cold atom
  arrays}.
\newblock \emph{Science}, 354\penalty0 (6315):\penalty0 1024--1027, November
  2016.
\newblock \doi{10.1126/science.aah3752}.

\bibitem[{Farhi} and {Harrow}(2016)]{Farhi2016}
Edward {Farhi} and Aram~W {Harrow}.
\newblock {Quantum Supremacy through the Quantum Approximate Optimization
  Algorithm}.
\newblock \emph{arXiv e-prints}, February 2016.
\newblock \doi{10.48550/arXiv.1602.07674}.

\bibitem[{Farhi} et~al.(2014{\natexlab{a}}){Farhi}, {Goldstone}, and
  {Gutmann}]{Farhi2014}
Edward {Farhi}, Jeffrey {Goldstone}, and Sam {Gutmann}.
\newblock {A Quantum Approximate Optimization Algorithm}.
\newblock \emph{arXiv e-prints}, November 2014{\natexlab{a}}.
\newblock \doi{10.48550/arXiv.1411.4028}.

\bibitem[{Farhi} et~al.(2014{\natexlab{b}}){Farhi}, {Goldstone}, and
  {Gutmann}]{farhi2014quantum}
Edward {Farhi}, Jeffrey {Goldstone}, and Sam {Gutmann}.
\newblock {A Quantum Approximate Optimization Algorithm Applied to a Bounded
  Occurrence Constraint Problem}.
\newblock \emph{arXiv e-prints}, December 2014{\natexlab{b}}.
\newblock \doi{10.48550/arXiv.1412.6062}.

\bibitem[{Farhi} et~al.(2019){Farhi}, {Goldstone}, {Gutmann}, and
  {Zhou}]{Farhi2019}
Edward {Farhi}, Jeffrey {Goldstone}, Sam {Gutmann}, and Leo {Zhou}.
\newblock {The Quantum Approximate Optimization Algorithm and the
  Sherrington-Kirkpatrick Model at Infinite Size}.
\newblock \emph{arXiv e-prints}, October 2019.
\newblock \doi{10.48550/arXiv.1910.08187}.

\bibitem[{Farhi} et~al.(2020){Farhi}, {Gamarnik}, and {Gutmann}]{Farhi2020}
Edward {Farhi}, David {Gamarnik}, and Sam {Gutmann}.
\newblock {The Quantum Approximate Optimization Algorithm Needs to See the
  Whole Graph: A Typical Case}.
\newblock \emph{arXiv e-prints}, April 2020.
\newblock \doi{10.48550/arXiv.2004.09002}.

\bibitem[Fuchs et~al.(2021)Fuchs, Kolden, Aase, and Sartor]{fuchs2021efficient}
Franz~G Fuchs, Herman~{\O}ie Kolden, Niels~Henrik Aase, and Giorgio Sartor.
\newblock Efficient encoding of the weighted max k-cut on a quantum computer
  using qaoa.
\newblock \emph{SN Computer Science}, 2\penalty0 (2):\penalty0 1--14, 2021.
\newblock \doi{10.1007/s42979-020-00437-z}.

\bibitem[Fukuhara et~al.(2007)Fukuhara, Takasu, Kumakura, and
  Takahashi]{Fukuhara_2007_PRL}
Takeshi Fukuhara, Yosuke Takasu, Mitsutaka Kumakura, and Yoshiro Takahashi.
\newblock Degenerate fermi gases of ytterbium.
\newblock \emph{Phys. Rev. Lett.}, 98:\penalty0 030401, Jan 2007.
\newblock \doi{10.1103/PhysRevLett.98.030401}.

\bibitem[Gallagher(1994)]{gallagher2005rydberg}
Thomas~F. Gallagher.
\newblock \emph{Rydberg Atoms}.
\newblock Cambridge Monographs on Atomic, Molecular and Chemical Physics.
  Cambridge University Press, 1994.
\newblock \doi{10.1017/CBO9780511524530}.

\bibitem[Garcia-Escartin and Chamorro-Posada(2013)]{Garcia_2013_QIP}
Juan~Carlos Garcia-Escartin and Pedro Chamorro-Posada.
\newblock A swap gate for qudits.
\newblock \emph{Quantum information processing}, 12\penalty0 (12):\penalty0
  3625--3631, 2013.
\newblock \doi{10.1007/s11128-013-0621-x}.

\bibitem[{Gedik} et~al.(2015){Gedik}, {Silva}, {{\c{C}}akmak}, {Karpat},
  {Vidoto}, {Soares-Pinto}, {Deazevedo}, and
  {Fanchini}]{gedik2015computational}
Z.~{Gedik}, I.~A. {Silva}, B.~{{\c{C}}akmak}, G.~{Karpat}, E.~L.~G. {Vidoto},
  D.~O. {Soares-Pinto}, E.~R. {Deazevedo}, and F.~F. {Fanchini}.
\newblock {Computational speed-up with a single qudit}.
\newblock \emph{Scientific Reports}, 5:\penalty0 14671, October 2015.
\newblock \doi{10.1038/srep14671}.

\bibitem[Gionis et~al.(2007)Gionis, Mannila, and Tsaparas]{Gionis2007}
Aristides Gionis, Heikki Mannila, and Panayiotis Tsaparas.
\newblock Clustering aggregation.
\newblock \emph{ACM Trans. Knowl. Discov. Data}, 1\penalty0 (1):\penalty0
  4–es, March 2007.
\newblock ISSN 1556-4681.
\newblock \doi{10.1145/1217299.1217303}.

\bibitem[Giotis and Guruswami(2006)]{Giotis2006}
Ioannis Giotis and Venkatesan Guruswami.
\newblock Correlation clustering with a fixed number of clusters.
\newblock In \emph{In Proc. 17th ACM-SIAM SODA}, pages 1167--1176. ACM Press,
  2006.
\newblock \doi{10.4086/toc.2006.v002a013}.

\bibitem[Goder and Filkov(2008)]{Goder2008}
Andrey Goder and Vladimir Filkov.
\newblock Consensus clustering algorithms: Comparison and refinement.
\newblock In \emph{Proceedings of the Meeting on Algorithm Engineering \&
  Expermiments}, page 109–117, USA, 2008. Society for Industrial and Applied
  Mathematics.
\newblock \doi{10.1137/1.9781611972887.11}.

\bibitem[{Guerreschi} and {Matsuura}(2019)]{guerreschi2019qaoa}
G.~G. {Guerreschi} and A.~Y. {Matsuura}.
\newblock {QAOA for Max-Cut requires hundreds of qubits for quantum speed-up}.
\newblock \emph{Scientific Reports}, 9:\penalty0 6903, May 2019.
\newblock \doi{10.1038/s41598-019-43176-9}.

\bibitem[{Hadfield} et~al.(2017){Hadfield}, {Wang}, {O'Gorman}, {Rieffel},
  {Venturelli}, and {Biswas}]{Hadfield2019}
Stuart {Hadfield}, Zhihui {Wang}, Bryan {O'Gorman}, Eleanor~G. {Rieffel},
  Davide {Venturelli}, and Rupak {Biswas}.
\newblock {From the Quantum Approximate Optimization Algorithm to a Quantum
  Alternating Operator Ansatz}.
\newblock \emph{arXiv e-prints}, September 2017.
\newblock \doi{10.48550/arXiv.1709.03489}.

\bibitem[Harrigan et~al.(2021)Harrigan, Sung, Neeley, Satzinger, Arute, Arya,
  Atalaya, Bardin, Barends, Boixo, et~al.]{harrigan2021quantum}
Matthew~P Harrigan, Kevin~J Sung, Matthew Neeley, Kevin~J Satzinger, Frank
  Arute, Kunal Arya, Juan Atalaya, Joseph~C Bardin, Rami Barends, Sergio Boixo,
  et~al.
\newblock Quantum approximate optimization of non-planar graph problems on a
  planar superconducting processor.
\newblock \emph{Nature Physics}, 17\penalty0 (3):\penalty0 332--336, 2021.
\newblock \doi{10.1038/s41567-020-01105-y}.

\bibitem[{Hastings}(2019)]{Hastings2019}
M.~B. {Hastings}.
\newblock {Classical and Quantum Bounded Depth Approximation Algorithms}.
\newblock \emph{arXiv e-prints}, May 2019.
\newblock \doi{10.48550/arXiv.1905.07047}.

\bibitem[Henriet(2020)]{Henriet_2020_PRA}
Lo\"{\i}c Henriet.
\newblock Robustness to spontaneous emission of a variational quantum
  algorithm.
\newblock \emph{Phys. Rev. A}, 101:\penalty0 012335, Jan 2020.
\newblock \doi{10.1103/PhysRevA.101.012335}.

\bibitem[Henriet et~al.(2020)Henriet, Beguin, Signoles, Lahaye, Browaeys,
  Reymond, and Jurczak]{Henriet2020}
Lo{\"\i}c Henriet, Lucas Beguin, Adrien Signoles, Thierry Lahaye, Antoine
  Browaeys, Georges-Olivier Reymond, and Christophe Jurczak.
\newblock Quantum computing with neutral atoms.
\newblock \emph{Quantum}, 4:\penalty0 327, 2020.
\newblock \doi{10.22331/q-2020-09-21-327}.

\bibitem[Hussain et~al.(2018)Hussain, Allodi, Chiesa, Garlatti, Mitcov,
  Konstantatos, Pedersen, De~Renzi, Piligkos, and
  Carretta]{hussain2018coherent}
Riaz Hussain, Giuseppe Allodi, Alessandro Chiesa, Elena Garlatti, Dmitri
  Mitcov, Andreas Konstantatos, Kasper Pedersen, Roberto De~Renzi, Stergios
  Piligkos, and Stefano Carretta.
\newblock Coherent manipulation of a molecular ln-based nuclear qudit coupled
  to an electron qubit.
\newblock \emph{Journal of the American Chemical Society}, 140, 07 2018.
\newblock \doi{10.1021/jacs.8b05934}.

\bibitem[Il'ev et~al.(2016)Il'ev, Il'eva, and
  Kononov]{10.1007/978-3-319-44914-2_3}
Victor Il'ev, Svetlana Il'eva, and Alexander Kononov.
\newblock Short survey on graph correlation clustering with minimization
  criteria.
\newblock In Yury Kochetov, Michael Khachay, Vladimir Beresnev, Evgeni
  Nurminski, and Panos Pardalos, editors, \emph{Discrete Optimization and
  Operations Research}, pages 25--36, Cham, 2016. Springer International
  Publishing.
\newblock ISBN 978-3-319-44914-2.
\newblock \doi{10.1007/978-3-319-44914-2_3}.

\bibitem[Imany et~al.(2018)Imany, Jaramillo-Villegas, Odele, Han, Leaird,
  Lukens, Lougovski, Qi, and Weiner]{imany2018}
Poolad Imany, Jose~A. Jaramillo-Villegas, Ogaga~D. Odele, Kyunghun Han,
  Daniel~E. Leaird, Joseph~M. Lukens, Pavel Lougovski, Minghao Qi, and
  Andrew~M. Weiner.
\newblock 50-ghz-spaced comb of high-dimensional frequency-bin entangled
  photons from an on-chip silicon nitride microresonator.
\newblock \emph{Optics Express}, 26\penalty0 (2), 1 2018.
\newblock \doi{10.1364/OE.26.001825}.

\bibitem[Insafutdinov et~al.(2016)Insafutdinov, Pishchulin, Andres, Andriluka,
  and Schiele]{InsafPAAS16}
Eldar Insafutdinov, Leonid Pishchulin, Bjoern Andres, Mykhaylo Andriluka, and
  Bernt Schiele.
\newblock Deepercut: A deeper, stronger, and faster multi-person pose
  estimation model.
\newblock In Bastian Leibe, Jiri Matas, Nicu Sebe, and Max Welling, editors,
  \emph{Computer Vision -- ECCV 2016}, pages 34--50, Cham, 2016. Springer
  International Publishing.
\newblock ISBN 978-3-319-46466-4.
\newblock \doi{10.1007/978-3-319-46466-4_3}.

\bibitem[{Islam} et~al.(2017){Islam}, {Lim}, {Cahall}, {Kim}, and
  {Gauthier}]{islam2017provably}
Nurul~T. {Islam}, Charles Ci~Wen {Lim}, Clinton {Cahall}, Jungsang {Kim}, and
  Daniel~J. {Gauthier}.
\newblock {Provably secure and high-rate quantum key distribution with time-bin
  qudits}.
\newblock \emph{Science Advances}, 3\penalty0 (11):\penalty0 e1701491, November
  2017.
\newblock \doi{10.1126/sciadv.1701491}.

\bibitem[Itoko et~al.(2020)Itoko, Raymond, Imamichi, and Matsuo]{Itoko_2020}
Toshinari Itoko, Rudy Raymond, Takashi Imamichi, and Atsushi Matsuo.
\newblock Optimization of quantum circuit mapping using gate transformation and
  commutation.
\newblock \emph{Integration}, 70:\penalty0 43--50, 2020.
\newblock ISSN 0167-9260.
\newblock \doi{https://doi.org/10.1016/j.vlsi.2019.10.004}.

\bibitem[Jiang et~al.(2017)Jiang, Rieffel, and Wang]{Jiang2017}
Zhang Jiang, Eleanor~G. Rieffel, and Zhihui Wang.
\newblock Near-optimal quantum circuit for grover's unstructured search using a
  transverse field.
\newblock \emph{Phys. Rev. A}, 95:\penalty0 062317, Jun 2017.
\newblock \doi{10.1103/PhysRevA.95.062317}.

\bibitem[Jiao et~al.(2014)Jiao, Wang, and Chen]{Jiao2014}
Hong-Wei Jiao, Feng-Hui Wang, and Yong-Qiang Chen.
\newblock An effective branch and bound algorithm for minimax linear fractional
  programming.
\newblock \emph{Journal of Applied Mathematics}, 2014:\penalty0 1--8, 2014.
\newblock \doi{10.1155/2014/160262}.

\bibitem[Kiktenko et~al.(2020)Kiktenko, Nikolaeva, Xu, Shlyapnikov, and
  Fedorov]{Kiktenko_2020_PRA}
E.~O. Kiktenko, A.~S. Nikolaeva, Peng Xu, G.~V. Shlyapnikov, and A.~K. Fedorov.
\newblock Scalable quantum computing with qudits on a graph.
\newblock \emph{Phys. Rev. A}, 101:\penalty0 022304, Feb 2020.
\newblock \doi{10.1103/PhysRevA.101.022304}.

\bibitem[{Kim} et~al.(2017){Kim}, {Penc}, {Nataf}, and {Mila}]{kim2017linear}
Francisco~H. {Kim}, Karlo {Penc}, Pierre {Nataf}, and Fr{\'e}d{\'e}ric {Mila}.
\newblock {Linear flavor-wave theory for fully antisymmetric SU(N ) irreducible
  representations}.
\newblock \emph{\prb}, 96\penalty0 (20):\penalty0 205142, November 2017.
\newblock \doi{10.1103/PhysRevB.96.205142}.

\bibitem[Kivlichan et~al.(2018)Kivlichan, McClean, Wiebe, Gidney, Aspuru-Guzik,
  Chan, and Babbush]{Kivlichan_2018_PRL}
Ian~D. Kivlichan, Jarrod McClean, Nathan Wiebe, Craig Gidney, Al\'an
  Aspuru-Guzik, Garnet Kin-Lic Chan, and Ryan Babbush.
\newblock Quantum simulation of electronic structure with linear depth and
  connectivity.
\newblock \emph{Phys. Rev. Lett.}, 120:\penalty0 110501, Mar 2018.
\newblock \doi{10.1103/PhysRevLett.120.110501}.

\bibitem[Klein et~al.(2015)Klein, Mathieu, and Zhou]{Klein2015}
Philip~N. Klein, Claire Mathieu, and Hang Zhou.
\newblock Correlation clustering and two-edge-connected augmentation for planar
  graphs.
\newblock In \emph{32nd International Symposium on Theoretical Aspects of
  Computer Science, {STACS} 2015, March 4-7, 2015, Garching, Germany}, pages
  554--567, 2015.
\newblock \doi{10.4230/LIPIcs.STACS.2015.554}.

\bibitem[{Kues} et~al.(2017){Kues}, {Reimer}, {Roztocki}, {Cort{\'e}s},
  {Sciara}, {Wetzel}, {Zhang}, {Cino}, {Chu}, {Little}, {Moss}, {Caspani},
  {Aza{\~n}a}, and {Morandotti}]{kues2017chip}
Michael {Kues}, Christian {Reimer}, Piotr {Roztocki}, Luis~Romero {Cort{\'e}s},
  Stefania {Sciara}, Benjamin {Wetzel}, Yanbing {Zhang}, Alfonso {Cino}, Sai~T.
  {Chu}, Brent~E. {Little}, David~J. {Moss}, Lucia {Caspani}, Jos{\'e}
  {Aza{\~n}a}, and Roberto {Morandotti}.
\newblock {On-chip generation of high-dimensional entangled quantum states and
  their coherent control}.
\newblock \emph{\nat}, 546\penalty0 (7660):\penalty0 622--626, June 2017.
\newblock \doi{10.1038/nature22986}.

\bibitem[{Lavrijsen} et~al.(2020){Lavrijsen}, {Tudor}, {M{\"u}ller}, {Iancu},
  and {de Jong}]{Lavrijsen2020}
Wim {Lavrijsen}, Ana {Tudor}, Juliane {M{\"u}ller}, Costin {Iancu}, and Wibe
  {de Jong}.
\newblock {Classical Optimizers for Noisy Intermediate-Scale Quantum Devices}.
\newblock \emph{arXiv e-prints}, art. arXiv:2004.03004, April 2020.
\newblock \doi{10.48550/arXiv.2004.03004}.

\bibitem[Leupold et~al.(2018)Leupold, Malinowski, Zhang, Negnevitsky, Cabello,
  Alonso, and Home]{Leupold_2018_PRL}
F.~M. Leupold, M.~Malinowski, C.~Zhang, V.~Negnevitsky, A.~Cabello, J.~Alonso,
  and J.~P. Home.
\newblock Sustained state-independent quantum contextual correlations from a
  single ion.
\newblock \emph{Phys. Rev. Lett.}, 120:\penalty0 180401, May 2018.
\newblock \doi{10.1103/PhysRevLett.120.180401}.

\bibitem[Levine et~al.(2019)Levine, Keesling, Semeghini, Omran, Wang, Ebadi,
  Bernien, Greiner, Vuleti\ifmmode~\acute{c}\else \'{c}\fi{}, Pichler, and
  Lukin]{Levine_2019_PRL}
Harry Levine, Alexander Keesling, Giulia Semeghini, Ahmed Omran, Tout~T. Wang,
  Sepehr Ebadi, Hannes Bernien, Markus Greiner, Vladan
  Vuleti\ifmmode~\acute{c}\else \'{c}\fi{}, Hannes Pichler, and Mikhail~D.
  Lukin.
\newblock Parallel implementation of high-fidelity multiqubit gates with
  neutral atoms.
\newblock \emph{Phys. Rev. Lett.}, 123:\penalty0 170503, Oct 2019.
\newblock \doi{10.1103/PhysRevLett.123.170503}.

\bibitem[Linke et~al.(2017)Linke, Maslov, Roetteler, Debnath, Figgatt,
  Landsman, Wright, and Monroe]{linke2017experimental}
Norbert~M Linke, Dmitri Maslov, Martin Roetteler, Shantanu Debnath, Caroline
  Figgatt, Kevin~A Landsman, Kenneth Wright, and Christopher Monroe.
\newblock Experimental comparison of two quantum computing architectures.
\newblock \emph{Proceedings of the National Academy of Sciences}, 114\penalty0
  (13):\penalty0 3305--3310, 2017.
\newblock \doi{10.1073/pnas.1618020114}.

\bibitem[{Liu} et~al.(2007){Liu}, {Long}, and {Sun}]{Liu_2007}
Yang {Liu}, Gui~Lu {Long}, and Yang {Sun}.
\newblock {Analytic Constructions of General n-Qubit Controlled Gates}.
\newblock \emph{arXiv e-prints}, August 2007.
\newblock \doi{10.48550/arXiv.0708.3274}.

\bibitem[{Lloyd}(2018)]{Lloyd2018}
Seth {Lloyd}.
\newblock {Quantum approximate optimization is computationally universal}.
\newblock \emph{arXiv e-prints}, December 2018.
\newblock \doi{10.48550/arXiv.1812.11075}.

\bibitem[Lu et~al.(2018)Lu, Lukens, Peters, Odele, Leaird, Weiner, and
  Lougovski]{Lu_2018_PRL}
Hsuan-Hao Lu, Joseph~M. Lukens, Nicholas~A. Peters, Ogaga~D. Odele, Daniel~E.
  Leaird, Andrew~M. Weiner, and Pavel Lougovski.
\newblock Electro-optic frequency beam splitters and tritters for high-fidelity
  photonic quantum information processing.
\newblock \emph{Phys. Rev. Lett.}, 120:\penalty0 030502, Jan 2018.
\newblock \doi{10.1103/PhysRevLett.120.030502}.

\bibitem[{Madjarov} et~al.(2020){Madjarov}, {Covey}, {Shaw}, {Choi}, {Kale},
  {Cooper}, {Pichler}, {Schkolnik}, {Williams}, and
  {Endres}]{Madjarov_2020_NatPhys}
Ivaylo~S. {Madjarov}, Jacob~P. {Covey}, Adam~L. {Shaw}, Joonhee {Choi}, Anant
  {Kale}, Alexandre {Cooper}, Hannes {Pichler}, Vladimir {Schkolnik}, Jason~R.
  {Williams}, and Manuel {Endres}.
\newblock {High-fidelity entanglement and detection of alkaline-earth Rydberg
  atoms}.
\newblock \emph{Nature Physics}, 16\penalty0 (8):\penalty0 857--861, May 2020.
\newblock \doi{10.1038/s41567-020-0903-z}.

\bibitem[Madjarov(2021)]{madjarov2021entangling}
Ivaylo~Sashkov Madjarov.
\newblock \emph{Entangling, Controlling, and Detecting Individual Strontium
  Atoms in Optical Tweezer Arrays}.
\newblock PhD thesis, California Institute of Technology, 2021.

\bibitem[Marwaha(2021)]{Marwaha2021}
Kunal Marwaha.
\newblock Local classical {MAX}-{CUT} algorithm outperforms {$p=2$} {QAOA} on
  high-girth regular graphs.
\newblock \emph{{Quantum}}, 5:\penalty0 437, April 2021.
\newblock ISSN 2521-327X.
\newblock \doi{10.22331/q-2021-04-20-437}.

\bibitem[{Medvidovi{\'c}} and {Carleo}(2021)]{2021npjQI...7..101M}
Matija {Medvidovi{\'c}} and Giuseppe {Carleo}.
\newblock {Classical variational simulation of the Quantum Approximate
  Optimization Algorithm}.
\newblock \emph{npj Quantum Information}, 7:\penalty0 101, January 2021.
\newblock \doi{10.1038/s41534-021-00440-z}.

\bibitem[{Morales} et~al.(2020){Morales}, {Biamonte}, and
  {Zimbor{\'a}s}]{Morales2020}
M.~E.~S. {Morales}, J.~D. {Biamonte}, and Z.~{Zimbor{\'a}s}.
\newblock {On the universality of the quantum approximate optimization
  algorithm}.
\newblock \emph{Quantum Information Processing}, 19\penalty0 (9):\penalty0 291,
  August 2020.
\newblock \doi{10.1007/s11128-020-02748-9}.

\bibitem[{Morgado} and {Whitlock}(2021)]{Morgado2020}
M.~{Morgado} and S.~{Whitlock}.
\newblock {Quantum simulation and computing with Rydberg-interacting qubits}.
\newblock \emph{AVS Quantum Science}, 3\penalty0 (2):\penalty0 023501, June
  2021.
\newblock \doi{10.1116/5.0036562}.

\bibitem[Moro et~al.(2019)Moro, Fielding, Turyanska, and
  Patanè]{moro2019realization}
Fabrizio Moro, Alistair~J. Fielding, Lyudmila Turyanska, and Amalia Patanè.
\newblock Realization of universal quantum gates with spin-qudits in colloidal
  quantum dots.
\newblock \emph{Advanced Quantum Technologies}, 2\penalty0 (10):\penalty0
  1900017, 2019.
\newblock \doi{https://doi.org/10.1002/qute.201900017}.

\bibitem[{Nataf} and {Mila}(2014)]{nataf2014exact}
Pierre {Nataf} and Fr{\'e}d{\'e}ric {Mila}.
\newblock {Exact Diagonalization of Heisenberg SU(N) Models}.
\newblock \emph{\prl}, 113\penalty0 (12):\penalty0 127204, September 2014.
\newblock \doi{10.1103/PhysRevLett.113.127204}.

\bibitem[{Nataf} and {Mila}(2016)]{nataf2016exact}
Pierre {Nataf} and Fr{\'e}d{\'e}ric {Mila}.
\newblock {Exact diagonalization of Heisenberg SU (N ) chains in the fully
  symmetric and antisymmetric representations}.
\newblock \emph{\prb}, 93\penalty0 (15):\penalty0 155134, April 2016.
\newblock \doi{10.1103/PhysRevB.93.155134}.

\bibitem[{Nataf} and {Mila}(2018)]{nataf2018density}
Pierre {Nataf} and Fr{\'e}d{\'e}ric {Mila}.
\newblock {Density matrix renormalization group simulations of SU(N )
  Heisenberg chains using standard Young tableaus: Fundamental representation
  and comparison with a finite-size Bethe ansatz}.
\newblock \emph{\prb}, 97\penalty0 (13):\penalty0 134420, April 2018.
\newblock \doi{10.1103/PhysRevB.97.134420}.

\bibitem[{Nataf} et~al.(2016){Nataf}, {Lajk{\'o}}, {Corboz}, {L{\"a}uchli},
  {Penc}, and {Mila}]{Nataf_2016_PRB}
Pierre {Nataf}, Mikl{\'o}s {Lajk{\'o}}, Philippe {Corboz}, Andreas~M.
  {L{\"a}uchli}, Karlo {Penc}, and Fr{\'e}d{\'e}ric {Mila}.
\newblock {Plaquette order in the SU(6) Heisenberg model on the honeycomb
  lattice}.
\newblock \emph{\prb}, 93\penalty0 (20):\penalty0 201113, May 2016.
\newblock \doi{10.1103/PhysRevB.93.201113}.

\bibitem[{Neeley} et~al.(2009){Neeley}, {Ansmann}, {Bialczak}, {Hofheinz},
  {Lucero}, {O'Connell}, {Sank}, {Wang}, {Wenner}, {Cleland}, {Geller}, and
  {Martinis}]{neeley2009emulation}
Matthew {Neeley}, Markus {Ansmann}, Radoslaw~C. {Bialczak}, Max {Hofheinz},
  Erik {Lucero}, Aaron~D. {O'Connell}, Daniel {Sank}, Haohua {Wang}, James
  {Wenner}, Andrew~N. {Cleland}, Michael~R. {Geller}, and John~M. {Martinis}.
\newblock {Emulation of a Quantum Spin with a Superconducting Phase Qudit}.
\newblock \emph{Science}, 325\penalty0 (5941):\penalty0 722, August 2009.
\newblock \doi{10.1126/science.1173440}.

\bibitem[Ng and Cardie(2002)]{Ng2002}
Vincent Ng and Claire Cardie.
\newblock Improving machine learning approaches to coreference resolution.
\newblock In \emph{Proceedings of the 40th Annual Meeting of the Association
  for Computational Linguistics}, pages 104--111, Philadelphia, Pennsylvania,
  USA, July 2002. Association for Computational Linguistics.
\newblock \doi{10.3115/1073083.1073102}.

\bibitem[O'Leary et~al.(2006)O'Leary, Brennen, and Bullock]{OLeary_2006_PRA}
Dianne~P. O'Leary, Gavin~K. Brennen, and Stephen~S. Bullock.
\newblock Parallelism for quantum computation with qudits.
\newblock \emph{Phys. Rev. A}, 74:\penalty0 032334, Sep 2006.
\newblock \doi{10.1103/PhysRevA.74.032334}.

\bibitem[Onishchenko et~al.(2019)Onishchenko, Pyatchenkov, Urech, Chen,
  Bennetts, Siviloglou, and Schreck]{Onishchenko_2019_PRA}
Oleksiy Onishchenko, Sergey Pyatchenkov, Alexander Urech, Chun-Chia Chen,
  Shayne Bennetts, Georgios~A. Siviloglou, and Florian Schreck.
\newblock Frequency of the ultranarrow
  $^{1}\mathrm{S}_{0}\ensuremath{-}^{3}\mathrm{P}_{2}$ transition in
  $^{87}\mathrm{Sr}$.
\newblock \emph{Phys. Rev. A}, 99:\penalty0 052503, May 2019.
\newblock \doi{10.1103/PhysRevA.99.052503}.

\bibitem[{Pichler} et~al.(2018{\natexlab{a}}){Pichler}, {Wang}, {Zhou}, {Choi},
  and {Lukin}]{2018arXiv180810816P}
Hannes {Pichler}, Sheng-Tao {Wang}, Leo {Zhou}, Soonwon {Choi}, and Mikhail~D.
  {Lukin}.
\newblock {Quantum Optimization for Maximum Independent Set Using Rydberg Atom
  Arrays}.
\newblock \emph{arXiv e-prints}, art. arXiv:1808.10816, August
  2018{\natexlab{a}}.
\newblock \doi{10.48550/arXiv.1808.10816}.

\bibitem[{Pichler} et~al.(2018{\natexlab{b}}){Pichler}, {Wang}, {Zhou}, {Choi},
  and {Lukin}]{2018arXiv180904954P}
Hannes {Pichler}, Sheng-Tao {Wang}, Leo {Zhou}, Soonwon {Choi}, and Mikhail~D.
  {Lukin}.
\newblock {Computational complexity of the Rydberg blockade in two dimensions}.
\newblock \emph{arXiv e-prints}, art. arXiv:1809.04954, September
  2018{\natexlab{b}}.
\newblock \doi{10.48550/arXiv.1809.04954}.

\bibitem[Porsev and Derevianko(2004)]{Porsev_2004_PRA}
Sergey~G. Porsev and Andrei Derevianko.
\newblock Hyperfine quenching of the metastable $^{3}\mathrm{P}_{0,2}$ states
  in divalent atoms.
\newblock \emph{Phys. Rev. A}, 69:\penalty0 042506, Apr 2004.
\newblock \doi{10.1103/PhysRevA.69.042506}.

\bibitem[{Randall} et~al.(2015){Randall}, {Weidt}, {Standing}, {Lake},
  {Webster}, {Murgia}, {Navickas}, {Roth}, and {Hensinger}]{Randall_2015_PRA}
J.~{Randall}, S.~{Weidt}, E.~D. {Standing}, K.~{Lake}, S.~C. {Webster}, D.~F.
  {Murgia}, T.~{Navickas}, K.~{Roth}, and W.~K. {Hensinger}.
\newblock {Efficient preparation and detection of microwave dressed-state
  qubits and qutrits with trapped ions}.
\newblock \emph{\pra}, 91\penalty0 (1):\penalty0 012322, January 2015.
\newblock \doi{10.1103/PhysRevA.91.012322}.

\bibitem[Reimer et~al.(2019)Reimer, Sciara, Roztocki, Islam, Cort{\'e}s, Zhang,
  Fischer, Loranger, Kashyap, Cino, et~al.]{reimer2019high}
Christian Reimer, Stefania Sciara, Piotr Roztocki, Mehedi Islam, Luis~Romero
  Cort{\'e}s, Yanbing Zhang, Bennet Fischer, S{\'e}bastien Loranger, Raman
  Kashyap, Alfonso Cino, et~al.
\newblock High-dimensional one-way quantum processing implemented on d-level
  cluster states.
\newblock \emph{Nature Physics}, 15\penalty0 (2):\penalty0 148--153, 2019.
\newblock \doi{10.1038/s41567-018-0347-x}.

\bibitem[{Romen} and {L{\"a}uchli}(2020)]{romen2020structure}
Christian {Romen} and Andreas~M. {L{\"a}uchli}.
\newblock {Structure of spin correlations in high-temperature SU (N ) quantum
  magnets}.
\newblock \emph{Physical Review Research}, 2\penalty0 (4):\penalty0 043009,
  October 2020.
\newblock \doi{10.1103/PhysRevResearch.2.043009}.

\bibitem[{Saffman}(2016)]{saffman2016quantum}
M.~{Saffman}.
\newblock {Quantum computing with atomic qubits and Rydberg interactions:
  progress and challenges}.
\newblock \emph{Journal of Physics B Atomic Molecular Physics}, 49\penalty0
  (20):\penalty0 202001, October 2016.
\newblock \doi{10.1088/0953-4075/49/20/202001}.

\bibitem[Saffman and M\o{}lmer(2008)]{Saffman_2008_PRA}
M.~Saffman and K.~M\o{}lmer.
\newblock Scaling the neutral-atom rydberg gate quantum computer by collective
  encoding in holmium atoms.
\newblock \emph{Phys. Rev. A}, 78:\penalty0 012336, Jul 2008.
\newblock \doi{10.1103/PhysRevA.78.012336}.

\bibitem[Saffman and Walker(2005)]{Saffman_2005_PRA}
M.~Saffman and T.~G. Walker.
\newblock Analysis of a quantum logic device based on dipole-dipole
  interactions of optically trapped rydberg atoms.
\newblock \emph{Phys. Rev. A}, 72:\penalty0 022347, Aug 2005.
\newblock \doi{10.1103/PhysRevA.72.022347}.

\bibitem[Saffman et~al.(2010)Saffman, Walker, and M\o{}lmer]{Saffman_2010_RMP}
M.~Saffman, T.~G. Walker, and K.~M\o{}lmer.
\newblock Quantum information with rydberg atoms.
\newblock \emph{Rev. Mod. Phys.}, 82:\penalty0 2313--2363, Aug 2010.
\newblock \doi{10.1103/RevModPhys.82.2313}.

\bibitem[Safronova()]{Safronova}
M.~Safronova.
\newblock private communication.

\bibitem[Saskin et~al.(2019)Saskin, Wilson, Grinkemeyer, and
  Thompson]{Saskin_2019_PRL}
S.~Saskin, J.~T. Wilson, B.~Grinkemeyer, and J.~D. Thompson.
\newblock Narrow-line cooling and imaging of ytterbium atoms in an optical
  tweezer array.
\newblock \emph{Phys. Rev. Lett.}, 122:\penalty0 143002, Apr 2019.
\newblock \doi{10.1103/PhysRevLett.122.143002}.

\bibitem[{Sawant} et~al.(2020){Sawant}, {Blackmore}, {Gregory}, {Mur-Petit},
  {Jaksch}, {Aldegunde}, {Hutson}, {Tarbutt}, and
  {Cornish}]{sawant2020ultracold}
Rahul {Sawant}, Jacob~A. {Blackmore}, Philip~D. {Gregory}, Jordi {Mur-Petit},
  Dieter {Jaksch}, Jes{\'u}s {Aldegunde}, Jeremy~M. {Hutson}, M.~R. {Tarbutt},
  and Simon~L. {Cornish}.
\newblock {Ultracold polar molecules as qudits}.
\newblock \emph{New Journal of Physics}, 22\penalty0 (1):\penalty0 013027,
  January 2020.
\newblock \doi{10.1088/1367-2630/ab60f4}.

\bibitem[Scholl et~al.(2021)Scholl, Schuler, Williams, Eberharter, Barredo,
  Schymik, Lienhard, Henry, Lang, Lahaye, Läuchli, and
  Browaeys]{scholl2020programmable}
Pascal Scholl, Michael Schuler, Hannah~J. Williams, Alexander~A. Eberharter,
  Daniel Barredo, Kai-Niklas Schymik, Vincent Lienhard, Louis-Paul Henry,
  Thomas~C. Lang, Thierry Lahaye, Andreas~M. Läuchli, and Antoine Browaeys.
\newblock Quantum simulation of 2d antiferromagnets with hundreds of rydberg
  atoms.
\newblock \emph{Nature}, 595\penalty0 (7866):\penalty0 233–238, Jul 2021.
\newblock ISSN 1476-4687.
\newblock \doi{10.1038/s41586-021-03585-1}.

\bibitem[Schymik et~al.(2020)Schymik, Lienhard, Barredo, Scholl, Williams,
  Browaeys, and Lahaye]{Schymik_2020_PRA}
Kai-Niklas Schymik, Vincent Lienhard, Daniel Barredo, Pascal Scholl, Hannah
  Williams, Antoine Browaeys, and Thierry Lahaye.
\newblock Enhanced atom-by-atom assembly of arbitrary tweezer arrays.
\newblock \emph{Phys. Rev. A}, 102:\penalty0 063107, Dec 2020.
\newblock \doi{10.1103/PhysRevA.102.063107}.

\bibitem[Scully and Zubairy(1999)]{scully1999quantum}
Marlan~O Scully and M~Suhail Zubairy.
\newblock Quantum optics, 1999.

\bibitem[{Semeghini} et~al.(2021){Semeghini}, {Levine}, {Keesling}, {Ebadi},
  {Wang}, {Bluvstein}, {Verresen}, {Pichler}, {Kalinowski}, {Samajdar},
  {Omran}, {Sachdev}, {Vishwanath}, {Greiner}, {Vuleti{\'c}}, and
  {Lukin}]{semeghini2021probing}
G.~{Semeghini}, H.~{Levine}, A.~{Keesling}, S.~{Ebadi}, T.~T. {Wang},
  D.~{Bluvstein}, R.~{Verresen}, H.~{Pichler}, M.~{Kalinowski}, R.~{Samajdar},
  A.~{Omran}, S.~{Sachdev}, A.~{Vishwanath}, M.~{Greiner}, V.~{Vuleti{\'c}},
  and M.~D. {Lukin}.
\newblock Probing topological spin liquids on a programmable quantum simulator.
\newblock \emph{Science}, 374\penalty0 (6572):\penalty0 1242--1247, 2021.
\newblock \doi{10.1126/science.abi8794}.

\bibitem[{Senko} et~al.(2015){Senko}, {Richerme}, {Smith}, {Lee}, {Cohen},
  {Retzker}, and {Monroe}]{Senko_2015_PRX}
C.~{Senko}, P.~{Richerme}, J.~{Smith}, A.~{Lee}, I.~{Cohen}, A.~{Retzker}, and
  C.~{Monroe}.
\newblock {Realization of a Quantum Integer-Spin Chain with Controllable
  Interactions}.
\newblock \emph{Physical Review X}, 5\penalty0 (2):\penalty0 021026, April
  2015.
\newblock \doi{10.1103/PhysRevX.5.021026}.

\bibitem[Shaydulin et~al.(2019)Shaydulin, Safro, and Larson]{Shaydulin20192}
Ruslan Shaydulin, Ilya Safro, and Jeffrey Larson.
\newblock Multistart methods for quantum approximate optimization.
\newblock \emph{2019 IEEE High Performance Extreme Computing Conference
  (HPEC)}, pages 1--8, 2019.
\newblock \doi{10.1109/HPEC.2019.8916288}.

\bibitem[Sleator and Weinfurter(1995)]{Sleator_1995_PRL}
Tycho Sleator and Harald Weinfurter.
\newblock Realizable universal quantum logic gates.
\newblock \emph{Phys. Rev. Lett.}, 74:\penalty0 4087--4090, May 1995.
\newblock \doi{10.1103/PhysRevLett.74.4087}.

\bibitem[{Soltamov} et~al.(2019){Soltamov}, {Kasper}, {Poshakinskiy},
  {Anisimov}, {Mokhov}, {Sperlich}, {Tarasenko}, {Baranov}, {Astakhov}, and
  {Dyakonov}]{soltamov2019excitation}
V.~A. {Soltamov}, C.~{Kasper}, A.~V. {Poshakinskiy}, A.~N. {Anisimov}, E.~N.
  {Mokhov}, A.~{Sperlich}, S.~A. {Tarasenko}, P.~G. {Baranov}, G.~V.
  {Astakhov}, and V.~{Dyakonov}.
\newblock {Excitation and coherent control of spin qudit modes in silicon
  carbide at room temperature}.
\newblock \emph{Nature Communications}, 10:\penalty0 1678, April 2019.
\newblock \doi{10.1038/s41467-019-09429-x}.

\bibitem[{Song} and {Hermele}(2013)]{song2013mott}
Hao {Song} and Michael {Hermele}.
\newblock {Mott insulators of ultracold fermionic alkaline earth atoms in three
  dimensions}.
\newblock \emph{\prb}, 87\penalty0 (14):\penalty0 144423, April 2013.
\newblock \doi{10.1103/PhysRevB.87.144423}.

\bibitem[Soon et~al.(2001)Soon, Ng, and Lim]{Soon2001}
Wee~Meng Soon, Hwee~Tou Ng, and Daniel Chung~Yong Lim.
\newblock A machine learning approach to coreference resolution of noun
  phrases.
\newblock \emph{Computational Linguistics}, 27\penalty0 (4):\penalty0 521--544,
  2001.
\newblock \doi{10.1162/089120101753342653}.

\bibitem[Steck()]{steck}
Adam Steck.
\newblock Quantum and atom optics.
\newblock URL
  \url{https://atomoptics-nas.uoregon.edu/~dsteck/teaching/quantum-optics/quantum-optics-notes.pdf}.

\bibitem[Stellmer et~al.(2013)Stellmer, Grimm, and Schreck]{Stellmer_2013_PRA}
Simon Stellmer, Rudolf Grimm, and Florian Schreck.
\newblock Production of quantum-degenerate strontium gases.
\newblock \emph{Phys. Rev. A}, 87:\penalty0 013611, Jan 2013.
\newblock \doi{10.1103/PhysRevA.87.013611}.

\bibitem[{Stellmer} et~al.(2014){Stellmer}, {Schreck}, and
  {Killian}]{stellmer2014degenerate}
Simon {Stellmer}, Florian {Schreck}, and Thomas~C. {Killian}.
\newblock {Degenerate Quantum Gases of Strontium}.
\newblock In Kirk~W. {Madison} and {et al.}, editors, \emph{Annual Review of
  Cold Atoms and Molecules - Volume 2}, volume~2, pages 1--80. 2014.
\newblock \doi{10.1142/9100}.

\bibitem[{Stilck Fran{\c{c}}a} and {Garc{\'\i}a-Patr{\'o}n}(2021)]{Franca2020}
Daniel {Stilck Fran{\c{c}}a} and Raul {Garc{\'\i}a-Patr{\'o}n}.
\newblock {Limitations of optimization algorithms on noisy quantum devices}.
\newblock \emph{Nature Physics}, 17\penalty0 (11):\penalty0 1221--1227, October
  2021.
\newblock \doi{10.1038/s41567-021-01356-3}.

\bibitem[{Streif} and {Leib}(2020)]{Streif20192}
Michael {Streif} and Martin {Leib}.
\newblock {Training the Quantum Approximate Optimization Algorithm Without
  Access to a Quantum Processing Unit}.
\newblock \emph{Quantum Science and Technology}, 5\penalty0 (3), July 2020.
\newblock \doi{10.1088/2058-9565/ab8c2b}.

\bibitem[Stroud(2002)]{Stroud_2002_JModOpt}
Ashok Muthukrishnan~CR Stroud.
\newblock Quantum fast fourier transform using multilevel atoms.
\newblock \emph{journal of modern optics}, 49\penalty0 (13):\penalty0
  2115--2127, 2002.
\newblock \doi{10.1080/09500340210123947}.

\bibitem[{Sugawa} et~al.(2013){Sugawa}, {Takasu}, {Enomoto}, and
  {Takahashi}]{sugawa2013ultracold}
S.~{Sugawa}, Y.~{Takasu}, K.~{Enomoto}, and Y.~{Takahashi}.
\newblock {Ultracold Ytterbium:. Generation, Many-Body Physics, and Molecules}.
\newblock In \emph{Annual Review of Cold Atoms and Molecules}, volume~1, pages
  3--51. 2013.
\newblock \doi{10.1142/8632}.

\bibitem[{Svetitsky} et~al.(2014){Svetitsky}, {Suchowski}, {Resh}, {Shalibo},
  {Martinis}, and {Katz}]{svetitsky2014hidden}
Elisha {Svetitsky}, Haim {Suchowski}, Roy {Resh}, Yoni {Shalibo}, John~M.
  {Martinis}, and Nadav {Katz}.
\newblock {Hidden two-qubit dynamics of a four-level Josephson circuit}.
\newblock \emph{Nature Communications}, 5:\penalty0 5617, November 2014.
\newblock \doi{10.1038/ncomms6617}.

\bibitem[Swamy(2004)]{Swamy2004}
Chaitanya Swamy.
\newblock Correlation clustering: Maximizing agreements via semidefinite
  programming.
\newblock In \emph{Proceedings of the Annual ACM-SIAM Symposium on Discrete
  Algorithms}, volume~15, pages 526--527, 01 2004.
\newblock \doi{10.5555/982792}.

\bibitem[Taie et~al.(2010)Taie, Takasu, Sugawa, Yamazaki, Tsujimoto, Murakami,
  and Takahashi]{Taie_2010_PRL}
Shintaro Taie, Yosuke Takasu, Seiji Sugawa, Rekishu Yamazaki, Takuya Tsujimoto,
  Ryo Murakami, and Yoshiro Takahashi.
\newblock Realization of a
  $\mathrm{SU}(2)\ifmmode\times\else\texttimes\fi{}\mathrm{SU}(6)$ system of
  fermions in a cold atomic gas.
\newblock \emph{Phys. Rev. Lett.}, 105:\penalty0 190401, Nov 2010.
\newblock \doi{10.1103/PhysRevLett.105.190401}.

\bibitem[Tan(2008)]{Tan2007}
Jinsong Tan.
\newblock A note on the inapproximability of correlation clustering.
\newblock \emph{Information Processing Letters}, 108\penalty0 (5):\penalty0
  331--335, 2008.
\newblock ISSN 0020-0190.
\newblock \doi{https://doi.org/10.1016/j.ipl.2008.06.004}.

\bibitem[{Tan} et~al.(2021){Tan}, {Zhang}, {Zheng}, {Yang}, {Song}, {Han},
  {Dong}, {Wang}, {Lan}, {Yan}, {Zhu}, and {Yu}]{Tan_2021_PRL}
Xinsheng {Tan}, Dan-Wei {Zhang}, Wen {Zheng}, Xiaopei {Yang}, Shuqing {Song},
  Zhikun {Han}, Yuqian {Dong}, Zhimin {Wang}, Dong {Lan}, Hui {Yan}, Shi-Liang
  {Zhu}, and Yang {Yu}.
\newblock {Experimental Observation of Tensor Monopoles with a Superconducting
  Qudit}.
\newblock \emph{\prl}, 126\penalty0 (1):\penalty0 017702, January 2021.
\newblock \doi{10.1103/PhysRevLett.126.017702}.

\bibitem[Urech()]{Urech1}
A.~Urech.
\newblock Ph.D. Thesis, University of Amsterdam, In preparation.

\bibitem[Urech et~al.()Urech, Knotterus, Spreeuw, and Schreck]{Urech2}
A.~Urech, I.~Knotterus, R.~Spreeuw, and F.~Schreck.
\newblock In preparation.

\bibitem[{{\v{S}}ibali{\'c}} et~al.(2017){{\v{S}}ibali{\'c}}, {Pritchard},
  {Adams}, and {Weatherill}]{Sibalic_CompPhysComm_2017}
N.~{{\v{S}}ibali{\'c}}, J.~D. {Pritchard}, C.~S. {Adams}, and K.~J.
  {Weatherill}.
\newblock {ARC: An open-source library for calculating properties of alkali
  Rydberg atoms}.
\newblock \emph{Computer Physics Communications}, 220:\penalty0 319--331,
  November 2017.
\newblock \doi{10.1016/j.cpc.2017.06.015}.

\bibitem[\v{S}ibali\'{c} et~al.()\v{S}ibali\'{c}, Pritchard, Adams, and
  Weatherill]{ARC_package}
Nikola \v{S}ibali\'{c}, Jonathan~D. Pritchard, Charles~S. Adams, and Kevin~J.
  Weatherill.
\newblock Arc package.
\newblock
  \url{https://arc-alkali-rydberg-calculator.readthedocs.io/en/latest/}.

\bibitem[{Wang} et~al.(2018{\natexlab{a}}){Wang}, {Paesani}, {Ding},
  {Santagati}, {Skrzypczyk}, {Salavrakos}, {Tura}, {Augusiak},
  {Man{\v{c}}inska}, {Bacco}, {Bonneau}, {Silverstone}, {Gong}, {Ac{\'\i}n},
  {Rottwitt}, {Oxenl{\o}we}, {O{\textquoteright}Brien}, {Laing}, and
  {Thompson}]{wang2018multidimensional}
Jianwei {Wang}, Stefano {Paesani}, Yunhong {Ding}, Raffaele {Santagati}, Paul
  {Skrzypczyk}, Alexia {Salavrakos}, Jordi {Tura}, Remigiusz {Augusiak}, Laura
  {Man{\v{c}}inska}, Davide {Bacco}, Damien {Bonneau}, Joshua~W. {Silverstone},
  Qihuang {Gong}, Antonio {Ac{\'\i}n}, Karsten {Rottwitt}, Leif~K.
  {Oxenl{\o}we}, Jeremy~L. {O{\textquoteright}Brien}, Anthony {Laing}, and
  Mark~G. {Thompson}.
\newblock {Multidimensional quantum entanglement with large-scale integrated
  optics}.
\newblock \emph{Science}, 360\penalty0 (6386):\penalty0 285--291, April
  2018{\natexlab{a}}.
\newblock \doi{10.1126/science.aar7053}.

\bibitem[{Wang} et~al.(2020{\natexlab{a}}){Wang}, {Shevate}, {Wintermantel},
  {Morgado}, {Lochead}, and {Whitlock}]{wang2020preparation}
Yibo {Wang}, Sayali {Shevate}, Tobias~Martin {Wintermantel}, Manuel {Morgado},
  Graham {Lochead}, and Shannon {Whitlock}.
\newblock {Preparation of hundreds of microscopic atomic ensembles in optical
  tweezer arrays}.
\newblock \emph{npj Quantum Information}, 6:\penalty0 54, January
  2020{\natexlab{a}}.
\newblock \doi{10.1038/s41534-020-0285-1}.

\bibitem[{Wang} et~al.(2020{\natexlab{b}}){Wang}, {Hu}, {Sanders}, and
  {Kais}]{wang2020qudits}
Yuchen {Wang}, Zixuan {Hu}, Barry~C. {Sanders}, and Sabre {Kais}.
\newblock {Qudits and high-dimensional quantum computing}.
\newblock \emph{Frontiers in Physics}, 8:\penalty0 479, November
  2020{\natexlab{b}}.
\newblock \doi{10.3389/fphy.2020.589504}.

\bibitem[{Wang} et~al.(2018{\natexlab{b}}){Wang}, {Hadfield}, {Jiang}, and
  {Rieffel}]{wang2018quantum}
Zhihui {Wang}, Stuart {Hadfield}, Zhang {Jiang}, and Eleanor~G. {Rieffel}.
\newblock {Quantum approximate optimization algorithm for MaxCut: A fermionic
  view}.
\newblock \emph{\pra}, 97\penalty0 (2):\penalty0 022304, February
  2018{\natexlab{b}}.
\newblock \doi{10.1103/PhysRevA.97.022304}.

\bibitem[{Wauters} et~al.(2020){Wauters}, {Bigan Mbeng}, and
  {Santoro}]{Wauters2020}
Matteo~M. {Wauters}, Glen {Bigan Mbeng}, and Giuseppe~E. {Santoro}.
\newblock {Polynomial scaling of QAOA for ground-state preparation of the
  fully-connected p-spin ferromagnet}.
\newblock \emph{arXiv e-prints}, March 2020.
\newblock \doi{10.48550/arXiv.2003.07419}.

\bibitem[{Weggemans}(2020)]{Weggemans2020}
J.R. {Weggemans}.
\newblock Solving correlation clustering with the quantum approximate
  optimisation algorithm.
\newblock Master's thesis, University of Twente, The Netherlands, December
  2020.
\newblock available at \url {http://essay.utwente.nl/85484/}.

\bibitem[{Weichselbaum} et~al.(2018){Weichselbaum}, {Capponi}, {Lecheminant},
  {Tsvelik}, and {L{\"a}uchli}]{weichselbaum2018unified}
A.~{Weichselbaum}, S.~{Capponi}, P.~{Lecheminant}, A.~M. {Tsvelik}, and A.~M.
  {L{\"a}uchli}.
\newblock {Unified phase diagram of antiferromagnetic SU(N ) spin ladders}.
\newblock \emph{\prb}, 98\penalty0 (8):\penalty0 085104, August 2018.
\newblock \doi{10.1103/PhysRevB.98.085104}.

\bibitem[{Wu}(1982)]{Wu_1982_RMP}
F.~Y. {Wu}.
\newblock {The Potts model}.
\newblock \emph{Reviews of Modern Physics}, 54\penalty0 (1):\penalty0 235--268,
  January 1982.
\newblock \doi{10.1103/RevModPhys.54.235}.

\bibitem[{Wurtz} and {Love}(2021)]{Wurtz2020}
Jonathan {Wurtz} and Peter {Love}.
\newblock {MaxCut quantum approximate optimization algorithm performance
  guarantees for $p>1$}.
\newblock \emph{\pra}, 103\penalty0 (4):\penalty0 042612, April 2021.
\newblock \doi{10.1103/PhysRevA.103.042612}.

\bibitem[{Xue} et~al.(2019){Xue}, {Chen}, {Wu}, and {Guo}]{Xue2019}
Cheng {Xue}, Zhao-Yun {Chen}, Yu-Chun {Wu}, and Guo-Ping {Guo}.
\newblock {Effects of Quantum Noise on Quantum Approximate Optimization
  Algorithm}.
\newblock \emph{arXiv e-prints}, September 2019.
\newblock \doi{10.48550/arXiv.1909.02196}.

\bibitem[Yoshikawa et~al.(2018)Yoshikawa, Bergmann, van Loock, Fuwa, Okada,
  Takase, Toyama, Makino, Takeda, and Furusawa]{Yoshikawa_2018_PRA}
Jun-ichi Yoshikawa, Marcel Bergmann, Peter van Loock, Maria Fuwa, Masanori
  Okada, Kan Takase, Takeshi Toyama, Kenzo Makino, Shuntaro Takeda, and Akira
  Furusawa.
\newblock Heralded creation of photonic qudits from parametric down-conversion
  using linear optics.
\newblock \emph{Phys. Rev. A}, 97:\penalty0 053814, May 2018.
\newblock \doi{10.1103/PhysRevA.97.053814}.

\bibitem[Zhou et~al.(2020)Zhou, Wang, Choi, Pichler, and Lukin]{Zhou2020}
Leo Zhou, Sheng-Tao Wang, Soonwon Choi, Hannes Pichler, and Mikhail~D. Lukin.
\newblock Quantum approximate optimization algorithm: Performance, mechanism,
  and implementation on near-term devices.
\newblock \emph{Phys. Rev. X}, 10:\penalty0 021067, Jun 2020.
\newblock \doi{10.1103/PhysRevX.10.021067}.

\bibitem[{Zhu} et~al.(2020){Zhu}, {Lun Tang}, {Barron}, {Calderon-Vargas},
  {Mayhall}, {Barnes}, and {Economou}]{Zhu2020}
Linghua {Zhu}, Ho~{Lun Tang}, George~S. {Barron}, F.~A. {Calderon-Vargas},
  Nicholas~J. {Mayhall}, Edwin {Barnes}, and Sophia~E. {Economou}.
\newblock {An adaptive quantum approximate optimization algorithm for solving
  combinatorial problems on a quantum computer}.
\newblock \emph{arXiv e-prints}, May 2020.
\newblock \doi{10.48550/arXiv.2005.10258}.

\end{thebibliography}


\appendix

\onecolumngrid

\section{\label{app:optstudy}Results optimizer study}
We compare the performance of five different optimizers with off-the-shelf hyper-parameter settings using the vanilla QAOA: ImFil, SnobFit and BOBYQA taken from the scikit-quant package~\cite{Lavrijsen2020} and an implementation of COBYLA, used both as single optimizer and in conjunction with basin-hopping (BH), both from the SciPy optimization package~\cite{Scipy}.  We consider a single instance out of our correlation clustering data sets: the complete graph with $N=4$ and all edge weights `$-$' such that the optimal solution corresponds to all nodes being put in different clusters. For each $p \in \{1,\dots,5\}$, we generate 25 random initial points in the respective parameter space. The maximum number of iterations for each optimizer is set to 500. Since basin-hopping has two different budgets, one for the basin-hopping steps and one for its local optimizer, we set the maximum number of iterations by using the number of evaluations the local optimizer (COBYLA) used in its individual run: the number of basin-hopping steps is the rounded ratio of the budget over this number. The results for state vector sampling ($1000$ samples) are given in Figure~\ref{fig:optimizer_study}. 
\begin{figure}[H]
    \centering
    \includegraphics[width=\linewidth]{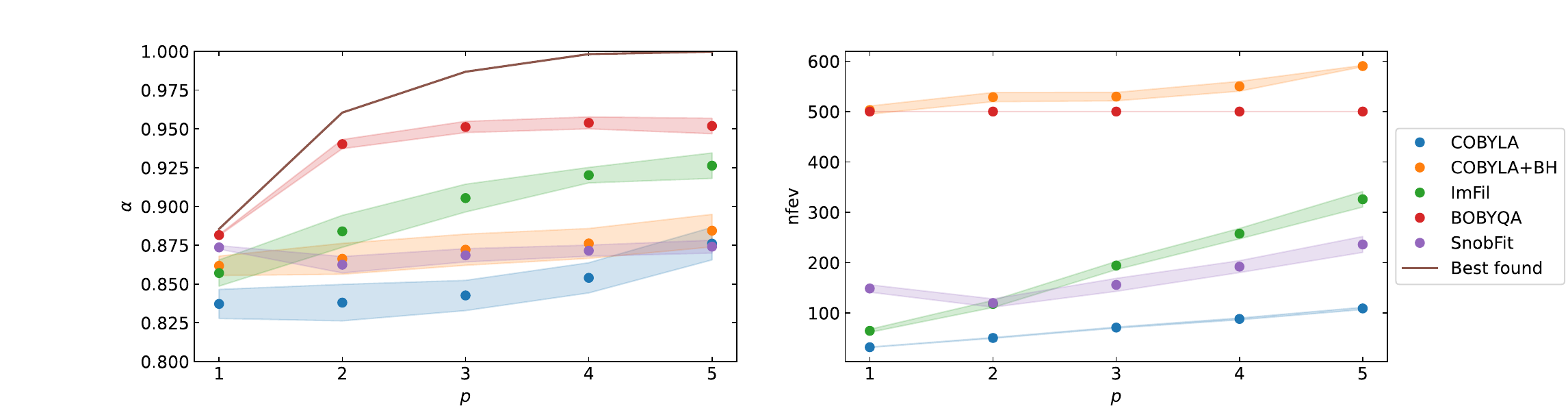}
    \caption{State vector sampling with 1000 samples. Left: found approximation ratios $\alpha$  for 25 random initial points using different optimizers. Right: total number of function evaluations `nfev' as a function of $p$. The shaded area indicates the error in the mean, where the discrete points have been connected in order to improve the readability of the figure. 'Best found' indicates the best approximation ratio that was observed over all instances and optimizers.}
    \label{fig:optimizer_study}
\end{figure}
We find that BOBYQA outperforms all other optimizers in terms of the achieved approximation ratios, but does always use up the maximum available amount of iterations. Unfortunately, the scikit-quant optimization package does not allow us to change the tolerance, which would allow for fairer comparison. As far as the ratio of performance to number of function evaluations is concerned, ImFil performs well. Even though adopting BOBYQA already potentially results in a large performance increase compared to using COBYLA, which was used to obtain our initial results, we still observe that a large performance difference exists between the best value found and the average performance over different optimizer runs. In addition, note that the standard deviation is relatively large as we plot the error in the mean in Fig.~\ref{fig:optimizer_study}. This means that there is still a lot to be gained in the classical optimization step, the most natural being the use of good initial points, followed by hyper-parameter optimization. Good initial points will have a larger effect on the local optimization methods (in particular COBYLA and BOBYQA) compared to the global optimizers (e.g.\ ImFil). The concentration of initial points at $p=1$ is studied in the next appendix. 

\section{\label{app:ipstudy}Initial points study for $p=1$}
 In this part of the appendix we give numerical evidence that supports the conclusions of the work by Brandão et al.~\cite{Brandao2018}: initial points for different instances, belonging to similar problem classes, are concentrated. To investigate this effect at $p=1$, we run the following: we start with initial points we obtained from the all-negative-weights graphs, for some fixed $N$ and $d$, and then solve for all 50 instances in our correlation clustering data set, using COBYLA as the optimizer. The resulting optimal points are plotted in Fig.~\ref{fig:points}:
\begin{figure}
    \centering
    \includegraphics[width = 0.9\linewidth]{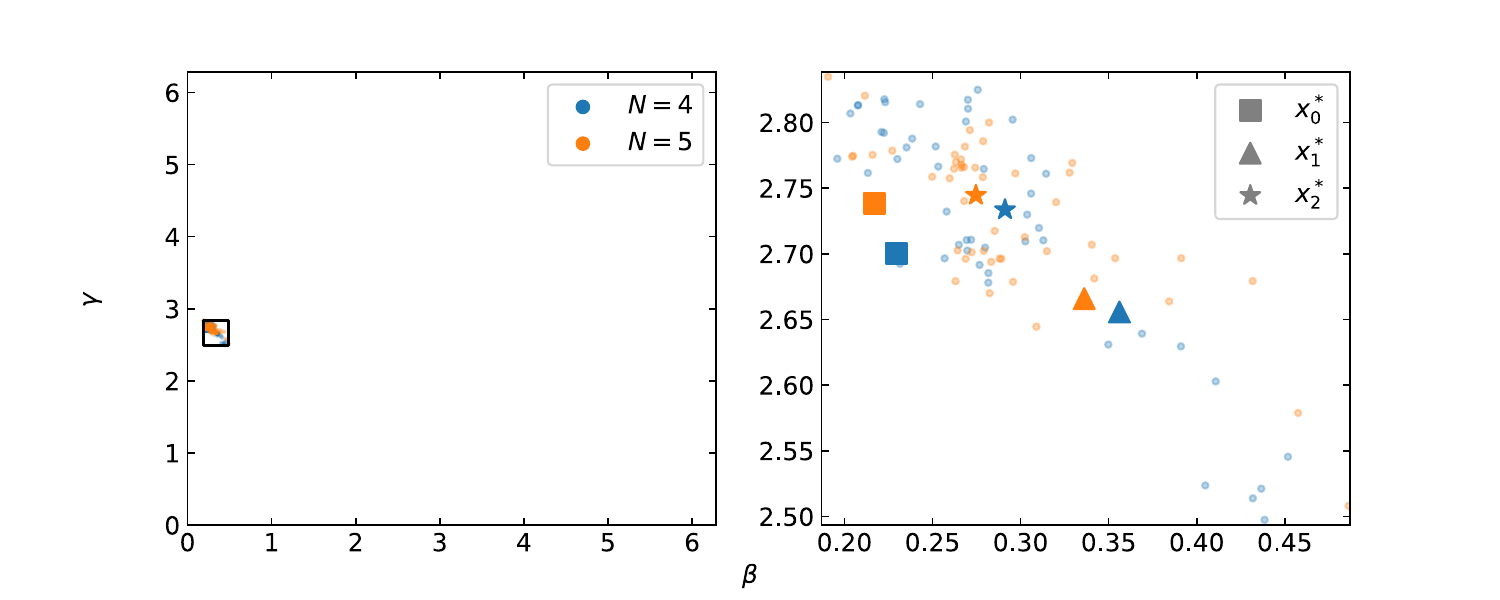}
    \caption{Locations of points obtained through the use of COBYLA starting from initial point $x_0^*$, which corresponds to an optimized point for solving the all-negative-weights graph, for $N=4$ and $N=5$. The number of levels is set as $d=N$. Left:  overview over the entire possible parameter space. Right: close-up to the smallest possible square area that contains all optimal points. The point $x_1^*$ has the smallest maximum Euclidean distance to other points and  $x_2^*$ the smallest average Euclidean distance to all other points.}
    \label{fig:points}
\end{figure}
Note how all points are in the neighbourhood of our initial points. In fact, the smallest rectangular area containing all points (indicated by the black rectangle) takes about $0.2\%$ of the entire possible parameter space. The plot on the right zooms in on this rectangle, showing how the optimal points are located relative to each other. In this plot $x^*_0$ is the initial point, which is for both $N=4$ and $N=5$ positioned at the boundary of the collection of points. This makes sense due to the structure of the graph it belongs to---the all-negative-weights graph is itself an extreme case of the correlation clustering problem (requiring all clusters to be used). This also indicates that other graphs might be more suitable to generate initial points. 

\begin{figure}
    \centering
    \includegraphics[width = 0.9\linewidth]{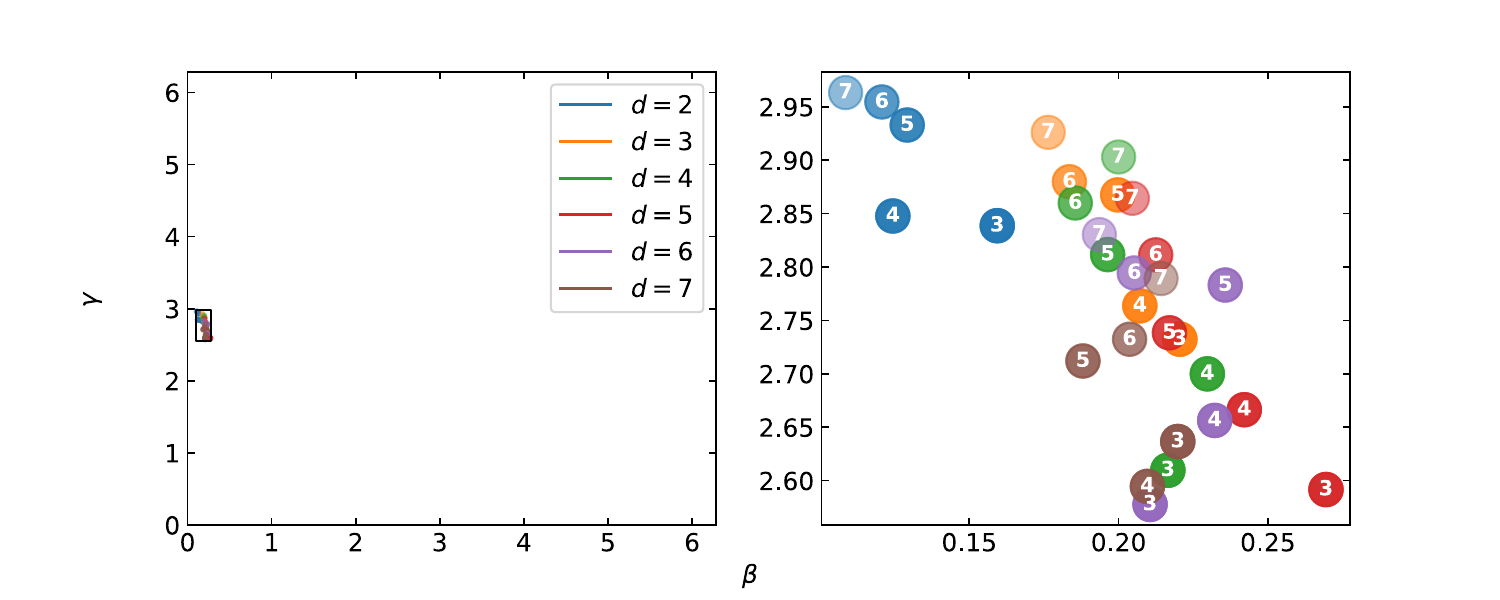}
    \caption{Left: locations of optimal points obtained over the entire possible parameter space. Right: close-up of the smallest rectangle that contains all optimal points. The number inside the point indicates the number of nodes.}
    \label{fig:points2}
\end{figure}
For the same problem instance, Figure~\ref{fig:points2} shows the location of the points obtained for different $N$ and $d$ in parameter space ($p=1$). We observe that $d=2$ is somewhat of an outsider, but again all points fall in a rectangular area encompassing about $0.2\%$ of the entire possible parameter space.

\section{\label{app:app_ratio_bound}Approximation ratio bound on 3-regular graphs}
A recent work by Wurtz and Love~\cite{Wurtz2020} shows a derivation of lower bounds for QAOA depths $p=1$ and $p=2$ (and a conjectured result at $p=3$) on MAXCUT. In this appendix, we apply their techniques to the correlation clustering problem on 3-regular graphs: graphs for which every node has a fixed degree of 3. The goal is to find a lower bound for the approximation ratio $\alpha$ at $p=1$ on 3-regular graphs, defined as
\begin{equation}
    \alpha = \max_{d \in \{1,\dots,d_\text{max}\}} \frac{F_d(\gamma,\beta)}{C_\text{max}},
\end{equation}
where $F_d(\gamma,\beta)$ is the expectation value of the state produced by the QAOA algorithm using $d$ levels and $C_\text{max}$ denotes the optimal objective function value for a single correlation clustering instance. Since only graphs with degree 3 are considered, we need at most 4 clusters. Therefore, we only consider $d \in \{1,2,3,4\}$. Based on local optimization, we will conjecture that for 3-regular graphs $G=(V,E)$ initial points $\gamma^*_d,\beta^*_d$ exist such that QAOA that loops over the clusters has an approximation ratio larger then $0.6699$, even without the classical optimization loop\footnote{Our derivation does not include the possibility that $G$ is the complete $N=4$ graph, which is not neccesary as it cannot be extended.}. By relaxing the problem into a linear program (LP) we can prove that this bound is at least $0.6367$.

\subsection{Problem setup and a lower bound for the energy}
Consider an arbitrary 3-regular graph $G$ of $N(G)$ nodes that is not the complete graph with $N=4$. We identify for each edge $\langle i,j \rangle$ the sub-graph $G^p_{<i,j>}$ induced by all neighbouring  edges at most $p$ steps away from $\langle i,j \rangle$. At $p=1$ there are only three possible kinds of sub-graph structures as indicated in Fig.~\ref{fig:subgraphs_p1}.
\begin{figure}[H]
    \centering
    \includegraphics[width=0.7\linewidth]{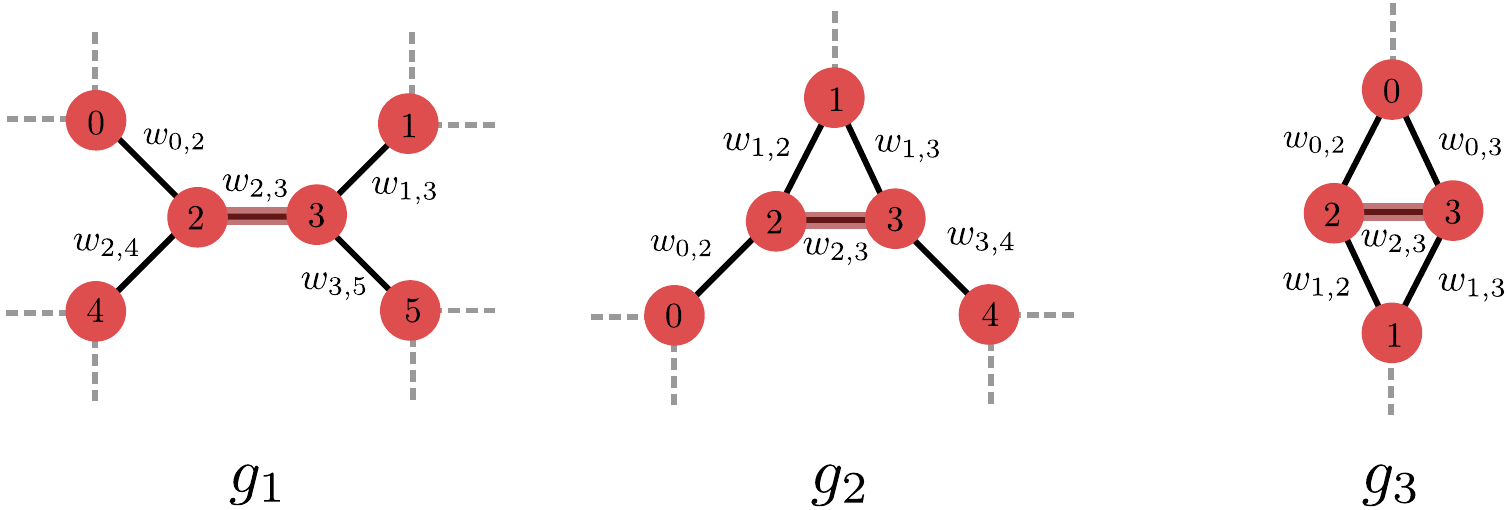}
    \caption{The 3 types of sub-graphs for $p=1$. The sub-graphs describe the environment of the highlighted edge, note how only neighbouring edges are included in the sub-graph. The dotted edges indicate edges outside the sub-graph.}
    \label{fig:subgraphs_p1}
\end{figure}
Since all sub-graph types have 5 edges, we have a total of $3\cdot 2^5=96$ possible sub-graphs when we only consider weights $w_{u,v} \in \{-1,+1\}$. However, by symmetry arguments we can reduce the total number of weighted sub-graphs we have to consider: if reordering the edge labels results in the same graph under any rotation, it will look the same to the QAOA. We define three sets of sub-graphs $g_i, i \in \{ 1,2,3 \}$, representing all 3-regular sub-graph structures with $6$, $5$ and $4$ nodes respectively (see Figure~\ref{fig:subgraphs_p1}), such that for every sub-graph $\lambda \in g_i$ there exists no other graph $\lambda' \in g_i$ that is equivalent in the QAOA setting. Our total set of possible sub-graphs is then $S = g_1 \cup g_2 \cup g_3 $. We now decompose our graph $G$ into sub-graph environments $\lambda \in S$ for which the multiplicity of each $\lambda$ is $N_\lambda(G)$. Since we have such a sub-graph environment for every edge, we must have that the sum over all $N_\lambda (G)$ is equal to the total number of edges $3N(G)/2$. For a single edge $\langle u,v \rangle$ with sub-graph environment $\lambda$, denoted as $\langle u,v \rangle\rightarrow \lambda$, the contribution to the expectation value is given by
\begin{equation}
    \begin{aligned}
        f_{d,\langle u,v \rangle\rightarrow \lambda} (\beta_{d},\gamma_{d}) = \bra{\beta_{d},\gamma_{d}} w_{\langle u,v \rangle} V_d \ket{\beta_{d},\gamma_{d}}.
        \label{eq:f_subgraph}
    \end{aligned}
\end{equation}
Note that Eq.~\eqref{eq:f_subgraph}  only contains terms that operate within $\lambda$. The expectation value of the algorithm with a maximum of $k$ clusters at $p=1$  for some $\beta_{d},\gamma_{d}$ is given by
\begin{equation}
    F_{d}(\beta_{d},\gamma_{d}) = \sum_{\lambda} N_\lambda (G) f_{d,\langle u,v \rangle\rightarrow \lambda} (\beta_{d},\gamma_{d}).
\end{equation}
We must also have that this is always smaller than or equal to the expectation value using the best values of $\beta_{d},\gamma_{d}$
\begin{equation}
    F_{d,\text{max}} = \max_{\beta_{d},\gamma_{d}} F_{d}(\beta_{d},\gamma_{d}) \geq \sum_{\lambda} N_\lambda (G) f_{d,\langle u,v \rangle\rightarrow \lambda} (\beta_{d},\gamma_{d}),
\end{equation}
providing us with a lower bound on the expectation value of the algorithm $F_{d,\text{max}}$ at a given $d$.
\subsection{Upper bound on the number of agreements}
We are now faced with the task of finding an upper bound on $C_\text{max}$. A naive bound would be the total number of edges, but we can do better by considering the same argument Wurtz and Love used to determine an upper bound for MAXCUT~\cite{Wurtz2020}. Consider the graph $\mathcal{G}$, which is a collection of $N_\lambda(G)$ disconnected sub-graphs $G^{p}_{\langle i,j\rangle}$ for each edge in $G$. Since the largest sub-graph has only $6^4$ feasible solutions, a brute-force method can be used to find the ratio between the optimal objective function value and the number of edges for every sub-graph $\lambda$, which we will call $c_\lambda$. Since all sub-graphs are isolated (i.e. not connected to each other), the global fraction of agreements to edges is equal to the average fraction over each sub-graph. However, we have several edges belonging to different disjoint sub-graphs in $\mathcal{G}$ that are actually the same edge in $G$. In this case, we can have that for both sub-graphs a clustering exists for which the edge contributes to the objective value, but that the required clustering is different in both sub-graphs. As a result, we have that the objective function value of $G$ is bounded from above by
\begin{equation}
    C_\text{max} \leq \sum_{\lambda} N_\lambda (G) c_\lambda.
    \label{eq:upper_bound1}
\end{equation}

\subsection{Constructing the hardest graph}
Since we have a lower bound for the algorithm's expectation value and an upper bound for the optimal objective function value, the approximation ratio $\alpha$ is bounded from below by
\begin{equation}
    \alpha(G) \geq \max_{d{}}  \frac{\sum_{\lambda} N_\lambda (G) f_{d,\lambda}(\gamma_{d},\beta_{d})}{\sum_{\lambda} N_\lambda (G) c_\lambda}.
    \label{eq:r}
\end{equation}
The worst case approximation ratio is then given by the hardest graph $G = G^{*}$, which corresponds to some combinations of $N_\lambda (G^{*}) \in \mathbb{Z}_0$ for all sub-graphs $\lambda$. However, not all combinations of $N_\lambda (G)$ correspond to a valid graph, as was already noted by Farhi et al.~\cite{Farhi2014}. For now, we will only consider the structure of the graph and not take the feasibility of certain weight combinations into account. First, we note that for every edge in sub-graph $g_3$  there must be at least 4 other edges that have the environment corresponding to sub-graph $g_2$. Also, we have that the `triangle' of $g_2$ and `crossed square' of $g_3$ cannot share the same vertex, which means that the number of triangular edges and crossed square edges must be smaller than the number of nodes $N(G)$. Our final constraints are that all $N_\lambda (G)$ are non-negative integers and must sum up to $3N(G)/2$.  We define $n_\lambda (G) \equiv N_\lambda(G)/N(G)$ such that we can relax the integer constraint in the limit of large $N(G)$, and obtain the following \textit{Minimax Linear Fractional Program}:
\begin{equation}
\begin{aligned}
    &\min_{n_\lambda(G): \lambda \in S} &&\max_{d{}}  \frac{\sum_{\lambda} n_\lambda (G) f_{d,\lambda}(\gamma_{d},\beta_{d})}{\sum_{\lambda} n_\lambda (G) c_\lambda}\\
     &\text{s.t.} \quad     && \sum_{n_\lambda (G):\lambda \in g_2} n_\lambda (G) - 4 \sum_{n_\lambda (G):\lambda \in g_3} n_\lambda (G) \geq 0\\
   &  &&- \sum_{n_\lambda (G): \lambda \in g_2} n_\lambda (G)\geq -1 \\
    & &&\sum_{n_\lambda (G): \lambda \in S} n_\lambda (G) = \frac{3}{2} \\
     & &&n_\lambda (G) \geq 0 \text{ for all } \lambda \in S
\end{aligned}
\label{eq:minimax}
\end{equation}
Equation~\eqref{eq:minimax} is in the form of a generalized fractional program, which is not reducible to a linear program (LP) which can be solved efficiently. However, there exist linear relaxation bounding techniques that do allow for global optimization up to some error $\epsilon$. We have performed initial experiments with one of those techniques~\cite{Jiao2014}, but were not able to achieve desirable results so far due to the difficulties in approximating our objective function. However, we can construct two related LP formulations that upper and lower bound the value of $\alpha$. Let us define the feasible region $\mathcal{C}$ such that $x \in \mathcal{C}$ when it satisfies the constraints of~\eqref{eq:minimax}.
\begin{enumerate}
    \item Take the number of edges as the upper bound instead of the fractional objectives~\eqref{eq:upper_bound1}, i.e.\ we let $c_\lambda \rightarrow 1$ for all $\lambda.$  Under this relaxation we can write~\eqref{eq:minimax} as the following LP:
\begin{equation}
\begin{aligned}
    &\min && \alpha \\
     &\text{s.t.} \quad &&\sum_{\lambda} n_\lambda (G) f_{d,\lambda}(\gamma_{d},\beta_{d}) \leq \frac{3}{2}\alpha \quad \text{for all } d\\
     & &&n_\lambda (G) \in \mathcal{C}, \alpha \in \mathbb{R} 
\end{aligned}
\label{eq:minimaxLP1}
\end{equation}
The solution of~\eqref{eq:minimaxLP1} is a lower bound to the actual bound, as sub-graphs that originally had $c_\lambda=0.8$ now contribute too much to the upper bound of the optimal objective function value~\eqref{eq:upper_bound1}.
    \item Similarly, we can also assume that only sub-graphs for which a perfect clustering exists contribute to the construction of the most difficult graph. Define the set $S'=\{\lambda|\lambda \in S, c_\lambda=1\}$ to be such a set.  Since $S' \subset S$, the resulting solution provides an upper bound to the best value of $\alpha$ that we can find for $\lambda \in S$.
    \begin{equation}
\begin{aligned}
    &\min && \alpha \\
     &\text{s.t.} \quad &&\sum_{\lambda} n_\lambda (G) f_{d,\lambda}(\gamma_{d},\beta_{d}) \leq \frac{3}{2}\alpha  \quad \text{for all } d\\
     & &&n_\lambda (G) \in \mathcal{C}, \alpha \in \mathbb{R},\lambda \in S' 
\end{aligned}
\label{eq:minimaxLP2}
\end{equation}

\end{enumerate}
All LPs will be solved by CVXOPT contained in the package \textit{lpsolvers}~\cite{lpsolvers2007} for Python.

\subsection{Iterative procedure for determining the bound}
The minimax optimization problem gives us a method to determine the hardest instance $G^{*}$, given that we fix the parameters $\gamma=(\gamma_1,\gamma_2,\gamma_3,\gamma_4),\beta = (\beta_1,\beta_2,\beta_3,\beta_4)$. But how do we choose the values of these parameters? As one would normally do with QAOA, a classical optimization loop can be adopted. First, we choose some initial values for $\gamma,\beta$ and calculate $f_{d,\lambda}(\gamma_d,\beta_d)$ for all sub-graph environments $\lambda$ and all $d\in\{1,2,3,4\}$. Next, we construct the hardest graph $G^{*}$ by solving~\eqref{eq:minimax},~\eqref{eq:minimaxLP1} or~\eqref{eq:minimaxLP2}. Instead of minimizing over $n_\lambda$ whilst keeping $\gamma,\beta$ fixed, in the next step we now fix $n_\lambda$ and try to maximize the objective over $\gamma,\beta$, i.e., we want to 
\begin{equation}
    \begin{aligned}
        &\max_{\gamma_d,\beta_d} &&  \frac{\sum_{\lambda} n_\lambda (G) f_{d,\lambda}(\beta_{d},\gamma_{d})}{\sum_{\lambda} n_\lambda (G) c_\lambda} \\
    &\text{s.t. } &&\gamma_d,\beta_d \in [0,2\pi),
    \end{aligned}
    \label{eq:max_h}
\end{equation}
for all $d\in\{1,2,3,4\}$.
For these new values of $\gamma^{*},\beta^{*}$, there might exist some other graph $G^{**}$ that is more difficult than the one we originally obtained. After we obtained $G^{**}$ we can again try to find new parameters $\beta^{**},\gamma^{**}$. This suggests the use of an iterative procedure for establishing the bound. We do not know whether this procedure converges, but it  doesn't necessarily have to: any combination of $\gamma,\beta$ has its own lower bound that holds specifically for these parameters, and can therefore be used as our result. Convergence would only suggest something about the tightness of this bound, guaranteed global convergence would mean that no better values of $\gamma,\beta$ exist. In our iterative procedure, the best parameter combinations we found are listed in Table~\ref{tab:parameters}:

\begin{table}[H]
\centering
\begin{tabular}{l|llll}
 & $d=1$ & $d = 2$ & $d=3$ & $d=4$ \\ \hline
$\gamma^*_d$ & $-$ &  $2.857$ & $2.773$ & $2.682$ \\
$\beta^*_d$ & $-$ & $0.4833$ & $0.1310$ & $0.1435$
\end{tabular}
\caption{Parameter values for different $d$ at which we were able to obtain the results of Theorem \ref{thm:alpha} and Conjecture \ref{con:alpha}. At $d=1$ the algorithm has only one state and hence no parameters.}
\label{tab:parameters}
\end{table}
For the parameter combinations in Table~\ref{tab:parameters}, \eqref{eq:minimaxLP1} can be solved to an arbitrary precision, which establishes the proof of the following theorem:
\begin{thm}
For 3-regular graphs $G=(V,E)$, where $G$ is not the complete $N=4$ graph, initial points $\gamma^*_d,\beta^*_d$ exist such that at $p=1$ the QAOA that loops over the clusters gives a $0.6367$-approximation algorithm, even without the classical optimization loop.
\label{thm:alpha}
\end{thm}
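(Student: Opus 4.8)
The plan is to reduce the worst-case guarantee to a finite linear program that can be solved with certified precision, exploiting the strict locality of $\text{QAOA}_1$. At $p=1$ the single-edge contribution $f_{d,\langle u,v\rangle\to\lambda}$ in Eq.~\eqref{eq:f_subgraph} depends only on the sub-graph $\lambda$ within distance one of the edge, so the total energy decomposes exactly as $F_d=\sum_\lambda N_\lambda(G)\,f_{d,\lambda}$. First I would fix the parameters $\gamma^*_d,\beta^*_d$ of Table~\ref{tab:parameters} (produced by the iterative procedure, but needed in the proof only as fixed numbers), so that each $f_{d,\lambda}(\gamma^*_d,\beta^*_d)$ becomes a constant obtained by exact simulation of the qudit circuit on the finite sub-graph $\lambda$ for each $d\in\{1,2,3,4\}$, after reducing the $96$ weighted environments to the symmetry representatives in $S=g_1\cup g_2\cup g_3$. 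In parallel, each ratio $c_\lambda$ is obtained by a brute-force search over all clusterings of the finite sub-graph. This turns the abstract guarantee into the explicit fixed-parameter lower bound $\alpha(G)\geq\max_d\frac{\sum_\lambda N_\lambda f_{d,\lambda}}{\sum_\lambda N_\lambda c_\lambda}$ of Eq.~\eqref{eq:r}.

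Next I would bound the minimum over all admissible $G$. Writing $n_\lambda=N_\lambda/N$ and relaxing the integrality of the multiplicities in the large-$N$ limit enlarges the feasible set, so the minimum over graphs can only decrease and the resulting bound stays valid. The obstacle is that this worst-case quantity is a minimax \emph{fractional} program, Eq.~\eqref{eq:minimax}, not an LP. I would resolve this by replacing each $c_\lambda$ with $1$, which is the relaxation leading to Eq.~\eqref{eq:minimaxLP1}: since $c_\lambda\leq 1$ this only inflates the denominator and hence lowers the ratio, so the relaxed value is a \emph{conservative} lower bound; moreover the denominator collapses to the constant $\sum_\lambda n_\lambda=3/2$, linearising the objective into the single scalar $\alpha$ with constraints $\sum_\lambda n_\lambda f_{d,\lambda}\leq\frac{3}{2}\alpha$ for every $d$.

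The final step is to assemble the linear feasibility constraints on $n_\lambda$ and solve the LP to arbitrary precision, reading off the value $0.6367$ and thereby certifying $\min_G\alpha(G)\geq 0.6367$. The crucial point for rigour is that every constraint must be a \emph{necessary} condition satisfied by all $3$-regular $G$ other than the excluded $K_4$: the normalisation $\sum_\lambda n_\lambda=3/2$, non-negativity, and the structural relations tying the triangle environments $g_2$ to the crossed-square environment $g_3$. Provided these are genuine necessary conditions, the LP feasible region $\mathcal{C}$ contains every realisable multiplicity vector, so minimising over $\mathcal{C}$ undershoots the true worst case and the LP optimum is a valid lower bound. I expect the main difficulty to lie precisely in verifying that these structural constraints are correct necessary conditions --- if the relaxation were made too tight it would exclude valid graphs and could invalidate the bound, whereas the two \emph{safe} relaxations (the substitution $c_\lambda\to 1$ and the integrality relaxation) can only weaken it and therefore preserve correctness.
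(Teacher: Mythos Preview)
Your proposal is correct and mirrors the paper's proof essentially step for step: the same locality-based sub-graph decomposition, the same fixed parameters from Table~\ref{tab:parameters}, the same conservative relaxation $c_\lambda\to 1$ that collapses the minimax fractional program~\eqref{eq:minimax} into the LP~\eqref{eq:minimaxLP1}, and the same structural feasibility constraints on the multiplicities. Your explicit emphasis on why each relaxation (integrality, $c_\lambda\to 1$) is \emph{safe} and why the structural constraints must be \emph{necessary} conditions is exactly the logical content the paper relies on.
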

However, as stated before, this bound is too strict as the objective function of $G$ is overestimated. Using COBYLA as a local optimizer, we numerically observed that solving~\eqref{eq:minimax} never resulted in a bound lower than $0.6699$. This is equal to the upper bound we find by solving~\eqref{eq:minimaxLP2} (also $0.6699$). Since COBYLA does not guarantee a global minimum and~\eqref{eq:minimaxLP2} only considers a subset of all possible graphs, we can only conjecture that this is the actual lower bound:
\begin{conj}
For 3-regular graphs $G=(V,E)$, where $G$ is not the complete $N=4$ graph, initial points $\gamma^*_d,\beta^*_d$ exist such that at $p=1$ the QAOA that loops over the clusters gives a $0.6699$-approximation algorithm, even without the classical optimization loop. 
\label{con:alpha}
\end{conj}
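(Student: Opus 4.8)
The plan is to upgrade Conjecture~\ref{con:alpha} to a theorem by solving the minimax linear fractional program~\eqref{eq:minimax} to certified \emph{global} optimality, which removes precisely the two caveats flagged in the text: the absence of a global guarantee from COBYLA, and the restriction to $S'$ used in~\eqref{eq:minimaxLP2}. I fix the parameters $\gamma^*_d,\beta^*_d$ to the values of Table~\ref{tab:parameters}, so that the claim is about \emph{these} particular initial points and no maximisation over $\gamma,\beta$ (the iteration of~\eqref{eq:max_h}) enters the final bound. Writing $f_d$ for the vector $(f_{d,\lambda})_\lambda$ and $c$ for $(c_\lambda)_\lambda$, the quantity to lower-bound is
\[
    V = \min_{n\in\mathcal{C}}\ \max_{d\in\{1,2,3,4\}} \frac{f_d^{\top} n}{c^{\top} n}.
\]
Since the relaxation~\eqref{eq:minimaxLP2} already supplies $V \leq 0.6699$, it suffices to prove the matching inequality $V \geq 0.6699$; combining the two then pins $V=0.6699$ exactly.

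The key observation is that $c_\lambda$ is independent of $d$, so the common denominator factors out and $V = \min_{n}\big(\max_d f_d^{\top} n\big)/\,c^{\top} n$ is a \emph{single}-ratio program: its numerator $\max_d f_d^{\top} n$ is convex and piecewise linear, and its denominator is linear and strictly positive on $\mathcal{C}$ (indeed $c^{\top} n \geq \tfrac{4}{5}\cdot\tfrac{3}{2}>0$, since every sub-graph has $c_\lambda\geq\tfrac{4}{5}$ and $\sum_\lambda n_\lambda = \tfrac{3}{2}$). Such a ratio is quasiconvex, and although it is not itself an LP, it is amenable to Dinkelbach's parametric method. Concretely, I define
\[
    \psi(\mu) = \min_{n\in\mathcal{C}}\ \max_{d} \big( f_d^{\top} n - \mu\, c^{\top} n \big),
\]
which for each fixed $\mu$ is a genuine LP (introduce an epigraph variable $t$ with $f_d^{\top} n - \mu c^{\top} n \leq t$ for every $d$, and minimise $t$ over $n\in\mathcal{C}$). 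Because $c^{\top} n>0$, one has the exact equivalence $V\geq\mu \iff \psi(\mu)\geq 0$, and $\psi$ is concave, strictly decreasing, with its unique root at $V$. This converts the globally non-trivial fractional minimax into a family of LPs, each solvable exactly and each carrying a dual certificate of optimality, thereby replacing COBYLA's local search by a provably global procedure.

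Execution then proceeds in three steps. First I compute the classical fractions $c_\lambda$ exactly: each sub-graph type has at most $6^4$ feasible clusterings, so finite enumeration yields every $c_\lambda$ as an exact rational. Second I evaluate the coefficients $f_{d,\lambda}$; at $p=1$ these are closed-form trigonometric expressions in $(\gamma^*_d,\beta^*_d)$, which I enclose between rational lower and upper bounds using certified interval arithmetic. Third I solve the single LP $\psi(0.6699)$ to global optimality over the \emph{full} set $S$; verifying $\psi(0.6699)\geq 0$ establishes $V\geq 0.6699$, which together with the upper bound from~\eqref{eq:minimaxLP2} yields $V=0.6699$ and converts Conjecture~\ref{con:alpha} into a theorem.

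I expect the main obstacle to be the rigorous treatment of the transcendental coefficients $f_{d,\lambda}$ exactly at the margin where the bound is tight. Because $0.6699$ is (conjecturally) the \emph{exact} optimum, the LP $\psi(0.6699)$ sits right at the root $\psi=0$, so the certified enclosures of the $f_{d,\lambda}$ must be sharp enough that the sign of $\psi$ near $\mu=0.6699$ is determined unambiguously; a loose interval enclosure may straddle zero and leave the inequality undecided (this is plausibly what frustrated the relaxation technique of~\cite{Jiao2014} cited in the text). The honest outcome of the programme is therefore either a proof that $\psi(0.6699)\geq0$, settling the conjecture affirmatively and in agreement with the COBYLA experiments, or the discovery that the true root of $\psi$ lies marginally below $0.6699$, in which case the parametric method still returns the exact worst-case bound $V$ as that root.
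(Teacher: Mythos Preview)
The paper does not prove this statement; it is explicitly left as a conjecture. The only evidence offered is (i) local optimisation of~\eqref{eq:minimax} with COBYLA, which never returned a value below $0.6699$, and (ii) the matching upper bound $0.6699$ from the restricted LP~\eqref{eq:minimaxLP2}. Since COBYLA carries no global guarantee and~\eqref{eq:minimaxLP2} only ranges over $S'\subsetneq S$, the authors stop short of a proof.

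Your proposal therefore goes strictly beyond the paper, and your central observation is correct and is exactly what the paper appears to have overlooked. Because the denominator $c^\top n$ is common to all $d$, the inner maximum factors through and~\eqref{eq:minimax} is \emph{not} a genuinely generalized fractional program (contrary to what the text asserts just after~\eqref{eq:minimax}) but a single-ratio problem with convex piecewise-linear numerator and positive linear denominator. Dinkelbach's parametric reduction then applies verbatim: for each fixed $\mu$ the problem $\psi(\mu)=\min_{n\in\mathcal C}\max_d(f_d^\top n-\mu\,c^\top n)$ is an LP whose dual furnishes a certificate, and $V\ge\mu\iff\psi(\mu)\ge0$. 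This removes precisely the two caveats the paper raises and would, if executed, upgrade the conjecture to a theorem.

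Two small remarks. First, proving the conjecture only needs $V\ge 0.6699$; the upper bound from~\eqref{eq:minimaxLP2} is irrelevant to the statement itself and only shows the constant cannot be sharpened within this relaxation. Second, your concern about sitting exactly at the root of $\psi$ is likely moot: the $f_{d,\lambda}$ are transcendental in $(\gamma_d^*,\beta_d^*)$ while the $c_\lambda$ are rational, so $V$ is almost surely not a four-digit decimal, and $\psi(0.6699)$ should be bounded strictly away from zero, well within reach of interval arithmetic on the at most $96$ subgraph coefficients.
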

For this particular combination of sub-graphs belong to the hardest graph $G^*$, the classical optimization step actually does not make much of a difference: our results show that solving~\eqref{eq:max_h} with  $G = G^* $ results in
\begin{equation*}
    \begin{aligned}
        &\max_{\gamma_d,\beta_d} &&\max_{d}  \frac{\sum_{\lambda} n_\lambda (\mathcal{G^*}) f_{d,\lambda}(\gamma_{d},\beta_{d})}{\sum_{\lambda} n_\lambda (\mathcal{G^*}) c_\lambda} \approx 0.674,
    \end{aligned}
\end{equation*}
which is only a small improvement on the conjectured bound. This shows the quality of these values of $\gamma^{*},\beta^{*}$ (as well as the hardness of the graph $G^*$). Additionally, this also provides further indication that QAOA, when having access to good initial points, can also be used without the classical optimization step~\cite{Brandao2018,Streif20192}.

\subsection{Performance bounds for $p>1$?}
Unfortunately, we do not observe evidence for the existence of a trivial graph hierarchy as  Wurtz and Love proved (and conjectured) for MAXCUT at $p\leq2$ ($p>2)$~\cite{Wurtz2020}. In fact, when we consider their large loop conjecture~\cite{Wurtz2020} for our problem, we find that for our used $\gamma,\beta$ we obtain a worst-case approximation ratio of $\alpha=0.693$, which is significantly larger than the bounds in Theorem~\ref{thm:alpha} and Conjecture~\ref{con:alpha}. Therefore, we do not conjecture the structure of the most difficult graph at any $p$, which would ease the determination of lower bounds at larger $p$. If we are to use the same method as we used for $p=1$, we will have to consider of the order of $123\cdot 2^{13} \approx 10^6$ different sub-graphs (not taking symmetries into account). We can reduce this number by exploiting symmetries, but since determining the energy of the largest sub-graph (14 nodes) is computationally very expensive we do not attempt to determine bounds for $p>1$. In fact, a back-of-the-envelope estimation to the amount of computing hours needed to execute such a computation -- again not utilizing symmetries -- shows that we would need on the order of 10 million computing hours.\\


\section{\label{app:RQC_BuildingBlocks} Details of the experimental building blocks}

\subsection{Dynamics of a driven three-level system}

\begin{figure}[t!]
	\centering
	\includegraphics[width=0.3\textwidth]{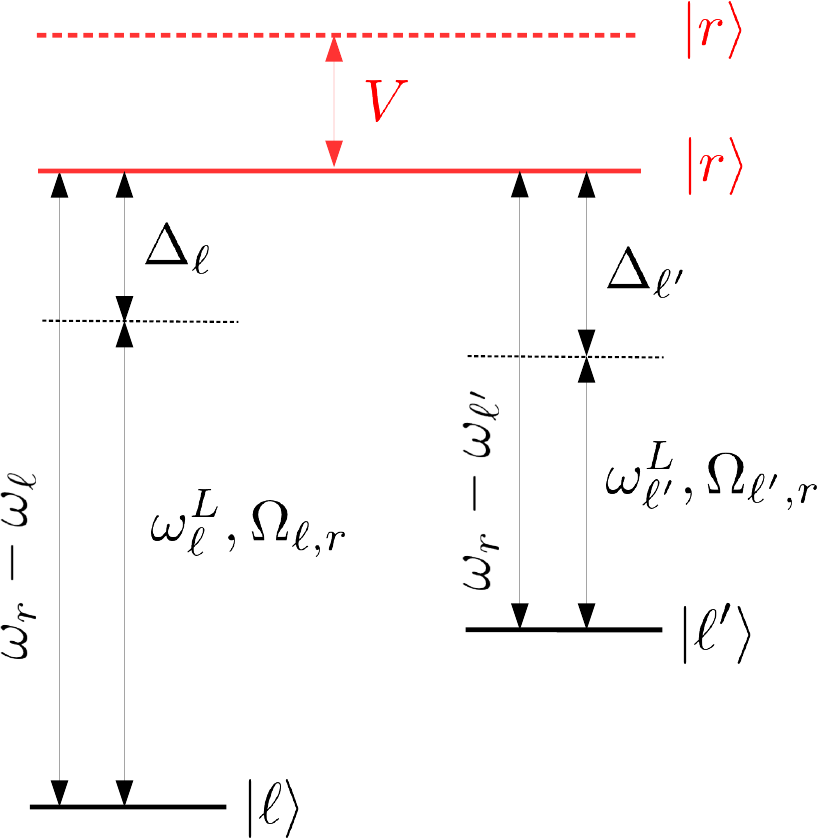}
	\caption{
		Scheme of a driven three-level system in a so-called $\Lambda$-configuration. Two qudit states $\ket{\ell}, \ket{\ell'}$ of energies $\omega_{\ell},\omega_{\ell'}$ are coupled with laser light of frequencies $\omega_{\ell}^L$,$\omega_{\ell'}^L$ and Rabi frequencies $\Omega_{\ell,r}, \Omega_{\ell',r}$ to a Rydberg state $\ket{r}$ of energy $\omega_r$. The Rydberg state can be additionally shifted by energy $V(R)$ if an atom at distance $R$ is excited to a Rydberg state (we recall we use $\hbar=1$).
	}	
	\label{fig:lambda_scheme}
\end{figure}

Here we briefly review the standard derivation of the unitaries Eqs.~(\ref{eq:U2level}),(\ref{eq:U3level}). Let us consider the level scheme depicted in Fig.~\ref{fig:lambda_scheme}. Two qudit states $\ket{\ell}, \ket{\ell'}$ of energies $\omega_{\ell}$,$\omega_{\ell'}$ are coupled with lasers of frequencies $\omega_\ell^L$,$\omega_{\ell'}^L$ and Rabi frequencies $\Omega_{\ell,r}$,$\Omega_{\ell',r}$ to a Rydberg state $\ket{r}$ of energy $\omega_r$. Furthermore, we allow for the Rydberg state to be shifted by energy $V$ if a nearby atom is in a Rydberg state ($V=V(R) = C_6/R^6$ for the atoms separated by a distance $R$ and $C_6$ is the Van der Waals coefficient, cf. Sec.~\ref{sec:RQC_BuildingBlocks}). Within the rotating wave approximation and in the frame where the levels $\ket{\ell},\ket{\ell'}$ rotate at the laser frequencies $\omega_{\ell}, \omega_{\ell'}$ respectively~\cite{scully1999quantum, steck}, the corresponding Hamiltonian reads
\beq
    H = \sum_{j=\ell,\ell'} \Omega_{j,r} \ket{j}\bra{r} + {\rm H.c.} + \Delta_j \ket{j}\bra{j} + V \ket{r}\bra{r}.
    \label{eq:H_3level}
\eeq
Here $\Delta_j = \omega_j^L - (\omega_r - \omega_j)$ is the single-photon detuning between the qudit state $j=\ell,\ell'$ and the Rydberg state $\ket{r}$ (note our sign convention, where $\Delta_j < 0$ refers to a red-detuned laser beam). Written in the $\{ \ket{\ell}, \ket{\ell'},\ket{r} \}$ basis, the Hamiltonian (\ref{eq:H_3level}) is
\beq
	H = 
	\begin{pmatrix}
		\Delta_\ell & 0 & \Omega_{\ell,r} \\
		0 & \Delta_{\ell'} & \Omega_{\ell',r} \\
		\Omega_{\ell,r}^* & \Omega_{\ell',r}^* & V
	\end{pmatrix}.
	\label{eq:H_3level_mat}
\eeq
The associated unitary operator $U = {\rm e}^{-i t H}$ can in principle be obtained analytically by diagonalizing (\ref{eq:H_3level_mat}), leading to a cubic equation for the eigenvalues. The situation simplifies at two-photon resonance $\Delta_{\ell} = \Delta_{\ell'} = 0$. 
In the limit of large interaction, $V \rightarrow \infty$, we get
\beq
	U = 
	\begin{pmatrix}
		1 & 0 & 0 \\
		0 & 1 & 0 \\
		0 & 0 & {\rm e}^{-i V t}
	\end{pmatrix}.
\eeq
Denoting $\Omega_0 = \Omega_{\ell,r}, \Omega_1 = \Omega_{\ell',r}$ for the ease of notation, the $\ket{r}\bra{r}$ element of $U$ for a general $V$ reads
\beq
	U_{rr} = e^{-\frac{1}{2} i t V} \left(\cos \left(\frac{1}{2} t \sqrt{V^2+4 \Omega ^2}\right)-\frac{i V \sin \left(\frac{1}{2} t \sqrt{V^2+4 \Omega ^2}\right)}{\sqrt{V^2+4 \Omega
   ^2}}\right),
\eeq
where
\beq
	\Omega = \sqrt{|\Omega_0|^2 + |\Omega_1|^2}.
\eeq
Requiring that the Rydberg population returns to zero at the end of the evolution, if it was initially unpopulated, is equivalent to $|U_{rr}|=1$. This leads to the condition
\beq
	t = \frac{2 m \pi}{\sqrt{V^2+4 \Omega ^2}}, \; m \in {\mathbb N}.
\eeq
Specifically, in the absence of interaction ($V=0$) and for $m$ odd we get the expression Eq.~(\ref{eq:U3level})
\beq
	U^{\rm 3-level}_{\ell,\ell'} =	  -\begin{pmatrix}
     \cos \frac{\theta}{2} & {\rm e}^{i \varphi} \sin \frac{\theta}{2} & 0\\
     {\rm e}^{-i \varphi} \sin \frac{\theta}{2} & -\cos \frac{\theta}{2} & 0 \\
     0 & 0 & 1
    \end{pmatrix}, \nonumber
\eeq
with $\theta$ and $\varphi$ defined in Eqs.~(\ref{eq:theta_def}).

The unitary Eq.~(\ref{eq:U2level}) is obtained in an analogous way when considering a two-level system with levels $\ket{\ell},\ket{\ell'}$ coupled by a laser with Rabi frequency $\Omega_{\ell,\ell'}$ and described by the Hamiltonian
\beq
    H = \Omega_{\ell,\ell'} \ket{\ell}\bra{\ell'} + {\rm H.c.} + \Delta \ket{\ell'}\bra{\ell'},
\eeq
where $U={\rm e}^{-i t H}$ is again evaluated on resonance $\Delta=0$.

\subsection{Derivation of the qudit controlled-phase gates}

In this section we describe a systematic construction of the qudit controlled-phase gates Eqs.~(\ref{eq:CP_symmetric}) and (\ref{eq:CP_Kj}). 

\subsubsection{On qudit controlled-phase gate Eq.~(\ref{eq:CP_symmetric})}
\label{app:RQC_gate_symm}

Let us start with the gate (\ref{eq:CP_symmetric}) acting on a qubit, $d=2$. First we note that the action of the unitary ${\cal U}^{(2|1)}_0$ corresponds to
\beq
    {\cal U}^{(2|1)}_0 \ket{\ell}\ket{\ell'} = 
    \begin{cases}
        {\rm e}^{i\frac{\gamma}{2}} \ket{\ell} \ket{\ell'} \;\; {\rm if} \; (\ell,\ell') = (1,0) \\
        \ket{\ell} \ket{\ell'} \;\; {\rm otherwise}.
    \end{cases}
\eeq
It then follows that for a qubit, the controlled-phase gate expressed in the basis $\{ \ket{00},\ket{01},\ket{10},\ket{11} \}$ reads 
\beq
{\rm CP}(\frac{\gamma}{2}) = {\cal U}^{(2|1)}_0 {\cal U}^{(2|1)}_1 = {\rm diag}(1,{\rm e}^{i \gamma/2},{\rm e}^{i \gamma/2},1)
\label{eq:CP_qubit_simple}
\eeq
(in this case, an alternative sequence ${\cal U}^{(1|2)}_0 {\cal U}^{(2|1)}_0$ [or ${\cal U}^{(2|1)}_0 {\cal U}^{(1|2)}_0$] yields the same result). Furthermore, the cost of the controlled-phase gate (\ref{eq:CP_qubit_simple}) is $[{\rm CP}]=2=d$. 

Motivated by the construction in Eq.~(\ref{eq:CP_qubit_simple}), we next wish to extend it to qutrits. Here the situation becomes slightly more involved. Observing that the gate ${\cal U}^{(q_t|q_c)}_\ell$ is always diagonal, i.e. introducing at most a phase for each element $\ket{\ell} \ket{\ell'}$ of the two-qutrit system, we introduce the following graphical notation for a representation of a general \emph{diagonal} gate acting in the two-qutrit space
\beq
	{\cal U} \rightarrow
	\parbox{0.3\textwidth}{
	\diag{22}{02}{20}{21}{12}{10}{00}{11}{01} \raisebox{0.9cm}{,}
	\label{eq:Uqutrit_notation}
	}
\eeq
where the labels on the right-hand side denote the coordinates of the diagonal element $\bra{\ell \ell'} {\cal U} \ket{\ell \ell'}$ of ${\cal U}$. Since ${\cal U}$ is diagonal with elements, which are either 1 or a pure phase ${\rm e}^{i \varphi}$, we shall use the notation $\bullet$ for $1$ and $\varphi$ for the pure phase for easier readability.
With the help of (\ref{eq:Uqutrit_notation}), we can list the action of the possible six unitaries ${\cal U}_\ell^{(q_t|q_c)}$ acting on the two qutrits:
\beqa
    {\cal U}^{(1|2)}_2 &=&
    \parbox{0.3\textwidth}{
    \diag{\bullet}{\bullet}{\gamma/2}{\gamma/2}{\bullet}{\bullet}{\bullet}{\bullet}{\bullet}
    }
    \;\; {\cal U}^{(2|1)}_2 = 
    \parbox{0.3\textwidth}{
    \diag{\bullet}{\gamma/2}{\bullet}{\bullet}{\gamma/2}{\bullet}{\bullet}{\bullet}{\bullet}
    }
    \nonumber \\
    {\cal U}^{(1|2)}_1 &=&
    \parbox{0.3\textwidth}{
    \diag{\bullet}{\bullet}{\bullet}{\bullet}{\gamma/2}{\gamma/2}{\bullet}{\bullet}{\bullet}
    }
    \;\;
    {\cal U}^{(2|1)}_1 =
    \parbox{0.3\textwidth}{
    \diag{\bullet}{\bullet}{\bullet}{\gamma/2}{\bullet}{\bullet}{\bullet}{\bullet}{\gamma/2}
    }
    \nonumber \\
    {\cal U}^{(1|2)}_0 &=& 
    \parbox{0.3\textwidth}{
\diag{\bullet}{\gamma/2}{\bullet}{\bullet}{\bullet}{\bullet}{\bullet}{\bullet}{\gamma/2}
}
\;\;
{\cal U}^{(2|1)}_0 =
\parbox{0.3\textwidth}{
\diag{\bullet}{\bullet}{\gamma/2}{\bullet}{\bullet}{\gamma/2}{\bullet}{\bullet}{\bullet} \raisebox{0.9cm}{.}
}
\label{eq:Uqutrit}
\eeqa
It then follows that the controlled-phase gate can be obtained by concatenation of all the gates in (\ref{eq:Uqutrit})
\beq
\centering
	{\rm CP}(\gamma) = {\cal U}^{(1|2)}_2 {\cal U}^{(2|1)}_2 {\cal U}^{(1|2)}_1 {\cal U}^{(2|1)}_1 {\cal U}^{(1|2)}_0 {\cal U}^{(2|1)}_0 =
	\parbox{0.3\textwidth}{
\diag{\bullet}{\gamma}{\gamma}{\gamma}{\gamma}{\gamma}{\bullet}{\bullet}{\gamma} \raisebox{0.9cm}{,}
}
\label{eq:CP_qutrit}
\eeq
which corresponds to Eq.~(\ref{eq:CP_gate}) up to a global phase. It is then a straightforward exercise to verify that the manifestly symmetric form suggested by (\ref{eq:CP_qutrit}) generalizes to higher $d$ yielding the expression (\ref{eq:CP_symmetric}). In fact, it also applies to a qubit, yielding the cost $[{\rm CP}]=4=2d$, which is twice as large as that of (\ref{eq:CP_qubit_simple}). For the ease of exposition and compactness of notation, and motivated by the fact that we are mainly interested in applications with $d>2$, we have kept the controlled-phase gate (\ref{eq:CP_symmetric}) also for qubits in the main text.

\subsubsection{On qudit controlled-phase gate Eq.~(\ref{eq:CP_Kj})}

To understand the logic behind the construction of (\ref{eq:CP_Kj}), it is instructive to work out a specific example and we shall consider the simplest case beyond qubit, the qutrit. To appreciate the role of the qudit CX gates in (\ref{eq:CP_Kj}), we focus specifically on the equal-state elements $\ket{\ell}\ket{\ell}$ of the two-qutrit system. The evolution of these elements under the action of the controlled-phase gate (\ref{eq:CP_Kj}) can be symbolically written as
\beqa
    \ket{22} \xtofrom[{\rm CX}^{(1|2)}_{1,2|\neg 1}]{{\rm CX}^{(1|2)}_{1,2|\neg 1}} \ket{12} \xtofrom[{\rm CX}^{(1|2)}_{0,1|\neg 0}]{{\rm CX}^{(1|2)}_{0,1|\neg 0}} {\color{red} {\rm e}^{i \gamma}}\ket{02} {\color{red} \hookleftarrow P^{(1)}_0(\gamma)} \nonumber\\
    \ket{11} \xtofrom[\phantom{{\rm CX}^{(1|2)}_{1,2|\neg 1}}]{\phantom{  {\rm CX}^{(1|2)}_{1,2|\neg 1}}  } \ket{11} \xtofrom[\phantom{{\rm CX}^{(1|2)}_{0,1|\neg 0}}]{  \phantom{{\rm CX}^{(1|2)}_{0,1|\neg 0}}  } {\color{red} {\rm e}^{i \gamma}}\ket{01} {\color{red} \hookleftarrow P^{(1)}_0(\gamma)} \nonumber \\
    \ket{00} \xtofrom[\phantom{{\rm CX}^{(1|2)}_{1,2|\neg 1}}]{\phantom{  {\rm CX}^{(1|2)}_{1,2|\neg 1}}  } \ket{00} \xtofrom[\phantom{{\rm CX}^{(1|2)}_{0,1|\neg 0}}]{  \phantom{{\rm CX}^{(1|2)}_{0,1|\neg 0}}  } {\color{red} {\rm e}^{i \gamma}}\ket{00} {\color{red} \hookleftarrow P^{(1)}_0(\gamma)}.
    \label{eq:CP_Kj_scheme}
\eeqa
Here, the CX gates (listed only in the first line to avoid cluttering) act in the direction indicated by the arrows. We thus see that the purpose of the CX gates is to bring the equal state elements to the elements of the form $\ket{0 \ell},\; \forall \ell$, i.e. ${\rm CX}^{(1|2)}_{0,1|\neg 0} {\rm CX}^{(1|2)}_{1,2|\neg 1} \ket{\ell \ell} \rightarrow \ket{0\ell}, \; \ell=0,1,2$. Since there are at most three states of the form $\ket{0 \ell}$, the states $\ket{02},\ket{01},\ket{00}$ on the right of (\ref{eq:CP_Kj_scheme}) exhaust all such states. The phase gate $P^{(1)}_0$ then imprints a phase $\gamma$ to all these states, which is highlighted in red, and acting with the CX gates backwards yields the controlled-phase gate (\ref{eq:CP_Kj}). Similarly to the construction in Sec.~\ref{app:RQC_gate_symm} above, it is straightforward to verify that the scheme (\ref{eq:CP_Kj_scheme}) holds also for higher $d$.



\section{Simulating complete graphs using only nearest-neighbour interactions}
\label{app:HW}

For first experiments, it might be prudent to focus on problem instances where the problem graph matches the topology of the quantum computer well. Our simulations considered complete graphs, so for those it is natural to ask: How many swap operations are required to let a limited-interaction quantum computer execute the algorithm on a complete graph? Or, if the hardware allows swaps on distinct nearest-neighbour pairs of qudits to be performed in parallel, how many \emph{layers} of swaps?

We can construct a simple sequence of swaps that achieves this task using not too many layers:
\begin{prop}
There exists a sequence of swaps such that the complete graph can be simulated on a line graph in $n-2$ layers of swaps. The total swap count for this protocol is $\frac{(n-1)(n-2)}{2}$.
\end{prop}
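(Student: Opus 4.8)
The plan is to exhibit an explicit swap schedule and then verify the two numerical claims (layer count and total swap count) by direct arithmetic, leaving the combinatorial heart --- that every pair of qudits is brought into adjacency --- to a worldline argument. Concretely, I would use the brick-wall (odd--even transposition) network already referenced in the main text: label the qudits $1,\dots,n$ by their initial positions on the line, and alternate between an ``odd'' swap layer acting on the position pairs $(1,2),(3,4),(5,6),\dots$ and an ``even'' swap layer acting on $(2,3),(4,5),\dots$, beginning with an odd layer and performing $n-2$ layers in all. The two-qudit (CP) gate is understood to act on every currently adjacent pair at each of the $n-1$ time slices $C_0,C_1,\dots,C_{n-2}$, where $C_0$ is the initial configuration and $C_t$ is the configuration after the $t$-th swap layer.

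First I would settle the counts. The layer count is $n-2$ by construction. For the swaps, an odd layer contains $\lfloor n/2\rfloor$ swaps and an even layer contains $\lfloor (n-1)/2\rfloor$ swaps; summing the alternating sequence of $n-2$ such layers gives, after separating the cases $n$ even and $n$ odd,
\[
\sum_{\text{layers}} (\#\,\text{swaps}) = \frac{(n-1)(n-2)}{2} = \binom{n-1}{2},
\]
which is the claimed value. This step is routine bookkeeping.

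The substantive step is coverage: I must show that every one of the $\binom{n}{2}$ pairs $(i,j)$ is adjacent in at least one of $C_0,\dots,C_{n-2}$. The clean way to see this is the billiard picture of the odd--even network: each qudit moves one site per layer at constant ``velocity'' $\pm 1$ (qudits on odd sites start moving right, those on even sites start moving left), reflecting off the two ends $1$ and $n$ with the standard one-step delay at each wall. Two qudits are adjacent precisely when their trajectories meet, and in the associated reversal network every pair meets exactly once. Here the $n-1$ consecutive pairs $(i,i+1)$ are already adjacent in the initial slice $C_0$ and so need no crossing; it is exactly the removal of these ``free'' adjacencies that lets the remaining $\binom{n-1}{2}$ pairs be covered within only $n-2$ further layers, rather than the full reversal depth. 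I would make this precise by unfolding the reflections into straight-line trajectories on a cycle of circumference $2(n-1)$, reading off the meeting time of each pair, and checking that the latest meeting time of any non-consecutive pair is at most $n-2$.

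I expect this last tightness check to be the main obstacle. The meeting times are not monotone in the distance $|i-j|$: a widely separated pair can meet well before a same-parity close pair such as $(1,3)$, for which both qudits initially drift in the same direction and cannot meet until one of them has reflected off a wall. It is precisely these pairs that are covered only at the final layer $n-2$, so establishing that none is left uncovered requires careful bookkeeping of the parity of the starting sites and of the wall reflections. Once the unfolded-trajectory description is in place, however, this reduces to comparing a handful of closed-form expressions for the meeting times, and the worst case can be shown to equal $n-2$ exactly, completing the proof.
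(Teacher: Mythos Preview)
Your proposal is correct and uses essentially the same construction as the paper: the alternating odd--even (brick-wall) swap network, with the same trajectory/drift picture (odd sites move right, even sites move left, with reflections at the walls) to argue coverage. The only difference is granularity: where the paper simply asserts ``it is easily checked that after $n-2$ layers all pairs have been nearest neighbours,'' you actually outline how to check it via the unfolded billiard trajectories and correctly identify the same-parity near pairs as the tight cases---so your write-up is, if anything, more complete than the paper's.
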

\begin{proof}
First assume $n$ is even (the argument follows the same structure if $n$ is odd).
The sequence will be generated by alternating the two following sets of swaps: $\pi=\{(1,2),(3,4),\dots,(n-1,n)\}$, $\sigma=\{(2,3),(4,5),\dots,(n-2,n-1)\}$. First observe what happens when applying $\pi$ and $\sigma$ to a qudit starting in an odd position $i$. If $i\neq n-1$, this qudit will move to $i+1$ because of application of $\pi$, and then to $i+2$ by $\sigma$. If $i = n-1$, the qudit will move to position $n$. Similarly, for a qudit starting in even position $j$ will move to position $j-2$, except that the qudit at position 2 will move to position 1.

It is easily checked that after $n-2$ layers of swaps, all pairs of qudits have been nearest-neighbours at some stage of the process.
\end{proof}

For a more thorough analysis of swapping sequences for executing all-to-all interactions, see also Ref.~\cite{Babbush_2018_PRX} and Ref.~\cite{Kivlichan_2018_PRL}, where variants of the previous construction are analysed in-depth.

We can easily see that the previous strategy is close to optimal, and that no strategy can exist that is much better: by a simple counting argument the number of layers of swaps required to enable all interactions of a complete graph on a line could not be much lower. This motivates us to look for the corresponding lower bound.

\begin{prop}
Consider a quantum computer for which the graph of possible interactions is a line of $n$ qudits. Then at least $\frac{1}{2}\binom{n}{2}-o(n)$ SWAP gates are required to enable all-to-all interactions. Additionally, at least $\frac{n}{2}-1$ layers of SWAP gates are required.
\end{prop}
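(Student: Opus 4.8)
The plan is to establish the two bounds separately, each by an elementary counting or potential argument that tracks, respectively, how many \emph{new} adjacencies a single SWAP can create and how fast two qudits can approach each other across parallel layers.

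For the SWAP-count bound I would argue as follows. There are $\binom{n}{2}$ unordered pairs of qudits, and in the initial line configuration exactly $n-1$ of them occupy adjacent sites; assign to every remaining pair the first moment at which it becomes adjacent. The crucial lemma is that a single SWAP acting on the nearest-neighbour sites $(p,p+1)$ can produce at most two pairs that are adjacent afterwards but were not adjacent immediately before: if the occupants of sites $p-1,p,p+1,p+2$ read $x,a,b,y$ before the swap, they read $x,b,a,y$ afterwards, so the only adjacencies that can switch on are $\{x,b\}$ and $\{a,y\}$, while $\{a,b\}$ stays adjacent and all adjacencies away from these four sites are untouched (at the two ends of the line even fewer new adjacencies are created, which only helps). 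Hence each SWAP accounts for the first-time adjacency of at most two pairs, and since $\binom{n}{2}-(n-1)$ pairs must acquire a first-time adjacency from some SWAP, the total number of SWAPs is at least $\frac{1}{2}\left(\binom{n}{2}-(n-1)\right)=\frac{(n-1)(n-2)}{4}=\frac{1}{2}\binom{n}{2}-O(n)$, which is of the claimed form.

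For the layer bound I would use a distance/potential argument on a single worst-case pair. Track the two qudits that start at the extreme sites $1$ and $n$ and let $p_A(t),p_B(t)$ be their site indices after $t$ layers, so $p_A(0)=1$ and $p_B(0)=n$. Because the SWAPs within one layer act on disjoint nearest-neighbour pairs, each qudit takes part in at most one SWAP per layer and thus moves by at most one site, giving $|p_A(t)-p_B(t)|\ge(n-1)-2t$. For this pair to be adjacent at some layer $t'$ we need $|p_A(t')-p_B(t')|\le 1$, which forces $(n-1)-2t'\le 1$, i.e. $t'\ge\frac{n-2}{2}=\frac{n}{2}-1$; since the total number of layers is at least $t'$, at least $\frac{n}{2}-1$ layers are required (for odd $n$ the integer ceiling $\lceil(n-2)/2\rceil$ still exceeds $\frac{n}{2}-1$).

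The routine verifications are the per-SWAP ``at most two new adjacencies'' claim and the per-layer ``moves by at most one site'' claim, together with the harmless boundary bookkeeping and the integrality of the layer count. The main thing I would flag, rather than attempt, is that this counting is tight only in the bulk: each SWAP can realise its two new adjacencies only in the interior of the line, so the estimate does not close the factor-of-two gap to the $\frac{(n-1)(n-2)}{2}$ upper bound of the preceding Proposition, and the subtracted correction term in the SWAP bound is genuinely linear in $n$.
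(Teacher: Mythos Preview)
Your proof is correct. The SWAP-count argument is essentially identical to the paper's: both observe that $\binom{n}{2}-(n-1)$ pairs must become adjacent for the first time and that a single SWAP can create at most two new adjacencies, yielding the bound $\tfrac{1}{2}\bigl(\binom{n}{2}-(n-1)\bigr)$; you simply spell out the local $x,a,b,y\to x,b,a,y$ picture that the paper leaves implicit.

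Your layer argument, however, is genuinely different from the paper's. The paper uses a global counting argument: at each moment between SWAP layers only $n-1$ adjacencies are available, so at least $\binom{n}{2}/(n-1)=n/2$ moments, hence $n/2-1$ SWAP layers, are needed. You instead track the two qudits at the extreme sites and bound how fast their distance can shrink under disjoint nearest-neighbour SWAPs. Both give the same numerical bound. The paper's argument is shorter and transfers immediately to other hardware graphs (just replace $n-1$ by the number of edges of the interaction graph). Your potential argument has the virtue of exhibiting a concrete obstructing pair and of not relying on any accounting of which interactions are performed when; it would also adapt naturally to settings where one wants lower bounds depending on the diameter of the hardware graph rather than its edge count.
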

\begin{proof}
We can lower bound the number of necessary swaps by a simple counting argument: The complete graph has $\binom{n}{2}$ edges, so that number of interactions will be required. A line of $n$ qubits has $n-1$ edges, which is how many interactions are possible before the first SWAP gate. Next, every SWAP gate gives two qubits a new neighbour, enabling at most two new interactions. The required swap count therefore is at least $\frac{1}{2}\left[ \binom{n}{2}-(n-1) \right]$.

For the number of layers, note again between a layer of SWAP gates at most $n-1$ nearest-neighbour interactions are possible. This implies a lower bound of $\frac{\binom{n}{2}}{n-1} = \frac{n}{2}$ of layers, hence $\frac{n}{2}-1$ layers of SWAP gates to be able to have all interactions.
\end{proof}


\section{On experimental errors}
\label{app:Errors}
\begin{figure}[t!]
	\centering
	\includegraphics[width=0.45\textwidth]{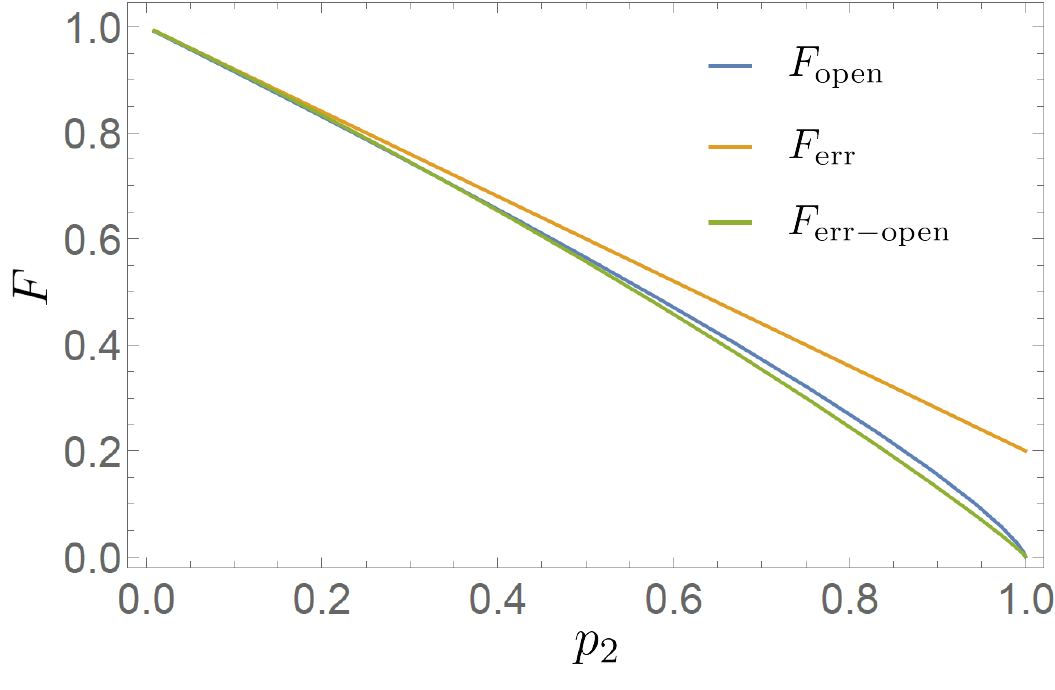}
	\caption{
		Fidelities (\ref{eq:F_err}), (\ref{eq:F_open}) and (\ref{eq:F_err_open}) [solid orange, blue and green lines] as a function of the 2-gate error $p_2$, cf. Eqs.~(\ref{eq:err_model}) and (\ref{eq:p_vs_decay}), for $\ket{\psi_{\rm in}} = \ket{+^2}^{\otimes 2}$ and CP gate phase $\gamma=\pi/2$. The approximate agreement between $F_{\rm open}$ and $F_{\rm err}$ for small values of $p_2$ was obtained by setting $\eta \approx 1.5$ in (\ref{eq:p_vs_decay}).
	}	
	\label{fig:Gerr_Qubit}
\end{figure}
The gate errors for Rydberg atoms have been analysed extensively, e.g.\ in~\cite{Saffman_2005_PRA}, and verified experimentally, e.g. in~\cite{Madjarov_2020_NatPhys}. 
The total error budget consists of various contributions including the collisions of the atoms with background gas, phase and intensity laser noise, light scattering from both the trapping and resonant lasers operating the processor, spurious level shifts from fluctuating electric and magnetic fields or atom motion in the tweezer traps. A detailed analysis of the individual contributions of these error sources goes beyond the scope of the present work. Given that the total 2-gate error dominates over the 1-gate one~\cite{Saffman_2005_PRA} and is one of the bottlenecks for scalability across different platforms, we consider only the 2-gate error. Specifically, we consider the error due to the spontaneous decay from the Rydberg levels. 

Let us start with a brief overview of the situation and let us consider ${\mathbb P}$ as the qudit manifold (we recall we consider the excitation to a Rydberg state as a resonant two step process through the long-lived ${\mathbb P}$ manifold, cf. Sec. \ref{sec:RQC_BuildingBlocks}). For the corresponding Rydberg manifold $\ket{{\mathbb Ry}} = \ket{{\rm n} {}^3{\rm S}_1}$, the Rydberg atom decays approximately with branching ratios $\approx$1:4 to the $(\rm n p){}^3{\rm P_J}$ and $(\rm 5 p){}^3{\rm P_J}$ manifolds respectively, where $n>5$. Out of the decay to $(\rm 5 p){}^3{\rm P_J}$, it branches further with ratios $\approx$1:3:5 to the $J=0,1,2$ manifolds and then further among the $F-$manifolds and the respective $m_F$ states (allowed by the selection rules). Consequently, the probability of the Rydberg state to decay back to the qudit manifold (here $\ket{{}^3{\rm P}_2}$) -- but not necessarily to the same qudit state -- is $\approx 0.4$~\cite{Sibalic_CompPhysComm_2017,ARC_package}. 

The complete dynamics can be described by the standard means of a master equation for amplitude decay including all the decay channels. However, since the probability to decay outside the qudit manifold $ \gtrsim 0.6$, we consider a simplified dynamics of a pair of interacting qudits $q=1,2$ given by 
\beq
	\dot{\rho} = i[\rho,H]
	+ \frac{\Gamma}{2} \sum_{q=1,2}\sum_{\ell=0}^{d-1} 2 {\sigma^{(q)}}^-_\ell \rho {\sigma^{(q)}}^+_{\ell} - \{ {\sigma^{(q)}}^+_\ell {\sigma^{(q)}}^-_\ell, \rho \}
	\label{eq:rho_master}
\eeq
where ${\sigma^{(q)}}^+_\ell = \ket{r^{(q)}_\ell}\bra{\rm aux}$, $\{ \cdot, \cdot \}$ is the anticommutator and a single auxiliary level $\ket{\rm aux}$ is used to model the decay at rate $\Gamma$ outside the qudit manifold.

We seek to compare the effect of the realistic errors described by the master equation (\ref{eq:rho_master}) to the unitary error model introduced above. Let us denote
\beq
	\rho_{\rm id} \equiv {\rm CP} \ket{\psi_{\rm in}} \bra{\psi_{\rm in}} {\rm CP}^\dag
\eeq
the ideal state obtained upon an action of the qudit controlled-phase gate (\ref{eq:CP_gate}) on some pure state $\ket{\psi_{\rm in}}$. We also define the fidelity between two quantum states with density matrices $\rho,\sigma$ as usual,
\beq
	F(\rho,\sigma) = {\rm Tr} \left[ \sqrt{\sqrt{\rho} \sigma \sqrt{\rho}} \right].
\eeq
Next, we define the average fidelity of a state corresponding to the error model (\ref{eq:err_model}) as
\beq
    F_{\rm err}(p_2) = (1-p_2) + \frac{p_2}{|{\mathbb U}^{\otimes 2}\setminus \{ \mathds{1} \}|} \sum_{U \in {\mathbb U}^{\otimes 2}\setminus \{ \mathds{1} \} } F\left(U \rho_{\rm id} U^\dag, \rho_{\rm id} \right).
    \label{eq:F_err}
\eeq
Similarly, denoting the density matrix resulting from the open dynamics given by the master Eq.~(\ref{eq:rho_master}) as $\rho_{\rm open}$, we define the fidelity
\beq
    F_{\rm open} = F(\rho_{\rm open},\rho_{\rm id})
    \label{eq:F_open}
\eeq
and the average fidelity between $\rho_{\rm open}$ and the states generated by (\ref{eq:err_model})
\beq
    F_{\rm err-open} = (1-p_2) F\left(\rho_{\rm id}, \rho_{\rm open} \right)
    + \frac{p_2}{|{\mathbb U}^{\otimes 2}\setminus \{ \mathds{1} \}|} \sum_{U \in {\mathbb U}^{\otimes 2}\setminus \{ \mathds{1} \} } F\left(U \rho_{\rm id} U^\dag, \rho_{\rm open} \right).
    \label{eq:F_err_open}
\eeq
The resulting fidelities depend on $\ket{\psi_{\rm in}}$ and they in principle vary in the course of the algorithm. Furthermore, there is no direct unambiguous identification between the the error probability $p_2$ of the error model (\ref{eq:err_model}) and that of the open dynamics (\ref{eq:rho_master}). Denoting the gate time of the cost unitary (\ref{eq:CP_Kj}) as $t_{\rm CP}$, the decay probability of an excited atom $\propto 1-{\rm e}^{-\Gamma t_{\rm CP}}$. This motivates a parametrization
\beq
    p_2 = 1-{\rm e}^{-\eta \Gamma t_{\rm CP}},
    \label{eq:p_vs_decay}
\eeq
where we have introduced a phenomenological factor $\eta$.

For illustration, in Fig.~\ref{fig:Gerr_Qubit} we show the fidelities (\ref{eq:F_err}), (\ref{eq:F_open}) and (\ref{eq:F_err_open}) as a function of the error probability $p_2$, cf. Eq.~(\ref{eq:err_model}), for the ``isotropic'' initial state $\ket{\psi_{\rm in}} = \ket{+^2}^{\otimes 2}$, $\ket{+^d} = 1/\sqrt{d} \sum_{\ell=0}^{d-1} \ket{\ell}$.

Denoting ${\cal P}_{\bb Q}$ the projector on the qudit subspace, ${\mathbb Q} = {\mathbb P}$, we have also verified that the state evolution subject to the open dynamics (\ref{eq:rho_master}) satisfies to a good accuracy ${\cal P}_{\bb Q} \rho_{\rm open} {\cal P}_{\bb Q} \approx {\rm e}^{-\Gamma t_{\rm CP}} \rho_{\rm id}$. This corresponds well to the naive guess that the resulting density matrix is just the rescaled ideal matrix $\rho_{\rm id}$ due to the decay with rate $\Gamma$ outside the qudit subspace. 

In summary, while the error model (\ref{eq:err_model}) seems to capture qualitatively the decrease of the state fidelity, it clearly cannot account for the dynamics outside the qudit subspace and more rigorous analysis of the errors in the context of the QAOA is desirable, see also \cite{Henriet_2020_PRA} for related developments.


\end{document}